\newcommand{\rdist}{\ensuremath{\textsf{rdist}}}
\newcommand{\ring}{\ensuremath{\textsf{Ring}}}
\newcommand{\oring}{\ensuremath{\textsf{Ring}_i^\textsf{o}}}
\newcommand{\xring}{\ensuremath{\textsf{Ring}_i^\textsf{x}}}
\newcommand{\gparallel}{\ensuremath{G^\textsf{p}}}
\newcommand{\radgraph}{\ensuremath{G^\textsf{rad}}}
\newcommand{\windnum}{\textsf{WindNum}}
\newcommand{\gcontract}{G_\textsf{contract}}
\newcommand{\gmod}{G_\textsf{mod}}
\newcommand{\wmod}{\mathcal W_\textsf{mod}}
\newbox\ProofSym
\def\ccheck#1{{\textcolor{red}{#1}}}
\newtheorem{theorem}{Theorem} 
\newtheorem{lemma}[theorem]{Lemma}
\newtheorem{corollary}[theorem]{Corollary}
\newtheorem{definition}[theorem]{Definition}
\newtheorem{observation}[theorem]{Observation}
\begin{document}
\pagenumbering{arabic}
\title{Parameterized Algorithm for the Planar Disjoint Paths Problem: Exponential in $k^2$, and Linear in $n${\footnote{This work was supported by the National Research Foundation of Korea (NRF) grant funded by the Korea government (MSIT) (No.2020R1C1C1012742).}}}

%
%

\author{
Kyungjin Cho\footnote{Pohang University of Science and Technology,	Korea.
		Email: {\tt{kyungjincho@postech.ac.kr}}}
\and
Eunjin Oh\footnote{Pohang University of Science and Technology,	Korea.
		Email: {\tt{eunjin.oh@postech.ac.kr}}}
\and
Seunghyeok Oh\footnote{Pohang University of Science and Technology,	Korea.
		Email: {\tt{seunghyeokoh@postech.ac.kr}}}}

\maketitle              
\begin{abstract}
In this paper, we study the \textsf{Planar Disjoint Paths} problem:
Given an undirected planar graph $G$ with $n$ vertices and a set $T$ of $k$ pairs $(s_i,t_i)_{i=1}^k$ of vertices,
the goal is to find a set $\mathcal P$ of $k$ pairwise vertex-disjoint paths connecting $s_i$ and $t_i$ for all indices $i\in\{1,\ldots,k\}$. 
We present a $2^{O(k^2)}n$-time algorithm for the \textsf{Planar Disjoint Paths} problem. This improves the two previously best-known algorithms: $2^{2^{O(k)}}n$-time algorithm [Discrete Applied Mathematics 1995] and $2^{O(k^2)}n^6$-time algorithm [STOC 2020].
\end{abstract}
\newpage
\setcounter{page}{1}
\section{Introduction}
In this paper, we study the \textsf{Disjoint Paths} problem on planar graphs: 
Given an undirected graph $G$ with $n$ vertices and a set $T$ of $k$ pairs $(s_i,t_i)_{i=1}^k$ of vertices,
the goal is to find a set $\mathcal P$ of $k$ pairwise vertex-disjoint paths connecting $s_i$ and $t_i$ for all indices $i\in\{1,\ldots,k\}$. 
This problem has been studied extensively due to its numerous applications such as VLSI layout and circuit routing~\cite{frank1990packing}. 
However, this problem is NP-complete even for grid graphs~\cite{chuzhoy2018almost}. 
This motivates  
	the study of this problem from the viewpoint of parameterized algorithms and approximation algorithms. 
	For approximation algorithms,  
	we wish to connect as many terminal pairs as possible using vertex-disjoint paths in polynomial time.
The best known approximation algorithm has approximation ratio $O(\sqrt{n})$~\cite{kolliopoulos2004approximating}. 
In the case that the input graph is restricted to  be planar, the best known approximation ratio is $O(n^{9/19}\log^{O(1)}n)$~\cite{chuzhoy2016improved}. 
On the other hand, under reasonable complexity-theoretic assumptions, no polynomial-time algorithm
has approximation factor better than $2^{\Omega(\sqrt{\log n})}$~\cite{chuzhoy2020new}.
	
	The \textsf{Disjoint Paths} problem has been also studied extensively from the viewpoint of parameterized algorithms. Here, the goal is to design an  algorithm 
	which runs in $f(k)\cdot n^{O(1)}$ time for a function $f$, where $k$ is the number of terminal pairs. 
    The seminal work of Roberson and Seymour~\cite{robertson1995graph} 
    showed that this problem can be solved in $f(k)\cdot n^3$ time for a computable function $f$.
    Although this algorithm is powerful, the original proof of Roberson and Seymour requires the full power of the Graph Minor Thoery.
    For this reason, Kawarabayashi and Wollan~\cite{kawarabayashi2010shorter} presented a simpler proof of correctness of the algorithm in~\cite{robertson1995graph}. 
    Later, this algorithm was improved by Kawarabayashi et al.~\cite{kawarabayashi2012disjoint} 
    to run in $h(k)\cdot n^2$ time for a computable function $h$. 
    However, the dependence on $k$ is still huge both in~\cite{kawarabayashi2012disjoint,robertson1995graph}. 	
    Moreover, the explicit bound on the dependence on $k$ is  not known.  
	
	For this reason, the \textsf{Disjoint Paths} problem has been studied extensively for planar graphs~\cite{adler2017irrelevant,cygan2013planar,lokshtanov2020exponential,reed1995rooted,schrijver1994finding}.  In this case, we call the problem the \textsf{Planar Disjoint Paths} problem. 
	Note that there are two components of the running times of parameterized algorithms: the dependency on the parameter $k$ and 
	the dependency on the input size $n$.  
	Reed~\cite{reed1995rooted} 
	focused on the dependence on the input size, and gave a $2^{2^{O(k)}}n$-time algorithm for the \textsf{Planar Disjoint Paths} problem. 
	On the other hand, Lokshtanov et al.~\cite{lokshtanov2020exponential} focused on the dependence on the parameter $k$, and gave a  $2^{O(k^2)}n^6$-time algorithm for this problem.\footnote{The two papers \cite{reed1995rooted} and \cite{lokshtanov2020exponential} 
	do not give explicit bounds on the dependent on $k$ and dependent on $n$, respectively. But it is not difficult to see that
	these algorithms run in $2^{2^{\Theta(k)}}n$-time and $2^{\Theta(k^2)}n^6$-time, respectively.}
	A natural question posed here is to achieve 
	both best dependency on $k$ and the best dependency on $n$ simultaneously.  
	
As mentioned in~\cite{lokshtanov2018linear}, this direction of research is indeed as old as the field of  parameterized algorithms.
The result of Bodlaender~\cite{bodlaender1996linear}
on on the problem for computing a treewidth of a graph 
belongs to this class of results. Also, the  $f(k)\cdot n^3$-time and  $h(k)\cdot n^2$-time algorithms for the \textsf{Disjoint Paths} problem
by Roberson and Seymour~\cite{robertson1995graph} and by Kawarabayashi et al.~\cite{kawarabayashi2012disjoint}, respectively,
belong to this class of results. 
There are more recent results in this direction, for instance, \textsf{Odd Cycle Transversal}~\cite{iwata2014linear,kawarabayashi2010almost}, 
\textsf{Subgraph Isomorphism}~\cite{dorn2010planar}, \textsf{Planarization}~\cite{jansen2014near}, and \textsf{Treewidth}~\cite{korhonen2022single}.

\medskip
In this paper, we present an algorithm for the \textsf{Planar Disjoint Paths} problem which runs in $2^{O(k^2)}n$ time. This improves both the $2^{2^{O(k)}}n$-time algorithm by Reed~\cite{reed1995rooted}
and the $2^{O(k^2)}n^6$-time algorithm by Lokshtanov et al.~\cite{lokshtanov2020exponential}. 
Our contribution in this paper can be summarized as follows. 

\begin{theorem} 
The \textsf{Planar Disjoint Paths} problem can be solved in $2^{O(k^2)}n$ time.
\end{theorem}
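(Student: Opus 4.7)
The plan is to combine the irrelevant-vertex framework of Reed with the refined homotopy-based counting of Lokshtanov et al. At a high level, the algorithm alternates two phases until no further reduction is possible: a treewidth-based dynamic program when the treewidth of $G$ is small, and an irrelevant-vertex extraction when it is large. The $2^{O(k^2)}$ factor will come from counting equivalence classes of partial linkages via winding numbers, while the linear factor in $n$ will come from amortizing all reductions against a single preprocessing pass.

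First I would invoke a linear-time constant-factor treewidth approximation specialized to planar graphs, together with the associated tree decomposition. If the treewidth is $O(k)$, a standard Courcelle-style DP over the decomposition suffices, provided each bag's state space is bounded by the number of inequivalent ways a partial $k$-linkage can cross a bag. Using the $2^{O(k^2)}$ bound on homotopy classes of $k$-linkages across an annular separator, each DP cell requires $2^{O(k^2)}$ work, producing an overall $2^{O(k^2)}n$ bound in this regime.

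If the treewidth exceeds $c \cdot k$ for an appropriate constant $c$, the planar treewidth/grid-minor correspondence yields a large flat wall, which can be located in linear time via the radial graph $\radgraph$. Inside such a wall, one examines the concentric family of rings $\ring$, $\oring$, $\xring$ indexed by the radial distance $\rdist$ from a central face. The key structural argument, formalized via $\windnum$, is that once $\Omega(k^2)$ nested rings exist, any hypothetical $k$-linkage must have bounded winding number around the central face, so one can reroute the linkage around a central vertex; that vertex is therefore irrelevant and can be contracted via $\gcontract$ (or modified into $\gmod$) to obtain an equivalent smaller instance.

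The main obstacle is to ensure that the cumulative cost of the irrelevant-vertex reductions is $O(n)$ rather than $\Omega(n^2)$; the latter behavior is precisely what produces the extra $n^5$ factor in Lokshtanov et al.'s $2^{O(k^2)}n^6$ algorithm, since they recompute the flat-wall structure from scratch after each reduction. I would circumvent this by performing the reductions in bulk: in one pass, locate a hierarchical family of pairwise disjoint flat walls whose central rings can be contracted simultaneously, and argue that the contraction of each wall either permanently removes a charge-bearing set of vertices or converts the surrounding region into bounded-treewidth form suitable for the DP. Supporting the necessary radial-distance and ring-membership queries with a dynamic data structure on $\radgraph$ in $O(\log n)$ amortized time per update, and charging each reduction to vertices it eliminates, yields the desired $2^{O(k^2)}n$ total running time.
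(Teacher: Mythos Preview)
Your proposal has two genuine gaps that would prevent it from reaching the stated bound.

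First, the treewidth threshold is wrong. You propose to iterate irrelevant-vertex removal until the treewidth drops to $O(k)$, and then run a DP whose bag-state space is $2^{O(k^2)}$. But the irrelevant-vertex technique for \textsf{Planar Disjoint Paths} can only reduce treewidth to $2^{O(k)}$: the relevant bound is $g(k)\in O(k^{1.5}2^k)$ from Adler et al., and the paper notes this is tight. (Relatedly, the number of concentric rings needed to certify an irrelevant vertex is $g(k)$, not $\Omega(k^2)$.) Once treewidth is $2^{O(k)}$, a tree-decomposition bag has $2^{O(k)}$ vertices, and the number of partial-linkage states on such a bag is $2^{2^{O(k)}}$, not $2^{O(k^2)}$; this is precisely why Reed's algorithm runs in $2^{2^{O(k)}}n$ time. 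Your appeal to a ``$2^{O(k^2)}$ bound on homotopy classes across an annular separator'' does not transfer: a tree-decomposition bag is not an annulus, and the winding-number argument requires the separator to be a single noose of small complexity, which a bag of size $2^{O(k)}$ does not provide.

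Second, you misidentify the bottleneck in Lokshtanov et al. You write that their extra polynomial overhead comes from recomputing flat walls after each irrelevant-vertex deletion. In fact their irrelevant-vertex phase costs only $2^{O(k)}n^2$; the $n^6$ term comes entirely from Schrijver's homology-feasibility algorithm, invoked as a black box to convert each candidate weak linkage into a genuine linkage. The paper's actual route is to (i) remove irrelevant vertices in bulk in $2^{O(k)}n$ time (close in spirit to your ``reductions in bulk''), (ii) replace the backbone tree by a compact system of frames, skeleton forests, and reference paths so that $2^{O(k^2)}$ weak linkages can be enumerated, each in $2^{O(k)}n$ time, and then (iii) re-engineer Schrijver's algorithm---via the modified graph $\gmod$, added frame constraints in $\Pi_\textsf{frame}$ to bound the length of $\bar f(\cdot)$, and a dynamic data structure for finding violating face-edge paths---so that each invocation runs in $2^{O(k)}n$ rather than $O(n^6)$. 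Your proposal does not address step~(iii) at all, and without it the running time cannot be made linear in $n$.
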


\section{Preliminaries}
An instance of the \textsf{Planar Disjoint Paths} problem is a tuple $(G, T, k)$ where $G$ is a plane graph, $T$ is a set $\{(s_1,t_1),\ldots, (s_k,t_k)\}$  of $k$ vertex pairs. Here, the embedding of $G$ is fixed. 
We let $\bar{T}$ be the set of all vertices in the pairs  of $T$, and we call such vertices the \emph{terminals}. 

A \emph{walk} $\omega$ is a sequence of edges which joins a sequence of vertices. 
If all vertices and all edges of $\omega$ are distinct, we call $\omega$ a \emph{path}. 
A \emph{$T$-linkage} is an ordered family $\langle P_1,\ldots, P_k\rangle $ of $k$ \emph{vertex-disjoint} paths  in $G$ such that $P_i$ connects $s_i$ and $t_i$ for $i\in \{1,\ldots,k\}$. 
 We say two paths $P$ and $P'$ are \emph{crossing} 
 if there are four edges $e,f,e',f'$ of $G$ sharing a common endpoint such that $e, f$ are consecutive edges of $P$, $e',f'$ are consecutive edges of $P'$, and $e, f, e'$ and $f'$ lie in  clockwise order around their common endpoint. 
A \emph{weak $T$-linkage} $\mathcal W$ is  an ordered family $\langle P_1,\ldots, P_k\rangle $ of $k$ pairwise \emph{non-crossing} walks  in $G$ such that $P_i$ connects $s_i$ and $t_i$ for $i\in\{1,\ldots,k\}$. 
If all edges of the walks of a weak $T$-linkage $\mathcal W$ are distinct, we say $\mathcal W$ is \emph{edge-disjoint}. 
We sometimes call a $T$-linkage and a weak $T$-linkage for a set $T$ of terminal pairs
simply a \emph{linkage} and a \emph{weak linkage}, respectively, when we do not need to specify $T$. 

We use standard notions and terms for graphs as used in~\cite{bondy1976graph}. For instance, 
for a graph $G$, we let $V(G)$ be the vertex set of $G$, and $E(G)$ be the edge set of $G$. 
Also, throughout this paper, 
we use $[n]$ to denote the set $\{1,2,\ldots, n\}$.

\begin{definition}
A \emph{tree decomposition} of an undirected graph $G=(V,E)$
is defined as a pair $(T,\beta)$, where $T$ is a tree and $\beta$ is a mapping from nodes of $T$ to subsets of $V$ (called bags) satisfying the following properties. Let $\mathcal {B}:=\{\beta(t) : t\in V(T)\}$ be the set of bags of $T$.
	\begin{itemize}\setlength\itemsep{0.1em}
		\item  For any vertex $u\in V$, there is at least one bag in $\mathcal {B}$ which contains $u$.
		\item For any edge $(u,v)\in E$, there is at least one bag in $\mathcal {B}$ which contains both $u$ and $v$.
		\item For any vertex $u\in V$, the nodes of $T$ containing $u$ in their bags are connected in $T$.
	\end{itemize}
	The $\emph{width}$ of a tree decomposition is defined as the size of its largest bag minus one, and the $\emph{treewidth}$ of $G$ is
	the minimum width of a tree decomposition of $G$.
\end{definition}


\subsection{Radial Distance and Radial Curves} 
The \emph{radial distance} in $G$ between two faces $f_1$ and $f_2$ of $G$ is defined as the minimum length of 
a sequence of faces starting from $f_1$ to $f_2$, such that
every two consecutive faces of this sequence share a common vertex. 
Similarly, the radial distance between two vertices $u$ and $v$, denoted by $\textsf{rdist}(u,v)$,
is defined as the minimum radial distance between two faces $F_u$ and $F_v$ incident to $u$ and $v$, respectively. 
Note that there is a curve connecting $u$ and $v$ 
which meets $G$ only at $\textsf{rdist}(u,v)$ vertices {(excluding $u$ and $v$)} and does not intersect any edges of $G$. 
We call such a curve a \emph{radial curve} connecting $u$ and $v$. See Figure~\ref{fig:face_operation}(a). 
For two subsets $X$ and $Y$ of $V$, we define their radial distance
as the Hausdorff distance: $\rdist(X,Y)=\max\{ \max_{x\in X}\min_{y\in Y}\rdist(x,y),  \max_{y\in Y}\min_{x\in X}\rdist(x,y)\}$.
For two subgraphs $P$ and $P'$ of $G$, we define their radial distance as $\rdist(V(P), V(P'))$.

An \emph{innermost} face $F^*$ of $G$ is defined as a face farthest from the outer face with respect to the radial distance. 
We can compute $F^*$ in linear time using a variant of the breadth-first search on $G$. 
Also, we call a simple closed curve a \emph{noose} if it intersects $G$ only at vertices of $G$, and does not intersect any edge of $G$. 
The complexity of a noose is defined as the number of vertices of $G$ intersected by the noose. 
A similar notion we use in this paper is a \emph{face-edge path} in $G$: an alternating sequence 
of faces and edges 
such that every face is incident (in $G$) to the edges neighboring in the sequence, and the two end elements of the sequence are faces.

The \emph{radial completion} of $G$, denoted by $G^\textsf{rad}$, is a subgraph of $G$ constructed as follows. We add one vertex for each face $F$ of $G$, and connect this vertex and all vertices lying on $F$. Note that every face of $G^\textsf{rad}$ is a triangle. 
See Figure~\ref{fig:face_operation}(b). 
One can consider a radial curve of length $N$
as a path in $G^\textsf{rad}$ of length $\Theta(N)$. 
We use the radial completion to define the discrete homotopy in the following subsection.

\subsection{Homology and Flow Function}
Schrijver~\cite{schrijver1994finding} used the language of flows and homology to deal with linkages and weak-linkages.
As the algorithm of~\cite{schrijver1994finding} works with directed graphs, we consider $G$ as a directed graph by replacing each edge of $G$
with two directed edges of $G$ with opposite directions. Also, we consider a walk of a $T$-linkage of $G$ as oriented from $s_i$ to $t_i$. 
Let $\Sigma=\{1,2,\ldots, k\}$ be an alphabet consisting of $k$ symbols, and let $\Sigma^{-1}=\{1^{-1}, 2^{-1},\ldots, k^{-1}\}$. 
The group $\Sigma^*$ generated by $\Sigma$ consists of all strings $b_1b_2\ldots b_t$ with $t\geq 0$ such that $b_1,\ldots,b_t\in \Sigma \cup \Sigma^{-1}$ and
$b_i\neq {b_{i+1}^{-1}}$ for all indices $i\in[t)$. 
Then the product $x\cdot y$ of two strings $x$ and $y$ in $\Sigma^*$ is defined as the string obtained from the concatenation $xy$ by deleting iteratively all occurrences of $s^{-1}s$ and $ss^{-1}$ for $s\in\Sigma$. The empty string is denoted by  $\epsilon$.
Note that $xx^{-1}=x^{-1}x=\epsilon$ for any string $x\in\Sigma^*$. 

  \paragraph{Flow Function and Linkages.}
  A flow function  $\phi:E(G) \rightarrow \Sigma^*$  is defined as follows. 
  For a vertex $v$, let $e_1,\dots,e_\ell$ be its incident edges in  $G$ sorted in the clockwise direction.
  Then $h(v)$ be the product of  $\phi(e_r)^{\mu(e_r)}$'s for all $r=1,2,\ldots,\ell$,
  where the sign $\mu(e_r)$ is positive if $e_r$ is an outgoing edge from $v$, and negative, otherwise.\footnote{Note that $h(v)$ depends on the choice of $e_1$. To avoid the ambiguity, we choose $e_1$ so that $|h(v)|$ is minimized.} 
  A function $\phi$ is called a \emph{flow function}
  if $h(s_i)=i$ for all $i\in[k]$, $h(t_i)=i^{-1}$ for all $i\in[k]$,
  and $h(v)=\epsilon$ for $v\notin \bar{T}$.  
  This is an algebraic interpretation of the standard flow-conservation constraint. 
  
  An edge-disjoint weak $T$-linkage $\mathcal W$ can be considered as a \emph{flow function} 
   $\phi : E \to \Sigma^*$. 
   Let $W_i$ be the path of $\mathcal W$ connecting $s_i$ and $t_i$ for $i\in[k]$. 
    For each edge $e\in E(G)$, let $\phi(e)=i$ if it is used in the path connecting $s_i$ to $t_i$, and 
  let $\phi(e)=\epsilon$ if it is not used by any path of $\mathcal W$. 
   We say $\phi$ is the flow function \emph{representing} $\mathcal W$.
  But it is possible that a flow function $\phi$ represents more than two weak $T$-linkages of $G$. 
  For illustration, refer to Figure~2 of~\cite{lokshtanov2020exponential}.
  Schrijver~\cite{schrijver1994finding} showed that a flow function $\phi$ represents a $T$-linkage of $G$ 
  if and only if  $\phi(\pi)\in \Sigma\cup \Sigma^{-1}$ for every face-edge path consisting of faces and edges sharing a common vertex. 
  Here, a face-edge path in $G$ is an alternating sequence of faces and edges 
  such that every face is incident (in $G$) to the edges neighboring in the sequence.

  \paragraph{Homology.}
  For a directed edge $e$ of $G$, we let $L_e$ denote the face lying to the left of $e$, and let $R_e$ denote
  the face lying to the right of $e$. 
  We say two flow functions $\phi$ and $\psi$ are \emph{homologous} if there exists a \emph{homology} function $f: \mathcal{F}\to \Sigma^*$,
  where $\mathcal F$ denotes the set of all faces of $G$, such that 
  \begin{itemize}
      \item {$f(F^*)=\epsilon$, and }
      \item {$(f(L_e))^{-1}\cdot \phi(e)\cdot f(R_e)=\psi(e)$ for every directed edge $e$.}
  \end{itemize}
  
  \subsection{ Discrete Homotopy and Linkages} 
Instead of working with homology, 
Lokshtanov et al.~\cite{lokshtanov2020exponential} introduced a notion of \emph{discrete homotopy}, 
which is a variant of the standard homotopy. When we deal with discrete homotopy, we always work with the radial completion $\radgraph$ of $G$. 
Two weak linkages $\mathcal W$ and $\mathcal W'$ 
are \emph{homotopic} to each other if one can obtained from the other weak linkage by a sequence of \emph{face operations}.
There are three types of face operations: \textsf{Face Move}, \textsf{Face Pull}, and \textsf{Face Push}.
Since our argument uses \textsf{Face Move} and \textsf{Face Pull} only, we give the definitions of \textsf{Face Move} and \textsf{Face Pull} only. The definition of \textsf{Face Push} can be found in~{\cite[Definition 5.3]{lokshtanov2020exponential}}. Each operation is applied to a face $F$ of $\radgraph$ and a walk $W$ of $\mathcal W$. Let $\partial F$ be the cycle consisting of the three edges incident to $F$. 
For illustration, see Figure~\ref{fig:face_operation}(c--d). 

\begin{figure}
    \centering
    \includegraphics[width=0.8\textwidth]{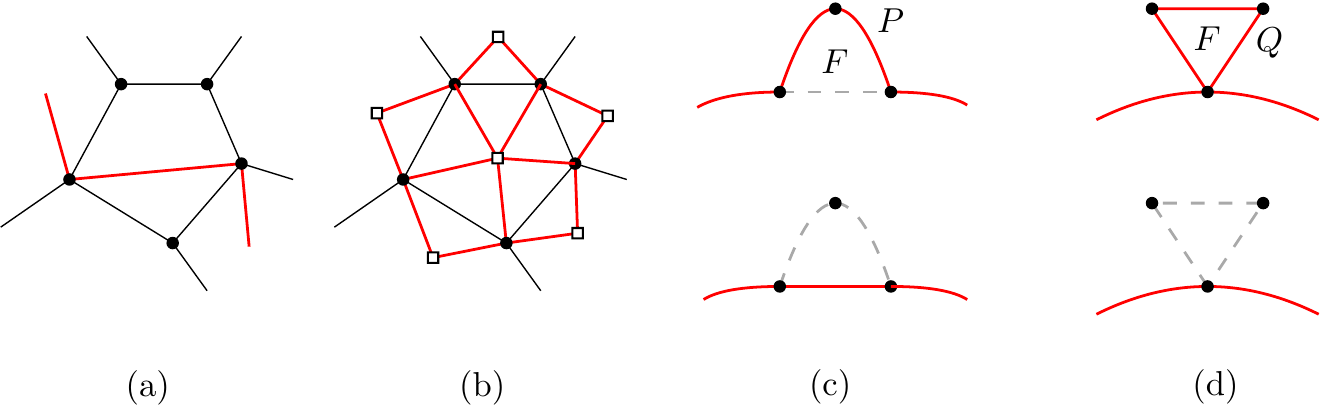}
    \caption{\small 
    (a) The red curve is a part of a radial curve.
    (b) The edges colored red and the points marked with boxes are added to construct $\radgraph$. 
    (c) The face move operation is applied to $(F,P)$. 
    (d) the face pull operation is applied to $(F,Q)$.
    }
    \label{fig:face_operation}
\end{figure}

\begin{itemize}
    \item \textsf{Face Move.} The move operation is applicable to $(F,W)$ only when there is a subpath $P$ of $\partial F$ of length at most two
    such that $P$ is a subwalk of $W$, and no edge of $E(\partial F)\setminus E(P)$ belongs to any walk of $\mathcal W$.
    In this case, we replace $P$ in $W$ by the unique subpath of $\partial F$ between the endpoints of $P$ that is edge-disjoint from $P$.
    \item \textsf{Face Pull.} The pull operation is applicable to $(F,W)$ only when $\partial F$ is a subwalk $Q$ of $W$. 
    In this case, we replace $Q$ in $W$ by a single occurrences of the first vertex in $Q$.
\end{itemize}

The following lemma connects the concept of discretely homotopy and the concept of homology. 
\begin{lemma}[{\cite[Lemma 5.2]{lokshtanov2020exponential}}]\label{lem:homology-homotopy}
Let $\mathcal W$ and $\mathcal W'$ be two weak linkages discretely homotopic to each other. 
Then the flow function representing $\mathcal W$ is homologous to the flow function representing $\mathcal W'$. 
\end{lemma}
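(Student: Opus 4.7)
The plan is to reduce the lemma to the case of a single face operation, and then for each operation type to exhibit an explicit homology function that witnesses the relation. First, I would verify that ``homologous'' is an equivalence relation on flow functions: reflexivity uses the constant map $f \equiv \epsilon$; symmetry uses $F \mapsto f(F)^{-1}$; and transitivity uses the pointwise product $F \mapsto f_1(F)\cdot f_2(F)$, which one checks telescopes correctly in the defining equation $(f(L_e))^{-1}\phi(e)f(R_e)=\psi(e)$. The normalization $f(F^*)=\epsilon$ is preserved under each of these operations.

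Since $\mathcal W$ and $\mathcal W'$ differ by a finite sequence of face operations, transitivity lets us assume $\mathcal W'$ arises from $\mathcal W$ by a single face operation applied to a face $F$ of $\radgraph$ and a walk $W_i$. The key observation is that such an operation changes the flow only on the edges of the triangle $\partial F$. I would define $f$ to be $\epsilon$ on every face other than $F$, with $f(F)$ equal to the one-letter word $i^{\pm 1}$, where the sign is chosen according to whether $W_i$ traverses $\partial F$ with $F$ on its left or on its right. For any edge $e \notin E(\partial F)$, both $f(L_e)$ and $f(R_e)$ equal $\epsilon$ and $\phi(e)=\psi(e)$ because the operation is local to $F$, so the defining equation holds trivially. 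For the directed edges arising from $E(\partial F)$, the equation $(f(L_e))^{-1}\phi(e)f(R_e)=\psi(e)$ can be checked by a direct case analysis: for \textsf{Face Move}, the ``missing'' $i^{\pm 1}$ factor between the edges of $P$ (which lose their label) and those of $\partial F\setminus P$ (which acquire the label) is exactly absorbed by $f(F)$ on the appropriate side; for \textsf{Face Pull}, each of the three edges has its $i^{\pm 1}$ cancelled through the conjugation by $f(F)$, turning $\phi(e)=i^{\pm 1}$ into $\psi(e)=\epsilon$.

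The main obstacle will be the bookkeeping in the case analysis, since $\Sigma^*$ is non-abelian: one must track, for each directed edge of $\partial F$, which side the face $F$ lies on and whether $W_i$ traverses the edge in the positive or negative direction, so that the signs in $f(F)$ are chosen consistently across the three edges of the triangle. A related technicality is the required normalization $f(F^*)=\epsilon$: when the operation happens to be applied at $F=F^*$, the naive assignment $f(F^*)=i^{\pm 1}$ violates the normalization, and a uniform multiplicative shift of $f$ does not preserve the defining equation because $\Sigma^*$ is non-abelian. I expect this to be resolved by either arranging the innermost face $F^*$ to be a face not touched by any operation, or by propagating the shift along a face-edge path from $F^*$ to $F$ so that the adjustments at intermediate faces cancel; either route is a technical subtlety that does not affect the outline above.
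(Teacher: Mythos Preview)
The paper does not give its own proof of this lemma: it is quoted verbatim as Lemma~5.2 of \cite{lokshtanov2020exponential} and used as a black box, so there is nothing in the present paper to compare your argument against.

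That said, your outline is the standard one and is essentially correct. Reducing to a single face operation via transitivity of the homology relation, and then witnessing each operation by a homology function supported on the single face $F$, is exactly how this is done. Your bookkeeping remark about tracking which side of each edge of $\partial F$ the face $F$ lies on, and matching it to the traversal direction of $W_i$, is the heart of the verification and would need to be written out carefully, but there is no hidden obstacle there.

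The one point that is not fully resolved in your proposal is the normalization $f(F^*)=\epsilon$ when the operation takes place at $F=F^*$. Your two suggested fixes are a bit optimistic: a uniform multiplicative shift only conjugates $\psi$ rather than preserving it (as you note), and ``propagating along a face-edge path'' runs into the same non-commutativity. The clean way out is to observe that in this setting one is free to fix the reference face once and for all before the sequence of operations is chosen; equivalently, since discrete homotopy is performed in $\radgraph$ whose faces are the triangles of the radial completion, while $F^*$ denotes a fixed face of $G$, one can (and the source paper does) arrange the reference face so that it is never the face of any single operation in the sequence. This sidesteps the non-abelian obstruction without additional work.
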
 

\section{Overview}
In this section, we present an overview of our algorithm. To do this, we first give a sketch of~\cite{lokshtanov2020exponential} in Section~\ref{sec:sketch}, and we then give a sketch of our algorithm in Section~\ref{sec:methods}. 
The two algorithms in~\cite{lokshtanov2020exponential} and~\cite{schrijver1994finding} (and ours) consist of the following two steps. Here, $h_1,h_2$ and $h_3$ are functions on $n$ and $k$. 
\begin{enumerate}
    \item Enumerate $h_1$ weak $T$-linkages of $G$ one of which is homotopic to a $T$-linkage of $G$ in $h_2$ time. 
    \item Assuming that we have a \emph{correct} weak $T$-linkage $\mathcal W$, we compute a $T$-linkage of $G$ in $h_3$ time.\footnote{In fact, we apply this algorithm for all weak $T$-linkages we have computed (since we do not know which one is a correct weak $T$-linkage), and we can obtain a $T$-linkage of $G$ when we apply this algorithm to a correct weak $T$-linkage.}
\end{enumerate}

In the case of~\cite{schrijver1994finding},  the bounds of $h_1, h_2$ and $h_3$ are $n^{O(\sqrt{k})}$, $n^{O(\sqrt{k})}$, and $O(n^6)$, respectively.\footnote{But this algorithm also works for directed graphs.} 
To deal with the second step, the algorithm of~\cite{lokshtanov2020exponential} uses the algorithm in~\cite{schrijver1994finding}, 
and thus they have the same bound on $h_3$. But using clever ideas, they significantly improves the bounds of $h_1$ and $h_2$ to  $2^{O({k^2})}$ and  $2^{O({k^2})}n+2^{O(k)}n^2$, respectively. 

We also use the same high level structure. But we improve the bounds of $h_1, h_2$ and $h_3$ to $2^{O(k^2)}$, $2^{O(k^2)}n$, and $2^{O(k)}n$, respectively. Therefore, 
the total running time of our algorithm is $2^{O(k^2)}n$. 

\subsection{Sketch of~\cite{lokshtanov2020exponential} and Obstacles in Designing a $2^{O(k^2)}n$-time Algorithm}\label{sec:sketch}
We first give a brief sketch of the $2^{O(k^2)}n^6$-time algorithm for the \textsf{Planar Disjoint Paths} problem
as we also use several tools used in~\cite{lokshtanov2020exponential}.
At the end of this subsection, we give three obstacles in improving this algorithm to run in $2^{O(k^2)}n$ time.
Then we briefly show how we overcome each of the obstacles in this paper. 

\subsubsection{Enumerating Homology Classes}
The first step is to enumerate $2^{O(k^2)}$ homology classes. Before doing that, they remove several \emph{irrelevant} vertices from $G$
as a preprocessing step. 
Given an instance $(G,T,k)$ of \textsf{Planar Disjoint Paths}, 
this algorithm first removes several \emph{irrelevant} vertices from $G$ so that
the answer of $(G,T,k)$ remains the same, and the treewidth of $G$ decreases to $2^{O(k)}$. 
Indeed, it is a common ingredient of all previous algorithms for 
\textsf{Planar Disjoint Paths}, which is called the \emph{irrelevant vertex technique}. 
More specifically, a vertex $v$ of $G$ is said to be \emph{irrelevant} for $(G,T,k)$ if 
$(G,T,k)$ is a \textsf{YES}-instance if and only if $(G-v, T, k)$ is a \textsf{YES}-instance,
where $G-v$ is the graph obtained from $G$ by removing $v$ and its incident edges. 
Roberson and Seymour~\cite{robertson1995graph} showed
that a planar graph having treewidth $g(k)$
has an irrelevant vertex for a specific function $g(k)$. For this, they showed that 
a planar graph with treewidth $O(w)$ has a $w\times w$ grid as a minor. 
If $w\geq g(k)$ for a specific function $g(k)$, there is a grid minor of $G$ of size $g(k)\times g(k)$ 
such that the innermost vertex of the grid minor is irrelevant. 
Later, the bound of $g(k)$ was improved to $O(k^{1.5}2^k)$ by Adler et al.~\cite{adler2017irrelevant}. 
Therefore, by repeatedly removing irrelevant vertices from $G$, we can obtain an equivalent instance $(G',T,k)$ where the treewidth of $G'$ is $O(g(k))$. 
A single irrelevant vertex can be found in linear time, and thus a naive implementation of this technique yields a quadratic-time algorithm.

\begin{figure}
    \centering
    \includegraphics[width=0.7\textwidth]{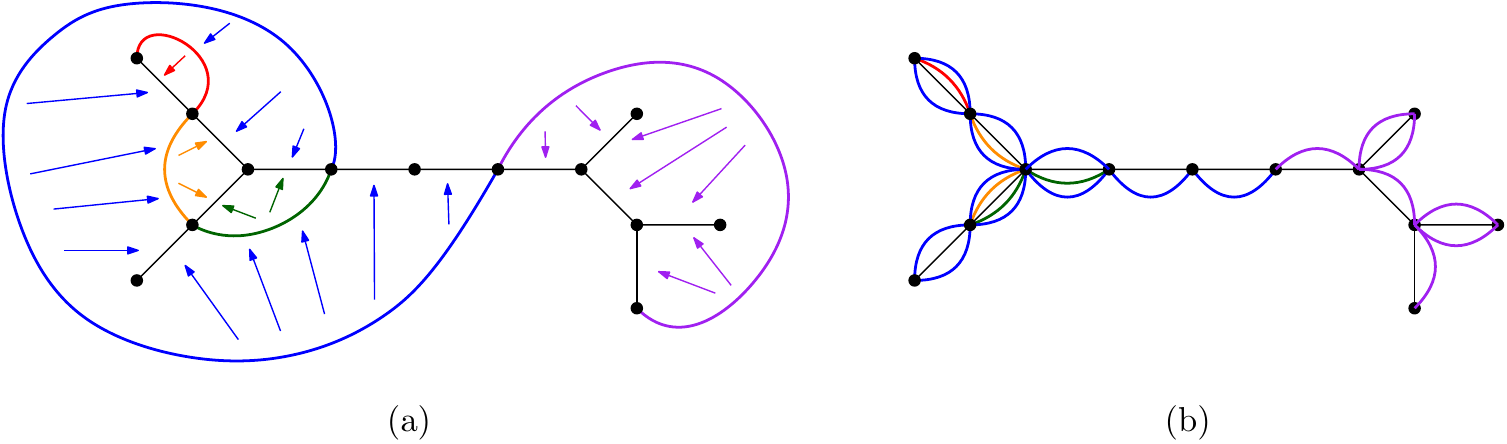}
    \caption{\small After pushing the paths to the tree in (a),
    we can obtain the the walks in (b). 
    }
    \label{fig:steiner}
\end{figure}

\medskip
Now we assume that $G$ has treewidth $2^{O(k)}$. Using this, Lokshtanov et al.~\cite{lokshtanov2020exponential} enumerated $2^{O(k^2)}$ weak $T$-linkages one of which
is discretely homotopic to a $T$-linkage of $G$. 
To do this, they used a Steiner tree $R$ of $\bar{T}$ in $G$, which they call the \emph{backbone tree}, having a certain property.
The backbone tree has $O(k)$ non-degree-2 vertices and $O(k)$ maximal paths consisting of degree-2 vertices only. 
A maximal degree-2 path is said to be \emph{long} if its length is at least $2^{10k}$, and it is said to be \emph{short}, otherwise. 
Then the total complexity of the short maximal degree-2 paths is $2^{O(k)}$. 
They showed that 
for every long maximal degree-2 path $\pi$ of $R$, 
there are two nooses $S_v$ and $S_u$ of complexity $2^{O(k)}$ such that
$\rdist(u,u')$ and $\rdist(v,v')$ is $2^{O(k)}$ for the vertices $v'$ and $u'$ of $S_v$ and $S_u$, respectively, intersected by $\pi$,
where $u$ and $v$ are the endpoints of $\pi$. 
This property holds due to the treewidth of $G$ is $2^{O(k)}$. 
In the following, we call the subpath of $\pi$ lying between $u$ and $u'$ (and between $v$ and $v'$)
the \emph{prefixes} of $\pi$. The subpath of $\pi$ excluding its prefixes is called the \emph{middle subpath} of $\pi$.

If $(G,T,k)$ is a \textsf{YES}-instance of \textsf{Planar Disjoint Paths}, 
there exists a $T$-linkage $\mathcal  P$ of $G$ which can be pushed onto the edges of $R$ so that the resulting walks on $R$
traverse each edge of $R$ $2^{O(k)}$ times. See Figure~\ref{fig:steiner}. 
More specifically, Lokshtanov et al.~\cite{lokshtanov2020exponential} gave a canonical way to  push 
the paths of $\mathcal P$ onto the edges of $R$. Then $\mathcal P$ becomes a weak $T$-linkage of $G$ whose edges lie on the backbone tree $R$. 
For this purpose, they cut each path of $\mathcal P$ into several \emph{segments} at the vertices intersected by the short maximal degree-2 paths 
and by the prefixes of the long maximal degree-2 paths. 
The number of segments is $2^{O(k)}$, and each segment $s$ forms a walk $\omega_s$ on $R$ when it is pushed onto $R$. 
Each edge of $R$ appears exactly once on $\omega_s$ unless $s$ intersects the middle path of a long maximal degree-2 path. 
Also, by a clever choice of $\mathcal P$, they showed that 
$\omega_s$ also traverses a single edge $2^{O(k)}$ times even in the case that $s$ intersects the middle path of a long maximal degree-2 path. 

This shows that, if $(G,T,k)$ is a \textsf{YES}-instance of \textsf{Planar Disjoint Paths}, 
there is a weak $T$-linkage $\mathcal W$ of $G$ homotopic to a $T$-linkage of $G$ 
such that the edges of the walks of $\mathcal W$ lie on $R$, and each walk of $\mathcal W$ traverses a single edge $2^{O(k)}$ times in total. 
Using this, they enumerate all weak $T$-linkages on $R$ such that each weak $T$-linkage traverses a single edge $2^{O(k)}$ times. 
The number of all such weak $T$-linkages is $2^{O(k^2)}$. 

In summary, they first compute a Steiner tree $R$ with a certain property in $2^{O(k)}n^2$ time, and then enumerate all weak $T$-linkages
on $R$ in $2^{O(k^2)}n$ time. 

\subsubsection{Finding a Solution Using a Given Weak Linkage}\label{sec:overview-homology}
For each weak $T$-linkage computed from the previous stage, they apply the algorithm by Schrijver~\cite{schrijver1994finding}. 
Given a weak $T$-linkage $\mathcal W$ of $G$, Schrijver~\cite{schrijver1994finding} presented an $O(n^6)$-time algorithm 
for computing a $T$-linkage $\mathcal P$ of $G$ such that $\psi_{\mathcal W}$ is homologous to $\psi_{\mathcal P}$, 
where $\phi_{\mathcal W}$ and $\phi_{\mathcal P}$ are flow functions representing $\mathcal W$ and $\mathcal P$, respectively. 
Due to Lemma~\ref{lem:homology-homotopy}, this algorithm always returns a $T$-linkage $\mathcal P'$ of $G$ 
if there is a $T$-linkage $\mathcal P$ discretely homotopic to a given weak $T$-linkage $\mathcal W$. 

Although Lokshtanov et al.~\cite{lokshtanov2020exponential} used this algorithm as a black box,
we are required to open up this black box to 
analyze the dependence on $n$ in the running time of~\cite{lokshtanov2020exponential}. 
  The goal here is to compute a flow function $\psi$ homologous to a given flow function $\phi$ such that 
  $\psi$ represents a $T$-\emph{linkage} of $G$. 
  To control the number of edges with $\psi(\cdot)\neq \epsilon$ incident to a common vertex, 
  Schrijver~\cite{schrijver1994finding} introduced a more general problem setting: the \emph{homology feasibility problem}. 
  
  In the homology feasiblity problem, we are given a flow function $\phi$, and a candidate function $\Gamma: {\Pi} \to 2^{\Sigma^*}$, 
  where $\Pi$ is a set of face-edge paths. 
  For a flow function $\psi$ and a face-edge path $\pi$, 
  we let $\psi(\pi)$ denote the product of $\psi(e)^{\mu(e)}$'s for all edges $e$ in $\pi$,
  where  the sign $\mu(e)$ is positive if $L_e$ is the face lying previous to $e$ in $\pi$, and negative, otherwise.  
  Then the goal is to compute a flow function $\psi$ homologous to a given flow function $\phi$ such that 
  $\psi(\pi)\in \Gamma(\pi)$. 
  In this case, the homology function $f$ for $\phi$ and $\psi$ is called a \emph{feasible homology function}. 
  Schrijver showed that a flow function $\psi$ with $\psi(\pi)\in \Gamma(\pi)$ for all $\pi\in\Pi$ corresponds to a $T$-linkage
  if \textsf{(i)} $\Pi$ is the set of all face-edge paths each consisting of faces and edges sharing a common vertex, 
  and \textsf{(ii)} $\Gamma(\pi)=\Sigma\cup\Sigma^{-1}$ for all $\pi\in \Pi$. 
  Therefore, it suffices to present an algorithm for the homology feasibility problem.  
  
  \medskip
  To solve the homology feasibility problem with $\Gamma$, the algorithm of~\cite{schrijver1994finding} first computes several \emph{initial}  functions $f$'s, computes a specific function $\bar{f}$ for each initial function $f$, and then takes the \emph{join} of $\bar{f}$'s.  
  Schrijver showed how to choose 
  such initial functions $f$'s so that the join of $\bar{f}$'s is a feasible homology function. 
  Given an initial function $f$, Schrijver showed how to compute $\bar{f}$ in $O(n\eta |\Pi|)$ time, where $\eta$ denotes the maximum length of $\bar{f}$ for all initial functions $f$.  
  Moreover, the initial functions can be computed in  time quadratic in the number $\chi$ of face-edge paths $\pi$ of $\Pi$ with $\phi(\pi)\notin \Gamma(\pi)$, and the number of initial functions is linear in $\chi$. 
  Therefore, the total running time is $O(\chi^2 + n\cdot \eta \cdot \chi\cdot |\Pi|)$.
  
  In the case of the weak $T$-linkages obtained from~\cite{lokshtanov2020exponential},
  $|\Pi|$ and $\chi$ are $\Theta(n^2)$ since the number of vertices of $R$ is $O(n)$ and
  the degree (in $G$) of each vertex of $R$ is $\Theta(n)$ in the worst case.
  Also, $\eta$ is $\Theta(n)$ in the worst case since the maximum radial distance between the outer face of $G$ and
  the vertices of $R$ is $\Theta(n)$ in the worst case.  
  Therefore, the total running time of~\cite{lokshtanov2020exponential} is $2^{O(k^2)}n^6$.

  \subsubsection{Obstacles in Designing a $2^{O(k^2)}n$-time Algorithm and Our Methods}
  There are three places where a super-linear dependence on $n$ appears: 
  the application of the irrelevant vertex technique takes $2^{O(k)}n^2$ time, 
  the computation of the backbone tree takes $O(n^2)$ time, and the algorithm of~\cite{schrijver1994finding} takes $O(n^6)$ time. 
  Although a fine-grained analysis of the algorithm of~\cite{schrijver1994finding} gives the bound of 
  $O(\chi^2 + n\cdot \eta \cdot \chi\cdot |\Pi|)$, the values of $\chi, \eta$ and $|\Pi|$ are too large in the case of the weak $T$-linkages $\mathcal W$ 
  constructed from~\cite{lokshtanov2020exponential}. 
  This is because the number of edges of $R$ is $\Theta(n)$, and the maximum radial distance from the outer face to
  the vertices of $R$ is $\Theta(n)$ in the worst case. 
  
  One possible direction for obtaining a faster algorithm for the \textsf{Planar Disjoint Paths} problem is 
  to construct weak $T$-linkages with respect to some other backbone structure of complexity of $2^{O(k)}$ instead of constructing them with respect to the backbone tree. Then (after modifying $G$ slightly), we can show that this decreases $\chi$ to $2^{O(k)}$, and thus this leads to an algorithm running in $2^{O(k^2)}n^5$ time. 
  This was our starting point. 

  \medskip
  \paragraph{New Backbone Structures.}
  We first construct a backbone structure (a subgraph of $\radgraph$) of complexity of $2^{O(k)}$ in $2^{O(k)}n$ time, and then show that all paths of $\mathcal P$ can be pushed onto the backbone structure (and several precomputed paths) for a $T$-linkage $\mathcal P$. In this paper, we call this backbone structure the \emph{frames} and \emph{skeleton forests}, which will be defined later. 
  In this way, each edge on the backbone structure is traversed by the resulting walks $2^{O(k)}$ times in total.
  Also, each edge on the precomputed paths is traversed by the resulting walks at most once in total. 
  This allows us to only consider the weak $T$-linkages using each edge of the backbone structure $2^{O(k)}$ times and each edge of the precomputed paths exactly once. 
  There are $2^{O(k^2)}$ such weak $T$-linkages $\mathcal W$, and we can enumerate each of them in $2^{O(k)}n$ time. 
  By slightly modifying $G$ with respect to a given weak $T$-linkage, we can make each vertex of the backbone structure 
  have degree $2^{O(k)}$ (without changing the answer of the instance). 
  Then the number $\chi$ of face-edge paths $\pi$ of $\Pi$ with $\phi(\pi)\notin \Gamma(\pi)$ decreases to $2^{O(k)}$. 
  In this way, we can overcome the second obstacle: we can construct the backbone structure in $2^{O(k)}n$ time (although we define the backbone structure in a different way.)
  Also, we can partially overcome the third obstacle: now the algorithm of~\cite{schrijver1994finding} takes $O(n^5)$ time in our case.

  \medskip 
  
  \paragraph{Maintaining a Data Structure.}
  There remain obstacles to be handled to obtain a $2^{O(k^2)}n$ time algorithm. 
  First, due to properties of the backbone structure we defined, we can show that 
  the length of $\bar{f}(F)$ of all initial functions $f$ constructed from~\cite{schrijver1994finding} is $2^{O(k)}$.
  But even in our case, $|\Pi|$ can be $2^{\Theta(k)}n$ as the total number of edges of the walks of $\mathcal W$ can be $2^{\Theta(k)}n$. 
  We resolve this issue by designing a suitable data structure and implementing the algorithm of~\cite{schrijver1994finding}
  using this data structure. 
  In this way, we can fully overcome the third obstacle: Given a \emph{correct} weak $T$-linkage of $G$ defined with respect to the backbone structure (frames and skeleton forests), 
  we can compute a $T$-linkage of $G$ in $2^{O(k)}n$ time. 

  \medskip 
  
  \paragraph{Irrelevant Vertex Technique.}
  The remaining obstacle is to remove irrelevant vertices so that $G$ has treewidth $2^{O(k)}$ in $2^{O(k)}n$ time in total.
  We overcome this obstacle by removing irrelevant vertices while subdividing $G$ into several pieces. 
  This idea was already used in~\cite{reed1995rooted}, but this work focused on designing $2^{O(k^2)}n$ time algorithm for the
  \textsf{Planar Disjoint Paths} problem. We apply the algorithm~\cite{reed1995rooted} first, and then we are given
  $2^{O(k)}$ pieces (subgraphs of $G$) enclosed by a closed curve such that no terminal is contained in the \emph{interior} of each piece.  
  Then we remove all vertices in each piece whose radial distance is at least $2^{ck}$ from the boundary vertices of the piece  for a specific constant $c$ in time linear in the complexity of the piece. We show that all such vertices are irrelevant for $(G,T,k)$, and
  the resulting graph has treewidth $2^{O(k)}$.

\subsection{Our Methods}\label{sec:methods}
In this subsection, we give a more detailed sketch of our algorithm. 
We first apply the irrelevant technique in $2^{O(k)}n$ time 
so that the resulting graph $G$ has treewidth $2^{O(k)}$. This algorithm is described in Section~\ref{sec:irrelevant}. 
In the following, we assume that the treewidth of $G$ is $2^{O(k)}$. 
A key idea of our algorithm is summarized in the following lemma. The definitions of frames, forests and reference paths
will be given in Section~\ref{sec:intro-skeleton} and Section~\ref{sec:cutting}. 
Section~\ref{sec:cutting}, Section~\ref{sec:week_linkages} and Section~\ref{sec:weak_linkage_construction} 
are devoted to prove the following lemma. 

\begin{lemma}\label{lem:enumeration}
We can enumerate $2^{O(k^2)}$ weak linkages $\mathcal W$ in $2^{O(k^2)}n$ time  one of which is discretely homotopic to
a $T$-linkage of $G$ such that
\begin{itemize}
    \item the edges of the walks of $\mathcal W$ lie on the \emph{frames}, \emph{skeleton forests}, and \emph{reference paths}, 
    \item each edge of the frames and skeleton forests is traversed by the walks of $\mathcal W$ $2^{O(k)}$ times, and
    \item each edge of the the reference paths is traversed by the walks of $\mathcal W$ at most once. 
\end{itemize}
\end{lemma}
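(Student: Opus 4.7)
The plan is to follow the high-level outline in Section~\ref{sec:methods}, emulating the approach of~\cite{lokshtanov2020exponential} but replacing their Steiner-tree backbone with a more economical hybrid structure. First I would construct, in $2^{O(k)}n$ time, a backbone consisting of three ingredients: \emph{frames} (a subgraph of $\radgraph$ made of nooses of complexity $2^{O(k)}$) that partition $G$ into pieces each enclosing few terminals; \emph{skeleton forests} that span the boundary vertices and terminals of each piece with only $2^{O(k)}$ edges; and \emph{reference paths} that act as canonical long corridors connecting frames. The frames and skeleton forests together contain only $2^{O(k)}$ edges, so they can be traversed many times by a walk without blowing up the description length, whereas reference paths may be long but are used at most once per walk. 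The construction relies on the assumption, already in force, that $G$ has treewidth $2^{O(k)}$, which ensures the existence of suitable balanced nooses of complexity $2^{O(k)}$ via a tree-decomposition-based separator procedure.

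Next I would prove the pushing step: assuming $(G,T,k)$ admits a $T$-linkage $\mathcal{P}$, show that some $\mathcal{P}$ can be discretely deformed (via \textsf{Face Move} and \textsf{Face Pull} operations on $\radgraph$) into a weak $T$-linkage $\mathcal{W}$ whose edges lie on the backbone and which meets the claimed traversal bounds. The idea is to cut each path $P_i\in\mathcal{P}$ at its intersections with the frames, so every resulting \emph{segment} lies in a single piece or in a single corridor. By planarity and the $2^{O(k)}$ noose-complexity bound, each path is cut into $2^{O(k)}$ segments. A segment residing inside a piece can be canonically rerouted onto that piece's skeleton forest, contributing $2^{O(k)}$ backbone traversals per walk; a segment crossing a corridor between two frames is pushed onto the corresponding reference path, and a planar uncrossing argument inside the corridor ensures each such reference-path edge is used by at most one walk of $\mathcal{W}$.

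Once the existence of such a $\mathcal{W}$ is guaranteed, the enumeration step becomes counting. Each walk of $\mathcal{W}$ is specified by: a sequence over the backbone edges of length $2^{O(k)}$ (at most $2^{O(k)}$ traversals of each of $2^{O(k)}$ backbone edges), together with a choice of which reference paths to use and in what direction. This gives $2^{O(k)}$ candidate walks per terminal pair, hence $2^{O(k^2)}$ candidate weak $T$-linkages in total. Each candidate can be materialized in $2^{O(k)}n$ time by expanding the chosen reference-path segments, for a total enumeration cost of $2^{O(k^2)}n$. A final filter discards candidates that are not pairwise non-crossing or that violate the reference-path at-most-once rule.

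The main obstacle will be the pushing argument, which has to deliver three promises that pull in different directions: keep the number of backbone traversals per walk at $2^{O(k)}$; guarantee the \emph{global} constraint that each reference-path edge is used at most once across all walks of $\mathcal{W}$; and keep the resulting object non-crossing and discretely homotopic to $\mathcal{P}$. Controlling the reference-path usage is the most delicate part, since the segments of different paths of $\mathcal{P}$ threading through one corridor must be canonically shuffled onto disjoint reference paths; this will require a planar uncrossing inside each corridor together with a careful selection of cut points, following (but refining) the canonical pushing procedure of~\cite{lokshtanov2020exponential}. The treewidth hypothesis $2^{O(k)}$ on $G$ is used in two places--to build the frames with $2^{O(k)}$ complexity, and to guarantee that each segment within a piece can be locally rerouted onto the skeleton with only $2^{O(k)}$ edge traversals--so these two applications must be made quantitatively compatible.
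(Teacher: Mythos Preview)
Your outline has the right shape, but the frame construction and the reference-path step both hide real gaps. The paper's frames are not balanced separators extracted from a tree decomposition; they are \emph{concentric} nooses around an innermost face $F^*$, slicing the plane into $O(k)$ nested rings that alternate between thin terminal-containing rings (radial width $2^{O(k)}$) and possibly thick terminal-free rings. Reference paths live only inside the terminal-free rings, as a maximum-cardinality family $\mathcal{Q}_i$ of vertex-disjoint paths between two bounding \emph{subframes}. This ring topology is not incidental: it is what makes winding numbers well-defined and what lets the monotonicity and enumeration arguments go through. A balanced-separator decomposition into generic ``pieces'' gives you neither rings nor a usable notion of traversing segment versus visitor.

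The concrete failure is your claim that ``a planar uncrossing argument inside the corridor ensures each reference-path edge is used by at most one walk of $\mathcal{W}$.'' For that you need the number of linkage segments in the corridor to be at most $|\mathcal{Q}_i|$, which follows from Menger only if every such segment is \emph{traversing} (one endpoint on each boundary). A priori there can be \emph{visitors}---segments entering and exiting through the same frame---and these can pile up without bound. Ruling them out is the entire content of Section~\ref{sec:monotonicity}: one fixes a tight sequence of concentric cycles, takes a $\mathcal{C}$-cheap $T$-linkage, and proves via rerouting and a bramble/min-cut argument (Lemmas~\ref{lem:min-cut}--\ref{lem:no-horn}, Corollary~\ref{lem:monotone}) that between the subframes $C_{2^{5k}}$ and $C_{p-2^{5k}}$ every segment is traversing and meets each subframe once. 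Your plan also skips the step that keeps the pushed linkage homotopic to $\mathcal{P}$ with a bounded description: the paper first passes to a \emph{base} $T$-linkage whose winding numbers against a fixed reference path are at most $7$ (Lemma~\ref{lem:winding}), then encodes the interaction with each ring as a \emph{crossing pattern} (weighted triangulations of abstract polygonal schemas in terminal-containing rings, winding-number tuples in terminal-free rings). The $2^{O(k^2)}$ count in Lemma~\ref{lem:num-crossing} comes from this encoding together with $\sum_i k_i = k$, not from a per-walk $2^{O(k)}$ bound as you suggest.
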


\subsubsection{Frames, Skeleton Forests, and Reference Paths}\label{sec:intro-skeleton}
To prove Lemma~\ref{lem:enumeration}, we first define the \emph{frames} and \emph{skeleton forests} in Section~\ref{sec:cutting}. 
A frame is a noose of complexity $2^{O(k)}$, and we construct $O(k)$ \emph{concentric} frames. 
They subdivide $G$ into $O(k)$ \emph{framed rings} each of which is a subgraph of $G$ lying between two consecutive frames. 
A framed ring containing a terminal and a framed ring not containing a terminal appear alternatively. For illustration, see Figure~\ref{fig:frames}. 

\paragraph{Framed Ring Containing a Terminal.}
For a framed ring containing a terminal, the radial distance between its two boundary frames is $2^{O(k)}$.
This allows us to encode the interaction between $\mathcal P$ and a framed ring $\textsf{Ring}^\textsf{o}$ containing a terminal
in a compact way. 
%
To do this, we cut $\ring^\textsf{o}$ along the radial curves from each terminal to the outer boundary of $\ring^\textsf{o}$ 
so that the terminals lie on the boundary of the resulting subgraph.
This approach is also used in~\cite{erickson2011shortest} for computing shortest non-crossing walks in a planar graph. 
For illustration, see Figure~\ref{fig:poly_schema}. 
The union of the radial curves forms a forest, and we call it the \emph{skeleton forest} of $\ring^\textsf{o}$. 
In Section~\ref{sec:week_linkages}, we push the parts of the paths of $\mathcal P$ contained in $\textsf{Ring}^\textsf{o}$
onto the skeleton forest to obtain the \emph{canonical $T$-linkage}. 
Then as in~\cite{erickson2011shortest,lokshtanov2020exponential}, we can encode the interaction between the canonical weak $T$-linkage
and $\ring^\textsf{o}$ using $O(k\cdot k_i)$ bits, where $k_i$ denotes the number of terminals contained in $\ring^\textsf{o}$. 

\paragraph{Framed Ring Containing No Terminal.}
On the other hand, for a framed ring $\ring^\textsf{x}$ 
not containing a terminal, the radial distance between its two boundary frames can be as large as $\Theta(n)$. 
However, in this case, the number of maximal subpaths $\pi$ of the paths of $\mathcal P$ contained in $\ring^\textsf{x}$ is $2^{O(k)}$. 
For a maximum-cardinality set $\mathcal Q$ of disjoint paths in $\ring^\textsf{x}$ between the two boundary frames, 
the absolute value of the \emph{winding number} between $\pi$ and a path of $\mathcal Q$ is $2^{O(k)}$.\footnote{More precisely, there is a $T$-linkage, which we call the \emph{base} $T$-linkage, satisfying this property.}  
Using this observation, we precompute a maximum-cardinality set $\mathcal Q$ of disjoint paths between the two boundary frames.
We call the paths of $\mathcal Q$ the \emph{reference paths}. 
Then in Section~\ref{sec:week_linkages}, we 
push the parts of the paths of $\mathcal P$ contained in $\textsf{Ring}^\textsf{x}$ onto the frames and the paths of $\mathcal Q$.\footnote{In  fact, the story is more complicated here. It is not easy to apply the two techniques simultaneously (the technique for bounding the winding number between $\pi$ and a path of $\mathcal Q$, 
and the technique for pushing the paths onto the frames and paths of $\mathcal Q$). To handle this issue, we define subframes, which are cycles lying close to the frames, and then define $\mathcal Q$ as a maxmimum-cardinality set of paths connecting two subframes.}
We show that the resulting walks use each edge of the frames $2^{O(k)}$ times, and use each edge of the paths of $\mathcal Q$ at most once.
Then we can encode the interaction between the canonical weak $T$-linkage 
and $\ring^\textsf{x}$ using $O(k)$ bits.

\medskip
In summary, only $O(k^2)$ bits are required to encode the interaction between the framed rings and the canonical weak $T$-linkage.
Therefore, we can enumerate $2^{O(k^2)}$ weak linkages one of which is the canonical weak $T$-linkage as described in Section~\ref{sec:weak_linkage_construction}. 
Note that the canonical $T$-linkage satisfies the properties stated in Lemma~\ref{lem:enumeration}. 
Using these properties, we can show that given the canonical weak $T$-linkage, we can compute a $T$-linkage of $G$ in $O(n)$ time
in Section~\ref{sec:reconstructing}.

\subsubsection{Finding a Solution Using a Given Weak Linkage}\label{sec:intro-homology}
  Given a weak $T$-linkage $\mathcal W$ of $G$ satisfying the properties stated in Lemma~\ref{lem:enumeration}, 
  we first compute the flow function $\phi$ representing $\mathcal W$. Then 
  we let $\Pi$ be the set of all face-edge paths each consisting of faces and edges sharing a common vertex, 
  and let $\Gamma: \Pi\to \Sigma^*$ be the function with $\Gamma(\pi)=\Sigma\cup\Sigma^{-1}$ for all $\pi\in \Pi$. 
  By applying the algorithm in~\cite{schrijver1994finding}, we can compute a flow function $\psi$ 
  homologous to $\phi$ with $\psi(\pi)\in \Gamma(\pi)$.
  By the argument of~\cite{schrijver1994finding}, $\psi$ represents a unique $T$-linkage $\mathcal P$ of $G$, and thus we can obtain 
  $\mathcal P$ from $\psi$ in linear time. 
  
  As mentioned in Section~\ref{sec:overview-homology}, we can show that 
  the running time of the algorithm in~\cite{schrijver1994finding} is  $O(\chi^2 + n\cdot \eta \cdot \chi\cdot |\Pi|)$.
  Here, $\chi$ is the number of face-edge paths $\pi$ of $\Pi$ with $\phi(\pi)\notin \Gamma(\pi)$, and 
  $\eta$ is the maximum length of $\bar{f}$ for all \emph{initial} functions $f$.\footnote{The definition of the initial functions is given in Section~\ref{sec:homology_algorithm}.}
  The value of $\chi$ is $2^{O(k)}\cdot \textsf{MaxDeg}$, where $\textsf{MaxDeg}$ denotes
   the maximum degree in $G$ of the vertices on the frames and skeleton forests. 
  At this point, $\textsf{MaxDeg}$ can be as large as $\Theta(n)$. 
  
  \paragraph{Modification of $G$ to $\gmod$.}
  To handle this issue, we modify $G$ slightly so that $\textsf{MaxDeg}$ decreases to $2^{O(k)}$ in Section~\ref{sec:reducing_violation}. 
  We say an edge $e$ is \emph{linkage-edge} if a walk of $\mathcal W$ uses $e$. 
 For a vertex $v$ of $G$, let $d_W(v)$ be the number of linkage-edges 
 incident to $v$. Also, let $d_G(v)$ be the degree of $v$ on $G$.  
 We say an edge $e$ lies \emph{between $e_1$ and $e_2$} if $e$, $e_1$, and $e_2$ are incident to a common vertex $v$, and $e$ lies from $e_1$ to $e_2$ in clockwise direction around $v$. 
 Two linkage-edges $e$ and $e'$ incident to $v$ form a \emph{wedge at $v$} if there is no other linkage-edge between $e$ and $e'$. 
 Furthermore, the wedge contains all edges between $e$ and $e'$, and all wedges at $v$ are pairwise interior-disjoint. 
 See Figure~\ref{fig:gmod}. 
 Clearly, there are at most $d_W(v)$ wedges at a vertex $v$. We say a wedge at $v$ is \emph{empty} if 
 no edge of $G$ incident to $v$ is contained in the wedge. 
 
 For each non-empty wedge at a vertex $v$ in the frames and skeleton forests, we insert a new vertex $v'$ to $G$, 
 and add the edge between $v$ and $v'$. 
 Then we remove the edges in the wedge, and reconnect them to $v'$ instead of $v$. Note that, the edges incident to a new vertex $v'$ are not linkage-edges. 
 We do this for all non-empty wedges and all vertices in the frames and skeleton forests. 
 This takes $O(n)$ time in total since the total degree of all vertices in the frames and skeleton forests is $O(n)$. 
 Clearly, there is a one-to-one correspondence between two linkages (and weak linkage) on $G$ and on $\gmod$.  
 Due to the modification, the running time of~\cite{schrijver1994finding} is improved to $2^{O(k)}\cdot n\cdot \eta \cdot |\Pi|$.
 
 \paragraph{Further Improvement.}
 We improve further the running time of~\cite{schrijver1994finding} to $2^{O(k)}\cdot n\cdot \eta \cdot |\Pi|$ in Section~\ref{sec:pre_feasible}. 
 First, using a suitable data structure, we improve the running time to $2^{O(k)} n\cdot \eta$.
 More specifically, the algorithm of~\cite{schrijver1994finding} computes $2^{O(k)}$ initial functions $f$, and then computes a function $\bar{f}$, which is called 
 a \emph{smallest pre-feasible function}, for each initial function $f$. 
 The procedure for computing $\bar{f}$ consists of $O(n\eta)$ iterations, and in each iteration, we update the value of $f(F)$ for 
 a specific face $F$. 
 Here, each iteration takes $O(|\Pi|)$ time because we are required  to find a face-edge path $\pi \in \Pi$ with 
  $f(F)^{-1}\cdot \phi(\pi)\cdot \phi(F') \notin \Gamma(\pi)$ and $f(F)\neq\epsilon$, 
  where $F$ and $F'$ are the end faces of $\pi$. 
  Then we update the value of $f(F)$ or $f(F')$ accordingly. Instead of considering all face-edge paths in $\Pi$,
  we can find such a face-edge path $\pi$ in amortized constant time by maintaining a suitable data structure. 
  This improves the running time to $2^{O(k)} n\cdot \eta$.
  
  Finally, we show that $\eta=2^{O(k)}$ by introducing $O(k)$ new constraints to $\Pi$, one per frame.   
  Before doing that, we first observe that 
  for any weak $T$-linkage $\mathcal W$ represented by $f$,
  there is a weak $T$-linkage represented by $f'$ discretely homotopic to $\mathcal W$, 
  where $f'$ is a function obtained from $f$ during the process of computing $\bar{f}$. 
  This is simply because each iteration of the process is indeed a face operation.  
  Moreover, due to the new constraints, the 
  part of $\mathcal W$ contained in a framed ring $\textsf{Ring}$ is discretely homotopic to 
  the part of $\mathcal W'$ contained in ring $\textsf{Ring}$. 
  Using this property, we show that $\eta$ is $2^{O(k)}$ in the proof of Lemma~\ref{lem:length}. 
  
 \medskip 
 
 In summary, we first apply the irrelevant technique to $G$ in $2^{O(k)}n$ time 
 so that $G$ has treewidth of $2^{O(k)}$ in Section~\ref{sec:irrelevant}.
 We construct the frames, skeleton forests, and reference paths in $2^{O(k)}n$ time in Section~\ref{sec:cutting} and Section~\ref{sec:week_linkages}. 
 Using them, we enumerate $2^{O(k^2)}$ weak $T$-linkages satisfying the properties stated in Lemma~\ref{lem:enumeration}
 in $2^{O(k^2)}n$ time.
 Then for each weak $T$-linkage, we apply the algorithm of~\cite{schrijver1994finding} after modifying $G$ slightly
 in $2^{O(k)}n$ time in total as described in Section~\ref{sec:reconstructing}.
 In this way, we can solve the \textsf{Planar Disjoint Paths} problem in $2^{O(k^2)}n$ time.

\section{Irrelevant Vertex Technique}\label{sec:irrelevant}
In this section, we remove sufficiently many irrelevant vertices from $G$ to reduce the treewidth of $G$ to $2^{O(k)}$.
Using this algorithm as a preprocessing step, we may assume that $G$ has treewidth  $2^{O(k)}$
in the following sections. 
A \emph{$c$-punctured plane} $\boxdot$ is the region obtained by removing $c$ open holes from the plane. 
The boundary of $\boxdot$ is the union of the  boundaries of the open holes of $\boxdot$ and a vertex of $G$ lying on the boundary of $\boxdot$ is called a boundary vertex. 
Let $G$ be a plane graph, and $T$ be a set of terminal pairs. We consider a point where each terminal lies as a (trivial) hole. In this way,
we can obtain a $2k$-punctured plane $\boxdot$ 
such that the terminals of $T$ lie on the boundary of $\boxdot$. 

We say vertex-disjoint cycles $C_1,\ldots,C_\ell$ of $G$  are \emph{concentric} if $C_i$ is contained in the interior of $C_{i+1}$ for all $i\in[\ell-1]$. 
A cycle $C$ \emph{separates} a vertex  $u$ and a vertex set $W$
if $u$ is contained in the interior (or exterior) of $C$, and $W$ is contained in the exterior (or interior) of $C$.
Furthermore, a vertex $v$ is said to be $\ell$-\emph{isolated} if there are $\ell$ concentric cycles each separating $v$ and $\bar{T}$.

\begin{lemma}[Lemmas 1 and 10 in~\cite{adler2017irrelevant}]\label{lem:isolate_irr}
There exists a function $g(k)\in O(k^{1.5}2^k)$ which every $g(k)$-isolated vertex in $G$ is irrelevant for the instance $(G,T,k)$ of \textsf{Planar Disjoint Paths}. 
\end{lemma}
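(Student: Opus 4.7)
The plan is to prove the two directions separately, with only the forward direction requiring work. If $(G-v, T, k)$ is a \textsf{YES}-instance, any $T$-linkage in $G-v$ is also a $T$-linkage in $G$ (the vertex set of $G$ strictly contains that of $G-v$), so irrelevance of $v$ amounts to showing: if $(G, T, k)$ admits a $T$-linkage, then one can always be found that avoids $v$. So fix a $T$-linkage $\mathcal P=\langle P_1,\ldots,P_k\rangle$ in $G$ and let $C_1,\ldots,C_{g(k)}$ be the concentric cycles guaranteed by $g(k)$-isolation, with $v$ inside $C_1$ and $\bar T$ outside $C_{g(k)}$. The task is to reroute $\mathcal P$ into a $T$-linkage that stays outside $C_1$, thereby avoiding $v$.

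The first step would be to control how $\mathcal P$ interacts with the nest. Since every terminal lies outside $C_{g(k)}$, any path $P_j$ that visits $v$ must enter and leave the disk bounded by $C_i$ for every $i$, so it meets each $C_i$ in at least two vertices; in total there are at most $2k$ meetings on each cycle. Next I would invoke a pigeonhole/untangling step over the $g(k)$ nested annuli: after permuting and pairing crossings one finds a large sub-nest (still of size polynomial in $g(k)/2^{O(k)}$) in which consecutive cycles see the same cyclic pattern of crossings, producing a ``clean'' bundle of concentric annuli in which the intersecting paths run as parallel arcs. From this bundle one extracts a grid-like substructure (a wall) of dimension depending only on $k$, together with a homeomorphism from a standard planar grid to a neighborhood of $v$.

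Once the wall is in place, the classical irrelevant-vertex argument of the Graph Minors series applies: inside a sufficiently large planar wall surrounding $v$, any collection of terminal trajectories can be rerouted along the wall's rows and columns so as to avoid the central face, yielding a new $T$-linkage disjoint from $v$. The qualitative proof only requires a wall of size some computable $g(k)$, but the quantitative improvement to $g(k)\in O(k^{1.5}2^k)$ demands the refinement of Adler et al.\ using the planar unique-linkage theorem and a tighter count of how the arcs of $\mathcal P$ can wind around~$v$. This quantitative step, establishing that $\Theta(k^{1.5}2^k)$ concentric cycles already suffice to force a flat annulus in which rerouting works, is the main obstacle; the topological rerouting itself, given a flat annulus of width $\Omega(k)$, is essentially a direct wall-rotation argument.
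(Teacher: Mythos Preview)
The paper does not prove this lemma at all: it is stated with an explicit citation to Lemmas~1 and~10 of Adler et al.\ and used as a black box. There is therefore nothing to compare against on the paper's side.

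Your sketch is a plausible high-level outline of the irrelevant-vertex argument, but it is not a proof. The steps ``pigeonhole/untangling over the $g(k)$ nested annuli to find a sub-nest with the same cyclic crossing pattern,'' ``extract a grid-like wall,'' and ``apply the classical irrelevant-vertex argument'' are each substantial theorems in their own right, and you have not indicated how any of them would actually be carried out or why the specific bound $O(k^{1.5}2^k)$ emerges. In particular, the quantitative step you flag as ``the main obstacle'' is precisely the content of the cited paper, and you have not supplied it. If the intent is to reproduce the Adler et al.\ argument, what is written here is at best a roadmap; if the intent is to match the present paper, the correct move is simply to cite the result, as the authors do.
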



In the following, let $g(k)$ be the function in $O(k^{1.5}2^k)$. 
Our strategy is to decompose $G$ into
$2^{O(k)}$ \emph{nice} subgraphs $H$ each embedded on a 1-punctured plane after removing several $g(k)$-isolated vertices from $G$. 
A  subgraph $H$ of $G$ embedded on a $c$-punctured plane is said to be \emph{nice} if  
any cycle $C$ of $H$ separating a vertex $v$ of $H$ and the boundary vertices of $H$ also separates $v$ and $\bar{T}$ in $G$. 
Then a vertex in $H$ separated from the boundary vertices of $H$ by $g(k)$ concentric cycles is $g(k)$-isolated in $G$. Thus we are allowed to delete such vertices from $H$. 

\begin{figure}
    \centering
    \includegraphics[width=0.9\textwidth]{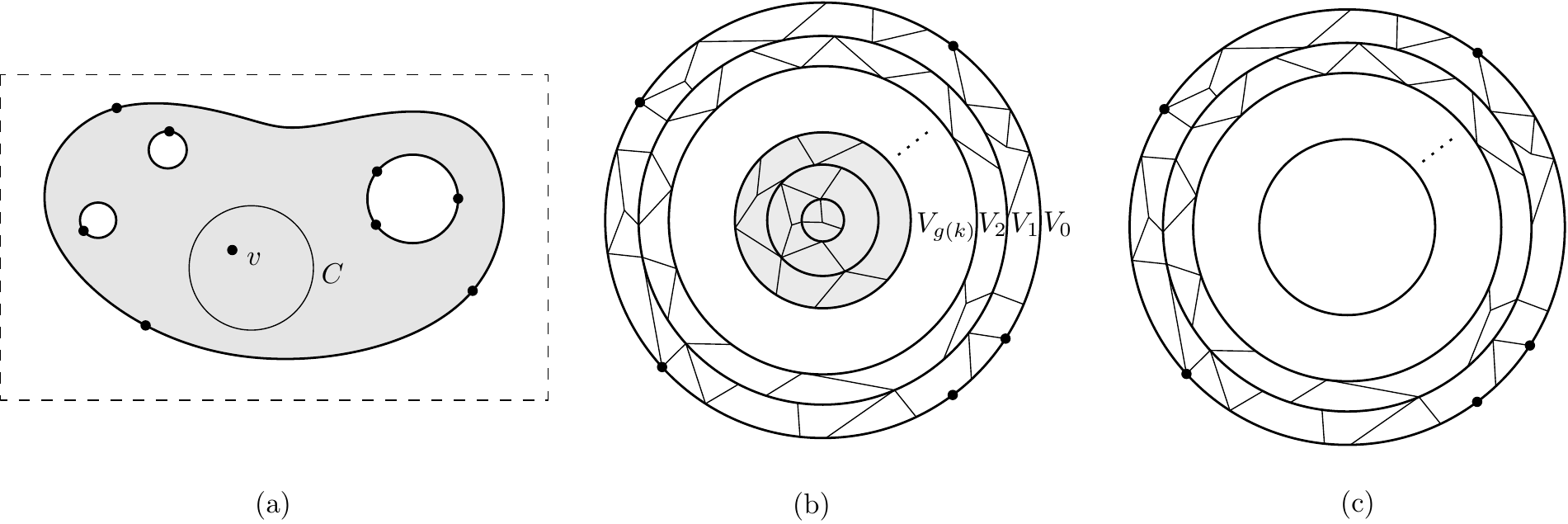}
    \caption{\small (a) 
    The white areas are open holes, and the points of their boundaries are terminals.
    The cycle $C$ separates $v$ from the boundary vertices, and it also separates $v$ from the terminals.
    (b) The gray area has vertices of $V_i$ for $i = g(k)+1, \ldots, n'$. 
    These vertices $g(k)$-isolated. 
    (c) A nice subgraph embedded on a $1$-punctured plane $\boxdot$ after removing all $g(k)$-isolated vertices has treeewidth at most $g(k)$.
    }
    \label{fig:nice_subgraph}
\end{figure}

We first show how to remove irrelevant vertices from a nice  subgraph $H$ embedded on a 1-punctured plane in Section~\ref{sec:disk_embedding}. 
Then we show how to decompose $G$ into nice subgraphs each embedded on a 1-punctured plane in Section~\ref{sec:cut=reduction}. 

\subsection{A Nice Subgraph Embedded on a 1-Punctured Plane}\label{sec:disk_embedding}
We are given a nice subgraph $H$ of $G$
embedded on a 1-punctured plane $\boxdot$.
Then we show how to remove $g(k)$-isolated vertices from $H$ so that $H$ has treewidth $2^{O(k)}$.
Our algorithm is simple: we remove all vertices sufficiently far from the boundary vertices of $H$ \emph{with respect to the radial distance}. 
Let $V_i$ be the set of vertices $v$ of $H$ such that the minimum radial distance between $v$ and a boundary vertex of $H$ is exactly $i$. 
That is, $V_0$ is the set of boundary vertices of $H$, and 
$V_{i+1}$ is the set of vertices not in $V_i\cup V_{i-1}$ lying on a face incident to a vertex of $V_i$. 
We can partition $V(H)$ into $V_0,\ldots V_{n'}$ in this way in time linear in the complexity of $H$
using a \emph{doubly connected edge list} of $H$, which a data structure for representing an embedding of a planar graph~\cite{CGbook}. 





\begin{lemma}\label{lem:disk_case_isolated}
A vertex in $V_i$ is $(i-1)$-isolated for $i\in [n']$.
\end{lemma}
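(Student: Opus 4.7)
The plan is to construct, for each $j\in\{1,2,\ldots,i-1\}$, a simple cycle $C_j$ in $H$ that separates the vertex $v\in V_i$ from the set $V_0$ of boundary vertices of $H$, and to verify that the resulting family is concentric. Since $H$ is a nice subgraph, each such $C_j$ then also separates $v$ from $\bar T$ in $G$, yielding the $i-1$ concentric cycles of $G$ required by the definition of an $(i-1)$-isolated vertex.

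Let $F_{\mathrm{hole}}$ denote the face of $H$ in $\boxdot$ that borders the hole, so that $F_{\mathrm{hole}}$ is incident to every vertex of $V_0$. For each $j\in\{1,\ldots,i-1\}$, I will consider the closed region $R_j\subseteq\boxdot$ defined as the union of the closures of all faces $F$ of $H$ with $\rdist(F,F_{\mathrm{hole}})\leq j-1$ (in particular $\overline{F_{\mathrm{hole}}}\subseteq R_j$). Specializing $F'=F_{\mathrm{hole}}$ in the definition of $\rdist(v,V_0)$ yields $\min_{F\ni v}\rdist(F,F_{\mathrm{hole}})\geq\rdist(v,V_0)=i$, so no face incident to $v$ lies in $R_j$ for any $j\leq i$. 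Hence $v$ lies in the open set $\boxdot\setminus R_j$, while $V_0\subseteq\partial F_{\mathrm{hole}}\subseteq R_j$, so the connected component $U_j$ of $\boxdot\setminus R_j$ that contains $v$ is disjoint from $V_0$. Because $R_1\subseteq R_2\subseteq\cdots\subseteq R_{i-1}$, the components satisfy $U_1\supseteq U_2\supseteq\cdots\supseteq U_{i-1}$, which is what will drive concentricity.

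The cycle $C_j$ is then extracted from the topological boundary $\partial U_j$, which is a closed walk of $H$ encircling the hole and separating $v$ from $V_0$ in $\boxdot$. When this walk is already a simple cycle I take $C_j=\partial U_j$; otherwise I split the walk at its repeated (cut) vertices into finitely many simple cycles of $H$, exactly one of which still encloses the hole and excludes $v$, and designate that cycle as $C_j$. The monotone nesting $U_1\supseteq\cdots\supseteq U_{i-1}$ combined with this canonical extraction produces cycles $C_1,\ldots,C_{i-1}$ of $H$ that are pairwise concentric and each separate $v$ from $V_0$. Invoking the nice-subgraph property of $H$ then promotes each separation from $V_0$ in $H$ to a separation from $\bar T$ in $G$, proving that $v$ is $(i-1)$-isolated.

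The main obstacle will be the extraction step when $H$ has cut vertices on $\partial U_j$, since then $\partial U_j$ is a closed walk rather than a simple cycle. I expect to handle this by a standard planar surgery at such cut vertices, picking off the unique resulting simple cycle that still lies strictly between $v$ and $R_j$, and arguing that this canonical choice is monotone in $j$ with respect to the inclusions $R_j\subseteq R_{j+1}$, so that concentricity is preserved across all $j\in[i-1]$.
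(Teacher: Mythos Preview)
Your approach is a direct layered construction, whereas the paper proceeds by induction on $i$: it observes that every path in $H$ from $V_0$ to $v$ must meet $V_1$, so some cycle $C\subseteq H[V_1]$ separates $v$ from $V_0$; it then deletes the $V_0$-side of $C$, treats $V(C)$ as the new boundary, and applies the inductive hypothesis to obtain $i-1$ further cycles inside $C$. This produces one new cycle per step and avoids any global surgery on face boundaries.

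Your outline is sound and the key inequality $\min_{F\ni v}\rdist(F,F_{\mathrm{hole}})\geq i$ is correct, but two points need more care. First, concentric cycles must be \emph{vertex-disjoint}, and you never check $\partial U_j\cap\partial U_{j'}=\emptyset$ for $j\neq j'$. This does hold---a vertex on $\partial U_j$ is incident only to faces at radial distance $j-1$ and $j$ from $F_{\mathrm{hole}}$ (two faces sharing a vertex differ in $\rdist$ by at most one), hence cannot lie on $\partial U_{j'}$---but it should be stated. Second, the extraction of a single simple cycle from $\partial U_j$ when that boundary is a non-simple closed walk is exactly where the argument is currently a promise rather than a proof: you must show not only that a unique loop encircling the hole and excluding $v$ exists, but also that your ``canonical choice'' is monotone in $j$ so that nesting survives the surgery. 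This can be done, but it is precisely the bookkeeping that the paper's inductive argument bypasses by producing only one separating cycle at a time.
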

\begin{proof}
We prove the lemma inductively. 
For the base case that $i=1$,
the lemma holds immediately since
every non-terminal vertex is $0$-isolated.


We assume that the lemma holds for vertices in $\cup_{j\leq i}V_j$, and consider a vertex $v$ in $V_{i+1}$ for $i\geq 1$. 
Note that no edge of $H$ connects two vertices each in $V_0$ and $V_j$, respectively, for any $j\geq 2$. Moreover, every path in $H$ connecting a boundary vertex of $V_0$ and $v$ intersects $V_1$. This means there exists a cycle $C$ in $H[V_1]$ which separates $V_0$ and $v$ in $H$. 
Imagine that we remove from $H$ all vertices separated by $C$ from $v$. Then the resulting graph $H'$ can be considered as a plane graph embedded on
a 1-punctured plane such that the vertices of $V(C)$
are the boundary vertices of $H'$. Therefore, there are $(i-1)$ concentric cycles
separating $v$ and $V_1$ in $H'$ (and thus in $H$) by the induction hypothesis. Thus, $v$ is $i$-isolated. \end{proof}

\begin{lemma}\label{lem:disk_case_treewidth}
    For a vertex $v$ in $V_i$,
    no $i$ concentric cycles separating $v$ and $V_0$ exist
    for $i\in [n']$.
\end{lemma}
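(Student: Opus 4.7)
The plan is to derive a contradiction from the hypothetical existence of $i$ separating concentric cycles by comparing the number of cycle-crossings forced along any face sequence witnessing the radial distance to the length of that sequence.

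First, I would fix $v\in V_i$ and suppose for contradiction that there exist $i$ concentric cycles $C_1,\ldots,C_i$ of $H$ each separating $v$ from $V_0$. By the definition of $V_i$ and of radial distance in $H$, I can pick a face sequence $F_1,F_2,\ldots,F_i$ such that $v$ is incident to $F_1$, some boundary vertex $u\in V_0$ is incident to $F_i$, and every two consecutive faces $F_r,F_{r+1}$ share a common vertex $w_r$; this sequence has exactly $i-1$ transitions $w_1,\ldots,w_{i-1}$.

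Next I would track on which side of each cycle each face sits. Because no $C_j$ passes through $v$ or $u$ (they are strictly separated by $C_j$) and the interior of any face of $H$ contains no edge or vertex of $H$, each face of $H$ lies entirely in one of the two open regions bounded by $C_j$. This lets me associate to each $F_r$ a sign vector $\sigma_r\in\{+,-\}^i$ whose $j$-th coordinate records whether $F_r$ lies inside or outside $C_j$. We get $\sigma_1=(+,\ldots,+)$ and $\sigma_i=(-,\ldots,-)$, so in each of the $i$ coordinates the sign must flip at least once along the sequence.

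Then I would observe that a transition $F_r\to F_{r+1}$ at the shared vertex $w_r$ can flip the sign only in coordinates $j$ with $w_r\in C_j$: if $w_r\notin C_j$, then a small disk around $w_r$ lies on a fixed side of $C_j$ and meets both faces, forcing $F_r$ and $F_{r+1}$ to sit on the same side of $C_j$. Since the cycles $C_1,\ldots,C_i$ are concentric, and hence vertex-disjoint by the definition given in the paper, each $w_r$ lies on at most one of them, so each transition can flip at most one coordinate of the sign vector. The $i-1$ transitions therefore realize at most $i-1$ sign flips in total, strictly fewer than the $i$ flips required by $\sigma_1\to\sigma_i$. This contradiction completes the argument.

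The only delicate points will be the two planarity facts I used: that each face of $H$ lies entirely on one side of each cycle, and that a vertex off a cycle has a one-sided neighborhood with respect to that cycle. Both follow from $H$ being plane and the cycles being edge-subgraphs of $H$, so I expect the main work to be in writing the topological bookkeeping cleanly rather than any conceptual obstacle; note that induction on $i$ is not actually needed, since the face-sequence argument handles all $i\in[n']$ uniformly (with the base $i=1$ corresponding to a single face and zero transitions).
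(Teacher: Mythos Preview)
Your proof is correct and takes a genuinely different route from the paper's. The paper argues by induction on $i$: assuming $i+1$ separating cycles exist for some $v\in V_{i+1}$, it peels off the outermost cycle $C$, restricts $H$ to the region on $v$'s side of $C$ so that $V(C)$ becomes the new boundary, and then invokes the induction hypothesis inside this smaller graph to force the radial distance from $v$ to $V(C)$ to be too large, ultimately contradicting the definition of $V_1$. Your argument is instead a direct pigeonhole count with no induction: along a minimal face sequence of $i$ faces from $v$ to $V_0$, each of the $i-1$ shared-vertex transitions can cross at most one of the $i$ pairwise vertex-disjoint separating cycles (by the paper's definition of ``concentric''), yet every cycle must be crossed at least once. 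Your approach is more elementary and self-contained, needing only the Jordan curve property and the disjointness of the cycles, and it handles all $i$ uniformly in one shot; the paper's inductive peeling, on the other hand, mirrors the structure of the companion lemma on $(i-1)$-isolation, so the two proofs there read as a matched pair. The topological bookkeeping you flag (each face lies on one side of each cycle; a vertex off a cycle has a one-sided neighborhood) is exactly the right level of care and goes through as you describe.
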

\begin{proof}
We prove the lemma using the induction on the minimum radial distance $i$ between $v$ and the boundary vertices of $H$. 
For the base case, we consider a vertex $v$ in $V_1$. 
There is a face incident to both a boundary vertex $w$ of $H$ and $v$, and thus no cycle separates $w$ and $v$. Thus, the lemma holds.

We assume that there exists a set $\mathcal C$ of $(i+1)$ concentric cycles separating $v$ and the boundary vertices of $G$.  
Let $C$ be the cycle farthest from $v$ among them. 
As we did before, imagine that we remove from $G$ all vertices separated by $C$ from $v$. Then the resulting graph $H'$ can be considered as a plane graph embedded on 
a 1-punctured plane such that the vertices of $V(C)$
are the boundary vertices of $H'$. 
By the induction hypothesis, no $i$ concentric cycles separating $v$ and the boundary vertices of $H'$ exist 
if the radial distance between $v$ and $V(C)$ is $i$. Therefore, 
the radial distance between $v$ and $V(C)$ is less than $i$, and thus at least two cycles of $\mathcal C$ intersect $V_1$. 
This means there is a cycle which separates some vertex in $V_1$ and $V_0$. This contradicts to the definition of $V_1$.  
Thus, such a set $\mathcal C$ does not exists, and thus this completes the proof.
\end{proof}

Using the two lemmas stated above,
we 
remove all vertices in $V_i$ for $i=g(k)+1,\ldots,n'$ from $H$ in time linear in the complexity of $H$ in total.
By Lemma~\ref{lem:disk_case_isolated},
all removed vertices are $g(k)$-isolated. 
After removing them, 
$H$ has no $(g(k)+1)$ concentric cycles by Lemma~\ref{lem:disk_case_treewidth}. 
A planar graph without $(g(k)+1)$ concentric cycles 
has treewidth at most $g(k)$. Therefore, we have the following lemma. 

\begin{lemma}\label{thm:disk_case}
Given a nice subgraph $H$ of $G$ embedded on a 1-punctured plane, 
 we can remove $g(k)$-isolated vertices so that the resulting  subgraph $H$ has treewidth at most $g(k)$.
\end{lemma}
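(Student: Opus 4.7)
The plan is to identify a concrete set of vertices to delete, argue they are irrelevant via Lemma~\ref{lem:disk_case_isolated} together with Lemma~\ref{lem:isolate_irr}, and then bound the treewidth of the trimmed graph via Lemma~\ref{lem:disk_case_treewidth} combined with the classical fact that a plane graph in which no $(g(k)+1)$ concentric cycles exist has treewidth at most $g(k)$.

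Concretely, I would first compute the layer decomposition $V_0,V_1,\ldots,V_{n'}$ by a radial breadth-first search initialized at the boundary vertices of $H$. Using a doubly connected edge list for $H$, each face and each incident vertex is touched a constant number of times, so the entire decomposition is produced in time linear in the complexity of $H$. I would then delete from $H$ every vertex belonging to $V_i$ for $i\ge g(k)+1$. By Lemma~\ref{lem:disk_case_isolated}, each such vertex is $(i-1)$-isolated in $H$, hence at least $g(k)$-isolated; since $H$ is nice, the $g(k)$ concentric cycles witnessing isolation in $H$ also separate the vertex from $\bar T$ in $G$, so the vertex is $g(k)$-isolated in $G$. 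Lemma~\ref{lem:isolate_irr} then certifies that these vertices are irrelevant for $(G,T,k)$, so removing them preserves the instance.

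After deletion, every remaining vertex lies in some $V_i$ with $i\le g(k)$. I would then invoke Lemma~\ref{lem:disk_case_treewidth}, which guarantees that no $i$ concentric cycles separate a vertex $v\in V_i$ from the boundary vertices $V_0$; since $i\le g(k)$, no collection of $g(k)+1$ concentric cycles can exist in the trimmed graph, because the innermost such cycle would enclose a vertex at radial depth greater than $g(k)$ from $V_0$. To close the argument I would apply the classical lemma (going back to Robertson--Seymour and made quantitative in subsequent work) that a plane graph without $(g(k)+1)$ pairwise-nested, vertex-disjoint cycles around a face has treewidth at most $g(k)$; this is exactly the bounded-radial-width regime.

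The main obstacle is the last step: one must be slightly careful that ``no $(g(k)+1)$ concentric cycles separating an interior vertex from the boundary'' really translates into a treewidth bound of $g(k)$ rather than some larger polynomial in $g(k)$. I would handle this by taking the radial completion of the trimmed $H$ and observing that its diameter from the boundary is at most $g(k)$, which yields a BFS-layer tree decomposition of width $O(g(k))$; the explicit constant is absorbed into the $2^{O(k)}$ bound we target, since $g(k)\in O(k^{1.5}2^k)$. Finally, I would note that the trimming itself, like the layer computation, runs in time linear in $|V(H)|+|E(H)|$, so the whole reduction is linear in the complexity of $H$, as required.
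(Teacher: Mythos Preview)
Your proposal is correct and follows essentially the same argument as the paper: compute the radial layers $V_0,\ldots,V_{n'}$, delete everything in $V_i$ for $i\ge g(k)+1$, invoke Lemma~\ref{lem:disk_case_isolated} to certify the deleted vertices are $g(k)$-isolated, and invoke Lemma~\ref{lem:disk_case_treewidth} to conclude the trimmed graph has no $g(k)+1$ concentric cycles. The only difference is that the paper simply asserts the classical fact that a planar graph without $(g(k)+1)$ concentric cycles has treewidth at most $g(k)$, whereas you hedge and offer a BFS-layer argument in the radial completion that yields width $O(g(k))$ rather than exactly $g(k)$; this weaker bound is still $2^{O(k)}$ and suffices for everything downstream, but it does not literally match the lemma statement as written.
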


\subsection{General Case: Cut Reduction}\label{sec:cut=reduction}
In this section, we are given a graph $G$ embedded on a $2k$-punctured plane, and show how to remove $g(k)$-isolated vertices  so that the resulting graph has treewidth at most $g(k)$. 
 We apply the \emph{cut reduction} introduced in~\cite{reed1995rooted} to decompose $G$ into several nice subgraphs embedded on 1-punctured planes. Then we use the algorithm described in Section~\ref{sec:disk_embedding} to each nice subgraph embedded on a 1-punctured plane. 


\begin{lemma}[Lemma 2 in~\cite{reed1995rooted}]\label{lem:shcism}
Let $H$ be a nice subgraph of $G$ embedded on a $c$-punctured plane $\boxdot$. 
If $c\geq 3$, we can compute both a non-crossing closed curve $J$ contained in $\boxdot$ and a set $X$ of $g(k)$-isolated vertices in time linear in the complexity of $H$ such that 
\begin{itemize}
    \item {$J$ intersects $G$ only at vertices of $H$ and does not cross any edge of $G$,}
    \item{the number of vertices of $H-X$ intersected by $J$ is at most $6g(k)+6$, and}
    \item 
    $\boxdot\setminus J$ has at most three connected components each of which has less than $c$ holes. 
\end{itemize}
\end{lemma}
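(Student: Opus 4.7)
The plan is to construct the desired noose $J$ by analyzing BFS layers in the radial completion of $H$ around a chosen hole, and to use the niceness of $H$ to promote any ``deep'' vertices into $g(k)$-isolated ones. Concretely, I would fix a hole $F_1$ of $\boxdot$ and, for each $i \geq 0$, let $L_i$ be the set of vertices of $H$ at radial distance exactly $i$ from the boundary of $F_1$, computed in linear time as in Section~\ref{sec:disk_embedding}. Each layer $L_i$ supports a canonical noose $N_i$ passing only through vertices of $L_i$ and enclosing $F_1$ strictly in its interior, of complexity at most $|L_i|$ plus a small additive constant coming from the interaction with the boundary of $\boxdot$.

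Let $i^{\star}$ be the largest index such that $N_{i^{\star}}$ encloses $F_1$ but none of the other $c-1$ holes. If $|L_{i^{\star}}| \leq 6g(k)+6$, set $J := N_{i^{\star}}$ and $X := \emptyset$; then $\boxdot \setminus J$ has two components, the $F_1$-side and the opposite side, each with strictly fewer than $c$ holes. Otherwise, there are two sub-cases. Either some $i \leq g(k)$ witnesses that $L_i$ has already reached the boundary of a second hole $F_2$, in which case a short face--edge path between $F_1$ and $F_2$ can be closed up with two short arcs along $\partial F_1$ and $\partial F_2$ to produce a noose of complexity at most $6g(k)+6$ that cuts $\boxdot$ into at most three components each with fewer than $c$ holes. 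Or every layer $L_1,\ldots,L_{g(k)}$ is larger than $6g(k)+6$ and lies entirely on the $F_1$-side, in which case these layers yield $g(k)$ pairwise concentric nooses in $\radgraph$, each separating $\bigcup_{i>g(k)}L_i$ from the boundary vertices of $H$. By niceness of $H$, the same cycles also separate those vertices from $\bar{T}$, so every vertex of $\bigcup_{i>g(k)}L_i$ is $g(k)$-isolated; we add all of them to $X$, restrict to the pruned graph, and repeat.

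The main obstacle I expect is the careful bookkeeping behind the constant $6g(k)+6$. A single layer $L_i$ may support several disjoint closed curves rather than one coherent noose, and nooses produced by gluing face--edge paths to boundary arcs (as in the first sub-case above) require summing complexity contributions from each arc, which is where the factor $6$ arises. A secondary point is to verify that after cutting along $J$ no component inherits all $c$ holes: this follows because $J$ either encloses exactly $F_1$ (so one side has a single hole and the other has $c-1$) or separates two designated holes with at most one annular piece between them, giving at most three pieces and hence at most $c-1$ holes in each. Once these combinatorial details are fixed, the linear-time bound follows from the fact that the partition $L_0, L_1, \ldots$ and the associated nooses can be extracted from a single BFS sweep of $\radgraph$ as in the proof of Lemma~\ref{thm:disk_case}.
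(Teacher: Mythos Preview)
The paper does not give its own proof of this lemma; it is quoted directly from Reed~\cite{reed1995rooted}, so there is no in-paper argument to compare against. Assessing your sketch on its own merits, however, there is a genuine gap in the third sub-case.

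You run BFS from a \emph{single} hole $F_1$ and, when the layers $L_1,\ldots,L_{g(k)}$ are all large and their nooses enclose only $F_1$, you assert that each $N_j$ separates $\bigcup_{i>g(k)}L_i$ from \emph{all} boundary vertices of $H$, and hence (by niceness) that every such vertex is $g(k)$-isolated. This is false. Since each $N_j$ encloses only $F_1$, a vertex $v$ at radial distance greater than $g(k)$ from $F_1$ lies \emph{outside} every $N_j$, on the same side as the remaining $c-1$ holes. The nooses therefore separate $v$ from the boundary of $F_1$ only, not from the boundaries of the other holes; niceness requires separation from \emph{all} boundary vertices, so it yields nothing here, and $v$ can in fact sit directly on the boundary of another hole. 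The entire ``add to $X$ and repeat'' branch collapses, and with it the linear-time claim (the repeat would also need its own accounting even if the isolation claim held).

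The mechanism that actually works, and that the paper spells out for the case $c=2$ in the proof of Lemma~\ref{lem:cut-reduction}, goes in the opposite direction: choose a \emph{shortest} radial arc $A$ between the boundaries of two holes. If $|V(A)|$ is small, $A$ is used to build $J$; if $|V(A)|$ is large, the minimality of $A$ forces its \emph{median} vertex to be at radial distance greater than $g(k)$ from \emph{every} hole simultaneously, which is precisely the hypothesis niceness needs to certify $g(k)$-isolation. Reed's argument for $c\ge 3$ combines several such shortest arcs, and the constant $6g(k)+6$ records that combination. A BFS rooted at one hole cannot manufacture this simultaneous separation from all holes, so the core idea has to be replaced rather than patched.
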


Lemma~2 of~\cite{reed1995rooted} shows how to decompose a nice subgraph $H$ of $G$ embedded on a $c$-punctured plane with $c\geq 3$, but 
similarly, we can do this in the case of $c= 2$. For details, see its proof of the following lemma. 

\begin{lemma}\label{lem:cut-reduction}
Let $G$ be a plane graph and $T$ be a set of  terminal pairs. After removing $g(k)$-isolated vertices from $G$, we can decompose $G$ into $2^{O(k)}$ nice subgraphs each embedded on a  1-punctured plane and each having $2^{O(k)}$ boudnary vertices  in linear time in total. 
\end{lemma}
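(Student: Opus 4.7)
The plan is to apply Lemma~\ref{lem:shcism} recursively, producing a decomposition tree whose root is $G$ itself embedded on the $2k$-punctured plane (obtained by placing a trivial hole at each terminal), and whose leaves are nice subgraphs each on a $1$-punctured plane. At each internal node, we have a nice subgraph $H$ on a $c$-punctured plane with $c \geq 2$; we invoke Lemma~\ref{lem:shcism} (or the analogous argument for $c=2$ referenced in the paper) to obtain a curve $J$ and a set $X$ of $g(k)$-isolated vertices. After deleting $X$, the curve $J$ slices $H - X$ into at most three subgraphs, each embedded on a punctured plane with strictly fewer holes, and each inheriting at most $6g(k)+6$ new boundary vertices from $J$.

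The first thing I would verify is that each child $H'$ produced by a cut is itself nice, so that the recursion is well-founded and so that the deletions performed at every level of the recursion remove only vertices that are $g(k)$-isolated in the original $G$. Concretely, if $C$ is a cycle of $H'$ separating a vertex $v \in H'$ from the boundary of $H'$, then in particular $C$ separates $v$ from the portion of $H$'s boundary not introduced by the cut; since $J$ intersects $G$ only at vertices and does not cross any edge of $G$, the niceness of $H$ propagates to show that $C$ separates $v$ from $\bar{T}$ in $G$. Combined with Lemma~\ref{lem:isolate_irr}, this ensures that the $g(k)$-isolated vertices produced at every node of the recursion are genuinely irrelevant for the instance $(G,T,k)$.

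With niceness established, I would bound the structural parameters of the recursion tree. Starting from $2k$ holes and decreasing by at least one per level, the depth is $O(k)$, and with branching factor three the tree has $2^{O(k)}$ nodes and hence $2^{O(k)}$ leaves. Each leaf is a nice subgraph on a $1$-punctured plane, and its boundary is a union of the boundary pieces introduced along the root-to-leaf path: $O(k)$ cuts, each contributing at most $6g(k)+6 = 2^{O(k)}$ vertices, giving $2^{O(k)}$ boundary vertices per leaf.

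Finally, I would control the total running time. Lemma~\ref{lem:shcism} runs in time linear in the complexity of its input, and across one level of the recursion the children partition the non-boundary vertices and edges of the parent while only duplicating the boundary vertices on $J$. Since every cut adds at most $2^{O(k)}$ boundary vertices and there are $2^{O(k)}$ internal nodes, the total input complexity summed over all recursion nodes is $n + 2^{O(k)}$, giving an overall running time of $2^{O(k)} n$, i.e., linear in $n$. The main obstacle I expect is (i) the careful verification that niceness is preserved under cutting (so that isolated-vertex deletions at deep levels are still justified in $G$), together with (ii) the bookkeeping that shows the sizes of the children sum essentially back to the size of the parent up to the $2^{O(k)}$ boundary blow-up, which is what keeps the total running time linear; the $c=2$ base case of Lemma~\ref{lem:shcism} must also be spelled out, but it is a minor adaptation of the $c \geq 3$ argument.
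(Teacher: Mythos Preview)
Your proposal is correct and follows essentially the same route as the paper: repeatedly cut via Lemma~\ref{lem:shcism} until every piece lives on a $1$-punctured plane, tracking the number of pieces, boundary blow-up, and total work. Your extra verification that niceness propagates to children (so that deletions at deep levels remain $g(k)$-isolated in $G$) is a point the paper leaves implicit.

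The one place where the paper does more than you is the $c=2$ case, which you dismiss as ``a minor adaptation'' but which in fact constitutes most of the paper's proof text. There the construction is genuinely different from Lemma~\ref{lem:shcism}: instead of a closed curve $J$, one takes a shortest arc $A$ from one hole boundary $C_1$ to the other $C_2$; if $|V(A)|\le 6g(k)+6$ one cuts along $C_1\cup A\cup C_2$ directly to obtain a single $1$-punctured piece, and otherwise one first removes a set $X$ of $g(k)$-isolated vertices found via concentric cycles around the median of $V(A)$, after which $A$ meets $H-X$ in only $2g(k)$ vertices. This is not hard, but it is not literally Lemma~\ref{lem:shcism}, so it is worth writing out. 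Your running-time bookkeeping also slightly overstates: the total complexity summed over all recursion nodes is $O(k)\cdot n + 2^{O(k)}$ rather than $n + 2^{O(k)}$ (each interior vertex is charged once per level), but this still gives $2^{O(k)}n$ and hence the claimed linear bound.
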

\begin{proof}
First, by applying Lemma~\ref{lem:shcism} repeatedly, we decompose $G$ embedded on a $2k$-punctured plane into at most $3^{2k}$ 
nice subgraphs of $G$ each embedded on a $1$- or $2$-punctured plane. 
For each nice subgraph $H$ of $G$ embedded on a 2-punctured plane $\boxdot$, we do the following. 
Let $C_1$ and $C_2$ be the boundaries of 
two holes of $\boxdot$. 
Consider a simple arc contained in $\boxdot$ having one endpoint on $V(C_1)$ and one endpoint on $V(C_2)$ that intersects $G$ only at vertices of $H$. 
If there are more than one such arcs, we choose an arc $A$ that minimizes $|V(A)|$, where
$V(A)$ denotes the set of vertices of $G$ (and thus of $H$) intersected by $J$. Then let $v$ be the median vertex of  $V(A)$. 

If the size of $V(A)$ is at most $6g(k)+6$, then 
we cut $\boxdot$ 
along the non-crossing closed  curve $C_1\cup A \cup C_2$ into a 1-punctured plane $\boxdot'$,
and then we can consider $H$ as a nice subgraph of $G$
embedded on $\boxdot'$ and having $|V(C_1)\cup V(C_2)\cup V(A)|=2^{O(k)}$ boundary vertices. 
Otherwise, 
there are $g(k)$ concentric cycles separating $v$ and $V(C_1)\cup V(C_2)$, and thus $v$ is $g(k)$-isolated. 
Let $X$ be the set of the vertices lying on the ones, except for the most $g(k)$ cycles farthest from $v$, among these cycles. 
All vertices of $X$ are $g(k)$-isolated
since $H$ is a nice subgraph of $G$. Moreover, we can compute them in time linear in the complexity of $H$. 
The number of vertices in $H-X$ intersected by $A$ is $2g(k)$ by construction. We remove all vertices of $X$ from $H$, and then cut $\boxdot$ along the  closed non-crossing curve $C_1\cup A \cup C_2$ into a 1-punctured plane $\boxdot'$ as we did for the previous case. 
Then $H$ is a nice subgraph of $G$ embedded on $\boxdot'$ having at most $2g(k)=2^{O(k)}$ boundary vertices. 
\end{proof}



\begin{figure}
    \centering
    \includegraphics[width=0.8
    \textwidth]{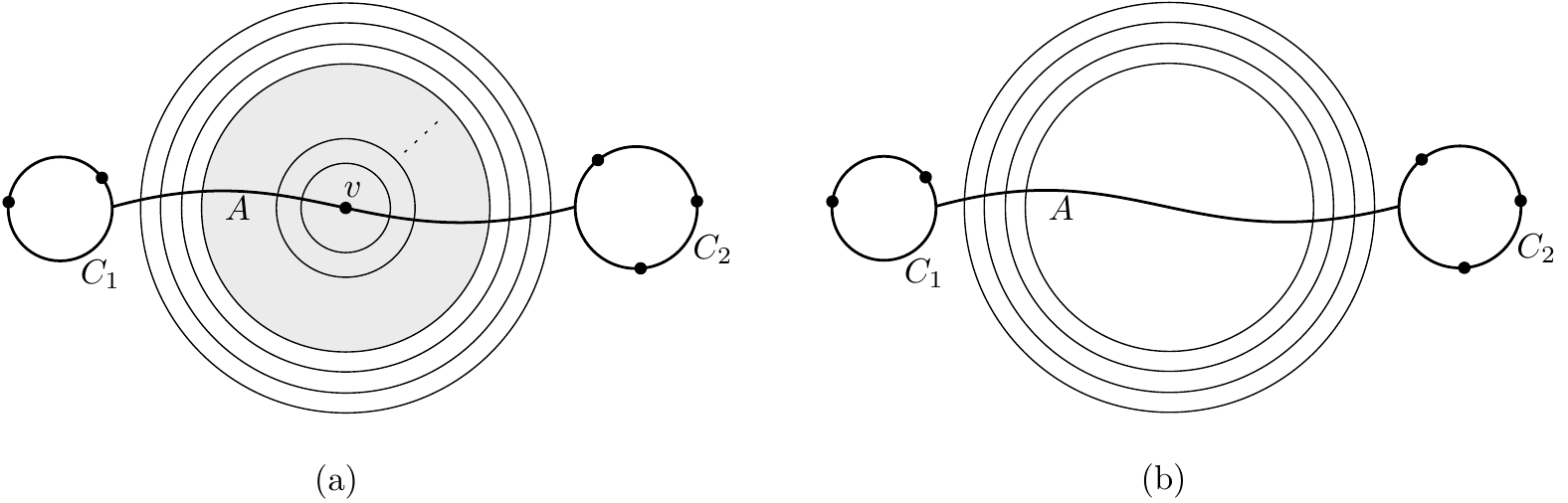}
    \caption{\small (a) 
    $C_1$ and $C_2$ are boundaries of two holes of $\boxdot$ and $A$ is an arc having two endpoints on $V(C_1)$ and $V(C_2)$.
    $v$ is a median vertex of $V(A)$.
    Concentric cycles except the outermost $g(k)$ cycles are colored gray.
    These vertices are $g(k)$-isolated.
    (b) $g(k)$-isolated vertices are removed from \ccheck{(a)}.
    $V(C_1)\cup V(C_2)\cup V(A)$ has $2^{O(k)}$ boundary vertices.
    }
    \label{fig:cut_reduction}
\end{figure}

After apply Lemma~\ref{lem:cut-reduction},
we are given $2^{O(k)}$ nice subgraphs each
embedded on a  1-punctured plane and each having $2^{O(k)}$ boundary vertices.
For each nice subgraph, we apply the procedure for removing irrelevant vertices described in Section~\ref{sec:disk_embedding}.
Then each of the resulting subgraphs has
treewidth $2^{O(k)}$.
This implies that $G$ has treewidth $2^{O(k)}$.
That is, we can construct a tree decomposition $(\mathcal T,\beta)$ of $G$ of width $2^{O(k)}$ from tree decompositions of the nice subgraphs. More specifically,  
let $B$ be the set of all boundary vertices of all nice subgraphs. 
For each tree decomposition $(\mathcal T_H,\beta_H)$ of width $2^{O(k)}$ of a nice subgraph $H$, 
we add all vertices of $B$ to all bags of $\beta_H(\cdot)$. 
For the root $t$ of $\mathcal T$, we let $\beta(t)=B$, and then 
let $t$ point to the roots of $\mathcal T_H$'s  in $\mathcal T$ so that $t$ has the roots as its children. 
In this way, we can obtain a tree decomposition $(\mathcal T, \beta)$ of $G$ of width $2^{O(k)}$. 
Therefore, we have the following theorem. Furthermore, this bound is tight by Theorem~3 in~\cite{adler2019lower}.

\begin{theorem}
Given an instance  $(G,T,k)$ of \textsf{Planar Disjoint Paths}, we can remove irrelevant vertices from $G$ in  $2^{O(k)}n$ time in total
so that the resulting graph $G$ has treewidth of $2^{O(k)}$.
\end{theorem}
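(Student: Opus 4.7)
The plan is to string together the two ingredients already prepared in Sections~\ref{sec:disk_embedding} and~\ref{sec:cut=reduction}. First I would invoke Lemma~\ref{lem:cut-reduction} to remove some $g(k)$-isolated vertices and carve $G$ (viewed as embedded in a $2k$-punctured plane with each terminal treated as a trivial hole) into $2^{O(k)}$ nice subgraphs $H_1,\ldots,H_m$, each drawn on a $1$-punctured plane and each carrying at most $2^{O(k)}$ boundary vertices; this step is linear in $n$ in total. Next, for each $H_j$ I would apply Lemma~\ref{thm:disk_case}, which deletes the vertex layers $V_{g(k)+1}, V_{g(k)+2}, \ldots$ with respect to radial distance from the boundary of $H_j$. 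By Lemma~\ref{lem:disk_case_isolated} these vertices are $g(k)$-isolated in $H_j$, and because $H_j$ is a nice subgraph any separating cycle also separates from $\bar T$ in $G$, so these vertices are $g(k)$-isolated in $G$ and hence irrelevant by Lemma~\ref{lem:isolate_irr}. After this pruning each $H_j$ has treewidth at most $g(k)=O(k^{1.5}2^k)=2^{O(k)}$, and the pruning is again linear in the complexity of $H_j$, giving $2^{O(k)}n$ overall (indeed $O(n)$ for this stage alone, since the $2^{O(k)}$ factor is absorbed in the total complexity of the subgraphs).

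The second ingredient is the assembly of tree decompositions. For each pruned $H_j$ I would compute a tree decomposition $(\mathcal T_j,\beta_j)$ of width $g(k)$ in time linear in $|V(H_j)|+|E(H_j)|$ using a standard treewidth-computation routine on planar graphs of bounded treewidth. Letting $B$ be the union of all boundary vertices across the $H_j$'s (so $|B|=2^{O(k)}$), I would produce $(\mathcal T,\beta)$ for $G$ by (i) enlarging every bag of every $\beta_j$ by the set $B$, (ii) introducing a fresh root node $t$ with $\beta(t)=B$, and (iii) attaching the roots of the $\mathcal T_j$ as children of $t$. The three tree-decomposition axioms are verified by noting that every vertex of $G$ appears in some $H_j$, every edge of $G$ lies entirely within some $H_j$ (the cuts of Lemma~\ref{lem:cut-reduction} pass only through vertices, not edges), and any vertex shared by several $H_j$'s is a boundary vertex and therefore lies in $B$, keeping its set of containing bags connected via $t$. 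The resulting width is $g(k)+|B|=2^{O(k)}$.

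The main obstacle I anticipate is the bookkeeping around the interaction between the two reductions, in two aspects. First, one has to check carefully that the vertices removed inside each $H_j$ by Lemma~\ref{thm:disk_case} are irrelevant with respect to the \emph{original} instance $(G,T,k)$ and not merely with respect to the subproblem on $H_j$; this is exactly where niceness is used, and it must also hold after earlier removals since nice subgraphs of nice subgraphs remain nice. Second, making the runtime genuinely $2^{O(k)}n$ (rather than, say, $2^{O(k)}n\log n$) requires that the partition into layers $V_i$ be produced in a single linear sweep using a doubly connected edge list of each $H_j$, and that the width-$g(k)$ tree decomposition of each pruned $H_j$ be produced in time linear in its size; the latter relies on a linear-time constant-approximation treewidth algorithm on planar graphs of bounded treewidth, which one can take as standard. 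Assembling these pieces then yields the theorem and, as noted, matches the lower bound of Theorem~3 in~\cite{adler2019lower}.
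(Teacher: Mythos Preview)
Your proposal is correct and follows essentially the same approach as the paper: invoke Lemma~\ref{lem:cut-reduction} to carve $G$ into $2^{O(k)}$ nice $1$-punctured-plane subgraphs with $2^{O(k)}$ boundary vertices each, prune each via Lemma~\ref{thm:disk_case}, and glue the resulting width-$g(k)$ tree decompositions by adding the full boundary set $B$ to every bag and hanging all the subtree roots below a new root with bag $B$. Your additional remarks on why niceness transfers irrelevance from $H_j$ to $G$ and on verifying the tree-decomposition axioms spell out details the paper leaves implicit, but the argument is the same.
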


\section{Cutting the Plane into $O(k)$ Framed Rings}\label{sec:cutting}
For two concentric cycles $I$ and $I'$ such that $I$ is contained in the interior of $I'$, we let 
$\textsf{Ring}(I,I')$ be the subgraph of $G$
contained in the exterior of $I$ and the interior of $I'$ (including $I$ and $I'$). 
A ring $\ring(I,I')$ is said to be \emph{terminal-free} if no terminal is a vertex of $\ring(I,I')$. 
Let $\mathcal P$ be a $T$-linkage of $G$.  
A \emph{segment} of $\mathcal P$ in $\ring(I,I')$ is a maximal subpath of a path $P$ of $\mathcal P$ contained in $\textsf{Ring}(I,I')$. 
Note that both endpoints of a segment of $\mathcal P$ lie on $V(I)\cup V(I')$ 
if $\textsf{Ring}(I,I')$ is terminal-free. 
A segment in a terminal-free ring 
is called a \emph{traversing} segment if 
it has one endpoint in $V(I)$ and one endpoint in $V(I')$. 
A segment of $\mathcal P$ is called a \emph{visitor}, otherwise.
That is, a visitor has both endpoints in $V(I)$ 
    or has both endpoints in $V(I')$. 
    We call a visitor an \emph{inner visitor} if its endpoints lie in $V(I)$, and an \emph{outer visitor}, otherwise. 
    See Figure~\ref{fig:layers}(c).

\medskip 
In this section, we subdivide the plane with respect to a set $\mathcal C$ of $O(k)$  concentric cycles of $G$, which will be called the \emph{subframes}, such that there is a $T$-linkage $\mathcal P$  satisfying the following: For any two consecutive cycles $C$ and $C'$ of $\mathcal C$ such that
$\ring(C,C')$ is terminal-free, 
a segment of $\mathcal P$ in $\textsf{Ring}(C,C')$ is a traversing segment. Furthermore, the number of the segments of $\mathcal P$ in $\textsf{Ring}(C,C')$ is $2^{O(k)}$.
However, the total complexity of the subframes might be large. To handle this issue, 
we compute two curves $B$ and $B'$ of total complexity $2^{O(k)}$ in $\textsf{Ring}(C,C')$, which we will call the \emph{frames}, so that 
the radial distance between $C$ and $B$ (and $C'$ and $B'$) is $2^{O(k)}$. 
For illustration, see Figure~\ref{fig:frames}.

\subsection{Rings and Tight Concentric Cycles from the Outer Face}\label{sec:rings}
In this subsection, we construct a sequence $\mathcal I=\langle I_1,\ldots, I_{n'} \rangle$ of concentric 
cycles containing $F^*$ in their interiors in a \emph{breadth-first} fashion,
which will be used to define the frames and subframes in the following subsections,
where
$F^*$ denotes an innermost face of $G$, that is, $G$ is a face farthest from the outer face with respect to the radial distance. 
We can compute $F^*$ in linear time using a variant of the breadth-first search on $G$. 

To do this, we first decompose $V$ into $V_0,\ldots, V_{n'}$ such that $V_i$ consists of all vertices of $G$ whose radial distance from $F^*$
is exactly $i$. 
For each index $i\in[n']$, 
consider the subgraph of $G$ induced by $V_0\cup V_1\cup \ldots V_i$. 
The subgraph is not necessarily connected, but it has a unique outer face bouned by a single cycle.  
We let $I_i$ be the boundary (cycle) of the outer face of the subgraph. See Figure~\ref{fig:layers}(a). 
Let $\mathcal I=\langle I_1,\ldots, I_{n'}\rangle$. 
We say a terminal-free ring $\textsf{Ring}(I_i,I_j)$ a \emph{maximal terminal-free ring} if 
$\textsf{Ring}(I_{i'},I_{j'})$ contains a terminal for any two indices $i',j'$ with 
$[i,j]\subsetneq [i',j']$.
Also, a ring $\textsf{Ring}(I_i,I_j)$ is said to be \emph{thick} if  $|i-j|>2^{10k}$. 


\begin{figure}
    \centering
    \includegraphics[width=0.9\textwidth]{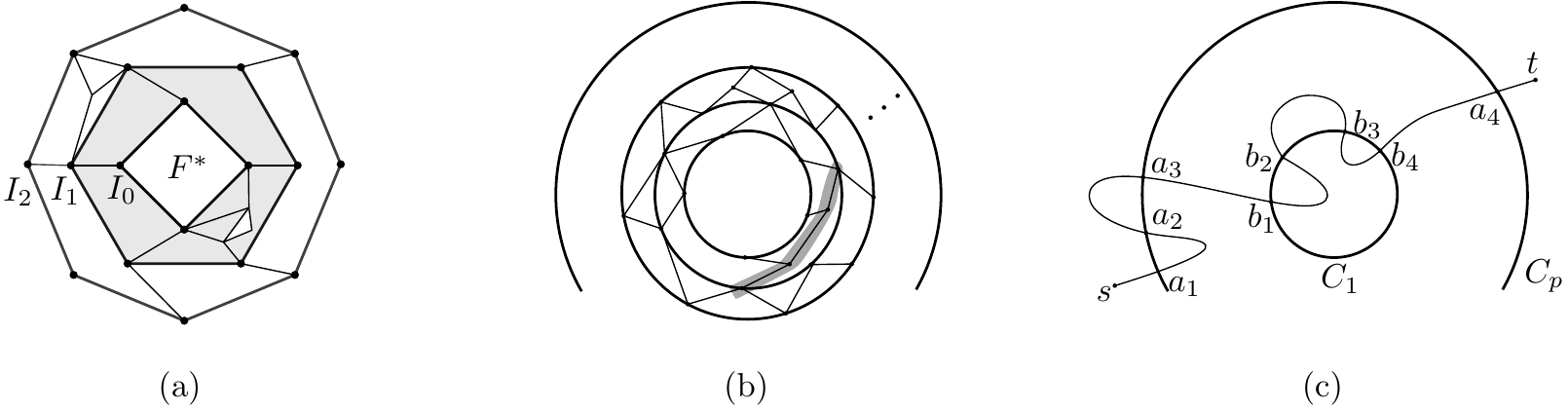}
    \caption{\small (a) 
    The vertices of $I_0, I_1$ and $I_2$ are marked with disks.
    (b) This sequence of concentric cycles is not tight. The witness of non-tightness is the path colored gray. The path violates the third condition of the tightness. 
    (c) The path in the figure that connecting $s$ and $t$ has
    four segments in $\textsf{Ring}(C_1,C_p)$: the outer visitor connects $a_1$ and $a_2$, the inner visitor connects $b_2$ and $b_3$, and the two traversing segments connecting $b_1$ and $a_3$, and $a_4$ and $b_4$, respectively.      }
    \label{fig:layers}
\end{figure}

For each thick maximal terminal-free ring  $\textsf{Ring}(I_i,I_j)$, we compute two 
cycles $C$ and $C'$ in $\textsf{Ring}(I_i,I_j)$, which will be called the subframes,
such that every segment of $\mathcal P$ in $\textsf{Ring}(I_i,I_j)$
intersects $C$ and $C'$ exactly once 
for a specific $T$-linkage $\mathcal P$. 
For this purpose, we define a \emph{tight} sequence of concentric cycles in 
a thick maximal terminal-free ring $\textsf{Ring}(I_i,I_j)$. 
Recall that $G$ is drawn in the plane, and thus a cycle forms a closed curve in the plane. 
For a cycle $C$ of $G$, we use $\textsf{cl}(C)$ to denote the region in the plane enclosed by $C$ including $C$ itself. 
Also, we use $\textsf{int}(C)$ to denote the open region
enclosed by $C$ excluding $C$.  
We say a sequence $\mathcal C=\langle C_1,\ldots, C_{p}\rangle$ 
of concentric cycles in $\textsf{Ring}(I_i,I_j)$ is \emph{tight} if
\begin{itemize}
    \item $C_1 = I_i$, $C_p= I_j$, 
    \item the radial distance between $C_1$ and $C_p$ is exactly $p-1$, and 
    \item $\textsf{cl}(C_{i+1})\setminus \textsf{cl}(C_i)$ does not contain any cycle $C$ 
with $\textsf{cl}(C_i) \subsetneq  \textsf{cl}(C) \subsetneq \textsf{cl}(C_{i+1})$
for $i\in[1, 2^{5k}]$, 
    \item $\textsf{int}(C_{i+1})\setminus \textsf{int}(C_i)$ does not contain any cycle $C$ 
with $\textsf{cl}(C_i) \subsetneq  \textsf{cl}(C) \subsetneq \textsf{cl}(C_{i+1})$
for $i\in[p-2^{5k},p]$.
\end{itemize}

For illustration, see Figure~\ref{fig:layers}(b). 
Note that the third and fourth conditions are symmetric.

\begin{lemma}
    For each thick maximal terminal-free ring, there is a tight sequence of concentric cycles in the ring.
    Moreover, it can be constructed in time linear in the complexity of the ring. 
\end{lemma}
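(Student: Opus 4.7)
My plan is to construct $\mathcal{C}$ in three stages, exploiting the thickness hypothesis. Since $C_1=I_i$, $C_p=I_j$, and $\rdist(I_i,I_j)=j-i$, the second tightness condition forces $p=j-i+1$, so the number of cycles is already determined and the only real task is to place $p-2$ interior cycles satisfying the two tightness conditions at the endpoints.

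For Stage 1 (inner block), I set $C_1:=I_i$ and, for each $t=1,\ldots,2^{5k}$, take $C_{t+1}$ to be any cycle of $G$ inside $\ring(I_i,I_j)$ with $\textsf{cl}(C_t)\subsetneq \textsf{cl}(C_{t+1})$ whose enclosed region is minimal under set inclusion. Such a minimum exists because $I_j$ itself is a candidate and there are only finitely many cycles in the ring; minimality directly yields the third tightness condition. Stage 2 (outer block) is symmetric: I set $C_p:=I_j$ and, for $t=p,p-1,\ldots,p-2^{5k}+1$, pick $C_{t-1}$ to be a cycle strictly enclosed by $C_t$ with $\textsf{cl}(C_{t-1})$ maximal; this yields the fourth condition.

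For Stage 3 (middle), it remains to place $p-2(2^{5k}+1)$ concentric cycles strictly between $C_{2^{5k}+1}$ and $C_{p-2^{5k}}$. The key estimate is that each minimal-enclosing step advances the radial distance by at most one: otherwise some face of the open annulus between $C_t$ and $C_{t+1}$ would fail to be adjacent to $C_t$, and the inner boundary of its BFS-component would furnish a cycle strictly between them, contradicting minimality. Consequently $\rdist(C_{2^{5k}+1},C_{p-2^{5k}})\ge (j-i)-2^{5k+1}$, and the thickness hypothesis $j-i>2^{10k}$ guarantees sufficiently many BFS layers in the sub-ring $\ring(C_{2^{5k}+1},C_{p-2^{5k}})$ to realize the required count. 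A single BFS sweep in this sub-ring, analogous to the definition of the $I_t$'s, produces the middle cycles (trimming to the required count if more layers appear).

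All three stages can be implemented by maintaining a face-by-face BFS frontier in $\radgraph$ that is advanced outward from the current cycle until it first encloses a strictly larger region, at which point the next cycle is extracted and the frontier is restarted; a standard amortized argument charges each face $O(1)$ work, for $O(|\ring(I_i,I_j)|)$ total time. The main obstacle I foresee is the topological argument underpinning the ``radial-distance step at most one'' claim that makes the Stage 3 counting tight: rigorously ruling out an intermediate cycle between $C_t$ and $C_{t+1}$ requires careful bookkeeping about how the BFS frontier in $\radgraph$ interacts with the face-cycles of $G$, and a compatible implementation that also certifies the minimality of the extracted cycle in linear amortized time.
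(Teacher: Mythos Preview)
Your approach has a genuine gap at its foundation. You assert that $\rdist(I_i,I_j)=j-i$ and hence that $p=j-i+1$ is forced, but this is unjustified: the cycles $I_r$ are defined via radial distance from the innermost face $F^*$, and there is no reason the Hausdorff radial distance between $I_i$ and $I_j$ should equal the difference of their indices. Since your three-stage construction is organised around this predetermined $p$ (Stage~3 must produce exactly $p-2(2^{5k}+1)$ cycles, handled by an ad~hoc ``trimming''), the error propagates through the whole argument.

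Even setting that aside, the ``minimal-enclosing cycle'' you take in Stages~1 and~2 is not the same object as a face-BFS layer. A single edge lying just outside $C_t$ between two nearby vertices of $C_t$ already gives a minimal $C_{t+1}$ that shares almost all of its vertices with $C_t$; such a step need not advance the Hausdorff radial distance by one. You argue only for ``at most one'' (and you yourself flag the underlying topological claim as an obstacle), but for condition~(ii) you need \emph{exactly} one per step across the whole sequence, otherwise the total number of cycles you produce will exceed $\rdist(I_i,I_j)+1$ and tightness fails. Your minimality definition buys conditions~(iii)--(iv) cheaply but sacrifices control over~(ii).

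The paper's proof sidesteps all of this by not fixing $p$ in advance. It runs a face-BFS outward from $I_i$ and simultaneously inward from $I_j$: at each outward step $C_{t+1}$ is the outer boundary of the union of all not-yet-visited faces incident to $C_t$ (symmetrically on the inward side), and the two sweeps terminate when every face has been visited. The value of $p$ is simply whatever the construction produces; condition~(ii) then follows because each BFS step advances the radial distance by exactly one by construction, and conditions~(iii)--(iv) follow because a strictly intermediate cycle would separate some vertex of $C_{t+1}$ from $C_t$, contradicting that every vertex of $C_{t+1}$ lies on a face incident to $C_t$. Your own implementation sketch (``face-by-face BFS frontier advanced outward'') is essentially this construction already; you should take it as the \emph{definition} of the $C_t$, discard the minimal-cycle formalism and the three-stage bookkeeping, and the proof becomes short and the linear-time bound immediate.
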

\begin{proof}
    We construct tight concentric cycles by using a breadth-first search through faces in two opposite directions:
    outwards from $C_1=I_i$ and inwards from $C_p=I_j$. 
    For the traversal going outwards from $C_1$, we  construct the concentric cycles starting from $C_1$: 
    collect all the faces incident to $C_i$ but not visited so far, and 
    let $C_{i+1}$ be the outer boundary cycle of the union of these faces. 
    For the inward step, we can choose the cycles symmetrically from $C_p$. 
    The process terminates when all faces are visited. 

    Now we show that the cycles satisfy the above conditions. 
    For $i\in[1, 2^{5k}]$, we assume that $\textsf{cl}(C_{i+1})\setminus \textsf{cl}(C_i)$ contains a cycle $C$ 
    with $\textsf{cl}(C_i) \subsetneq  \textsf{cl}(C) \subsetneq \textsf{cl}(C_{i+1})$. 
    Then $\textsf{cl}(C_{i+1}) \setminus \textsf{cl}(C)$ and $\textsf{cl}(C_i)$ is separated by the inclusive relation. 
    Thus, the radial distance between a vertex $v \in \textsf{cl}(C_{i+1}) \setminus \textsf{cl}(C)$ and $C_i$ is larger than one. 
    However, all vertices of $C_{i+1}$ are derived from the faces incident to $C_i$, so there exists a vertex $u\in C_i$ with $\textsf{rdist}(u, v) = 1$.
    This makes a contradiction, thus there is no vertex $v \in \textsf{cl}(C_{i+1}) \setminus \textsf{cl}(C)$.
    This means that there is no cycle $C$ with $\textsf{cl}(C_i) \subsetneq  \textsf{cl}(C) \subsetneq \textsf{cl}(C_{i+1})$. 
    For $i \in [p-2^{5k}, p]$, we can define $C_i$ symmetrically so that satisfies the fourth condition.
    
    For the outward traversal, a vertex $v \in C_{i}$ is visited in the $(i-1)$ steps of the traversal. 
    Thus, a vertex $v \in C_{i}$ has the radial distance $(i-1)$ from $C_1$. 
    Symmetrically, any vertex $v$ of a cycle $C_{i}$ constructed from the inward traversal 
    has the radial distance $(p-i)$ from $C_p$. 
    Also, there is a face incident to the two last concentric cycles, one from the inward traversal and one from outward traversal.  
    Therefore, the radial distance between the two last concentric cycles is one, and 
    the radial distance between $C_1$ and $C_p$ is $p-1$. 
    
    Clearly, this process takes time linear in the complextiy of $\ring(I_i,I_j)$. 
\end{proof}


\subsection{Monotonicity of a $T$-Linkage in Each Thick Maximal Terminal-Free Ring}\label{sec:monotonicity}
We say a $T$-linkage $\mathcal P$ is $\mathcal{C}$-\emph{cheap} if 
it uses a smallest number of edges of $E(G)\setminus (\cup_{C\in \mathcal C} E(C))$
among all $T$-linkages. 
Let $\mathcal P$ be a $\mathcal{C}$-{cheap} $T$-linkage. 
In this section, we analyze the interaction between $\mathcal P$ and $\mathcal C$ as stated in Lemma~\ref{lem:visitors} and Corollary~\ref{lem:monotone}. 
To do this, we need the following lemmas. Here, for two indices $\alpha$ and $\beta$ in $[p]$,  
a \emph{cut} for $V(C_\alpha)$ and $V(C_\beta)$ in a thick maximal terminal-free ring $\textsf{Ring}(I_i,I_j)$
is defined as a set $W$ of vertices of a thick maximal terminal-free ring $\textsf{Ring}(I_i,I_j)$ such that
the removal of $W$ from $\textsf{Ring}(I_i,I_j)$ disconnects 
every pair $(v_\alpha,v_\beta)$ of vertices with $v_\alpha\in C_\alpha$ and $v_\beta\in C_\beta$. 
Note that $W$ forms a \emph{noose}: a simple closed
curve in the plane intersecting $G$ only at the vertices of $W$. 
In the proof of the following lemma, we use the fact that the treewidth of $G$ (and thus the treewidth of $\textsf{Ring}(I_i,I_j)$) is $2^{O(k)}$.

\begin{figure}
    \centering
    \includegraphics[width=0.8\textwidth]{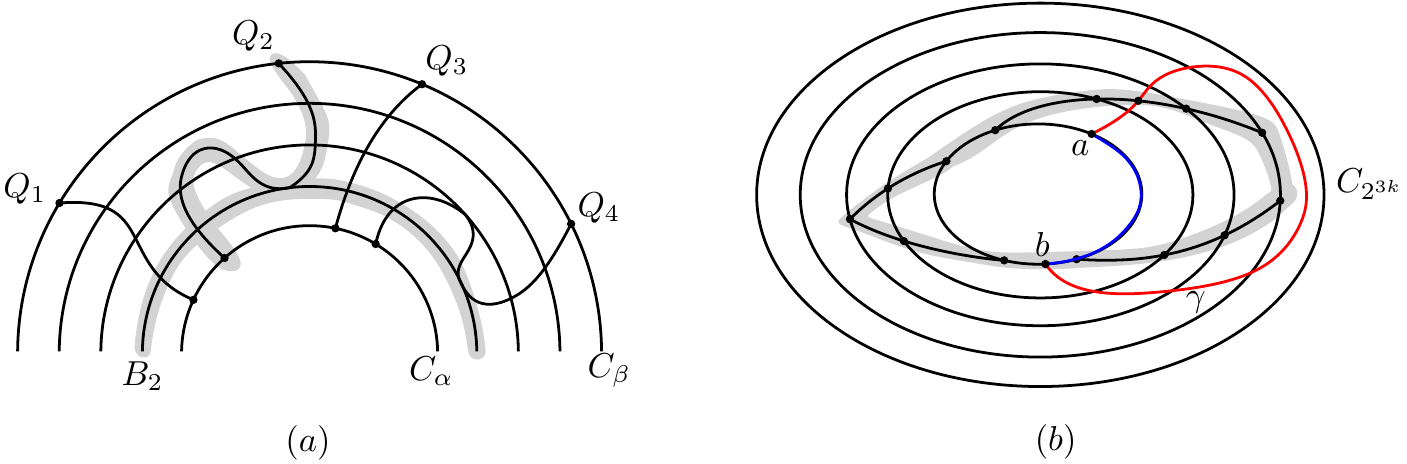}
    \caption{\small (a) $B_2$ is the union of $Q_2$ and $C_{\alpha+2}$, which is the subgraph of $G$ colored gray. (b) $S^o$ is the noose colored gray. It passes through at most $2^{3k}$ vertices of $G$. An inner visitor $\gamma$ (colored red) has two endpoints $a$ and $b$, and its base arc is  colored blue.}
    \label{fig:bramble}
\end{figure}
\begin{lemma}\label{lem:min-cut}
    A minimum cut for $V(C_\alpha)$ and $V(C_\beta)$ in $\textsf{Ring}(C_\alpha,C_\beta)$ has size at most $2^{3k}$
    if $|\alpha-\beta|> 2^{3k}$.
\end{lemma}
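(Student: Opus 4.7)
The plan is to argue by contradiction: I would assume the minimum $V(C_\alpha)$--$V(C_\beta)$ cut in $\ring(C_\alpha,C_\beta)$ has size strictly larger than $2^{3k}$ and then build a bramble in $G$ whose order exceeds the treewidth bound $2^{O(k)}$ established in Section~\ref{sec:irrelevant}.

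First, by Menger's theorem applied to $\ring(C_\alpha,C_\beta)$, there exist $m := 2^{3k}+1$ pairwise vertex-disjoint paths $Q_1,\ldots,Q_m$ in $\ring(C_\alpha,C_\beta)$ with one endpoint on $V(C_\alpha)$ and the other on $V(C_\beta)$. Since all cycles $C_\gamma$ with $\alpha\leq\gamma\leq\beta$ are concentric and $Q_i$ joins the innermost to the outermost of them, every $Q_i$ intersects every intermediate cycle. From the hypothesis $|\alpha-\beta|>2^{3k}$ I can also pick $m$ distinct intermediate cycles, which I denote $D_1,\ldots,D_m$.

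Next I would construct a grid-like bramble by setting, for each pair $(i,j)\in[m]\times[m]$,
\[
B_{i,j} \;=\; V(Q_i)\cup V(D_j).
\]
Each $B_{i,j}$ induces a connected subgraph of $G$ because $Q_i$ meets $D_j$, and any two sets $B_{i,j}$, $B_{i',j'}$ touch because the path $Q_i$ meets the cycle $D_{j'}$ (and $Q_{i'}$ meets $D_j$). Hence the collection $\{B_{i,j}\}_{i,j\in[m]}$ is a bramble of $G$. To bound its order, consider a hitting set $S$ and let $I=\{i:S\cap V(Q_i)\neq\emptyset\}$ and $J=\{j:S\cap V(D_j)\neq\emptyset\}$. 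A pair $(i,j)$ is hit iff $i\in I$ or $j\in J$, so covering all pairs forces $I=[m]$ or $J=[m]$. Since the $Q_i$ are pairwise vertex-disjoint (by Menger) and the concentric cycles $D_j$ are pairwise vertex-disjoint by definition, either alternative yields $|S|\geq m$. Therefore the bramble has order at least $m=2^{3k}+1$.

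Finally, by the treewidth/bramble duality the treewidth of $G$ is at least $m-1=2^{3k}$; but after the preprocessing of Section~\ref{sec:irrelevant} we have treewidth $2^{O(k)}$, and in particular strictly less than $2^{3k}$ once $k$ is large enough (small values can be absorbed into the constant). This contradicts the assumption that the minimum cut exceeds $2^{3k}$, completing the proof. The main subtlety I anticipate is the lower bound on the bramble order in the previous paragraph: one must rule out that a hitting set cleverly reuses vertices sitting simultaneously on some $Q_i$ and some $D_j$ to beat the bound $m$, and the key point is that the vertex-disjointness of the rows (the $Q_i$'s) and of the columns (the $D_j$'s) forces a hitting set that ``covers all rows'' or ``covers all columns'' to contain at least $m$ distinct vertices.
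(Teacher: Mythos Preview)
Your proof is correct and follows essentially the same approach as the paper: assume the cut is too large, invoke Menger to get many disjoint $C_\alpha$--$C_\beta$ paths, combine them with the intermediate concentric cycles into a bramble, and contradict the treewidth bound via the bramble/treewidth duality. The only difference is cosmetic: the paper uses the ``diagonal'' bramble $B_s=V(Q_s)\cup V(C_{s+\alpha-1})$ and argues that no vertex lies in three elements (yielding order $\geq 2^{3k-1}$), whereas you use the full grid bramble $B_{i,j}=V(Q_i)\cup V(D_j)$ with the rows/columns covering argument (yielding order $\geq 2^{3k}+1$); your variant gives a slightly sharper constant but the argument is otherwise identical.
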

\begin{proof}
    Assume to the contrary that a minimum cut for $V(C_\alpha)$ and $V(C_\beta)$ in $\textsf{Ring}(C_\alpha,C_\beta)$ is larger than $2^{3k}$ for two indices $\alpha$ and $\beta$ with $|\alpha-\beta| >2^{3k}$.
    Let $\mathcal Q=\langle Q_1,\ldots, Q_\ell\rangle$ be a maximum-cardinality set of 
    internally vertex-disjoint paths between vertices of $C_\alpha$ and vertices of $C_\beta$. 
    By the Menger's theorem, the size of $\mathcal Q$ is equal to the size of a minimum cut for $V(C_\alpha)$ and $V(C_\beta)$, which is larger than $2^{3k}$ by the assumption.
    Then we can construct a \emph{bramble} of order $2^{3k}$ in $\textsf{Ring}(I_i,I_j)$ as follows.
    A bramble for an undirected graph $G$ is a family of connected vertex sets of $G$ 
    that all touch each other. 
    Two vertex sets $A$ and $B$ \emph{touch} if they have a vertex in common, or 
    a vertex of $A$ is adjacent to a vertex of $B$ in $G$. 
    For each index $s\in [1,2^{3k}]$, 
    we let $B_{s}$ be the union of $V(Q_s)$ and $V(C_{s+\alpha-1})$. 
    See Figure~\ref{fig:bramble}(a). 
    Note that $B_{s}$ is a connected vertex set since 
    each path of $\mathcal Q$ crosses each cycle in $\{C_\alpha,\ldots,C_\beta\}$ at least once. 
    Moreover, any two such vertex sets $B_{s}$ and $B_{s'}$ touch each other.
    To see this, notice that $Q_s$ intersects $C_{s'+\alpha-1}$, and thus $Q_s$ and $C_{s'+\alpha-1}$
    have a vertex in common. 
    Since $V(Q_s) \subseteq B_{s}$ and $V(C_{s'+\alpha-1}) \subseteq B_{s'}$ by construction, 
    $B_{s}$ touches $B_{s'}$. Therefore, the family $\mathcal B$ of $B_{s}$ for all indices 
     $s\in [1,2^{3k}]$ is a bramble.
    
     The \emph{order} of a bramble is defined as the smallest size of a hitting set of the elements of the bramble. To show that the order of $\mathcal B$ is $2^{3k-1}$, 
    observe that no three elements $B_s$, $B_{s'}$, and $B_{s''}$ in $\mathcal B$ 
    share a common vertex. 
    This observation holds because 
    $Q_s, Q_{s'}$ and $Q_{s''}$ are pairwise vertex-disjoint paths, and $C_{s+\alpha-1}, C_{s'+\alpha-1}$ and $C_{s''+\alpha-1}$ are pairwise vertex-disjoint cycles. 
    Therefore, any vertex of $G$ hits at most two elements of $\mathcal B$. 
    This implies that the order of $\mathcal B$ is at least the half of $|\mathcal B|$, which is $2^{3k-1}$. 
    
    It is known that $G$ has a bramble of order at least $k'$ if and only if it has treewidth at least $k'-1$~\cite{PBook}. In our case, $\mathcal B$ has order $2^{3k-1}$, which implies that
    $G$ has treewidth at least $2^{3k-1}-1$. However, due to the preprocessing we made in Section~\ref{sec:irrelevant},  $G$ has treewidth $O(k^{1.5}2^{k})$. 
    This makes a contradiction, and thus the size of $\mathcal Q$ is at most $2^{3k}$, and the size of a minimum cut for $V(C_\alpha)$ and $V(C_\beta)$ in $\textsf{Ring}(I_i,I_j)$ is at most $2^{3k}$.
\end{proof}

\begin{lemma}\label{lem:visitors} 
For any index $r\in (2^{4k}, p-2^{4k})$,
 no visitor of $\mathcal P$ in $\textsf{Ring}(I_i,I_j)$ intersects $C_r$.
\end{lemma}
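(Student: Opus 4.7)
My plan is to argue by contradiction. Suppose some visitor $\gamma$ of $\mathcal P$ in $\ring(I_i,I_j)$ intersects $C_r$ for some $r\in(2^{4k},p-2^{4k})$; by the symmetric roles of $C_1$ and $C_p$ in the tight sequence $\mathcal C$, I may assume $\gamma$ is an inner visitor with endpoints $a,b\in V(C_1)$. My goal is to construct a new $T$-linkage $\mathcal P'$ that coincides with $\mathcal P$ outside of $\gamma$ but uses strictly fewer edges of $E(G)\setminus\bigcup_{C\in\mathcal C}E(C)$, contradicting the $\mathcal C$-cheapness of $\mathcal P$.

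The first step is to locate a short noose trapping the ``deep'' part of $\gamma$. Since $2^{4k}-1>2^{3k}$, Lemma~\ref{lem:min-cut} applied to $(C_1,C_{2^{3k}+2})$ yields a minimum cut of size at most $2^{3k}$ in $\ring(C_1,C_{2^{3k}+2})$, which traces a noose $S^o$ of complexity at most $2^{3k}$ separating $C_1$ from $C_{2^{3k}+2}$ and hence from $C_r$. Because $\gamma$ starts and ends on $C_1$ and reaches $C_r$, it must cross $S^o$ at least twice; let $u,v$ be its first and last vertices on $S^o$ and write $\gamma=\gamma_{au}\cdot\gamma_{uv}\cdot\gamma_{vb}$. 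The subwalk $\gamma_{uv}$ exits the interior of $S^o$, reaches $C_r$, and returns, so it traverses each of the cycles $C_{2^{3k}+3},\ldots,C_r$ at least twice. Each such traversal uses at least one edge of $E(G)\setminus\bigcup_{C\in\mathcal C}E(C)$ (an edge connecting a vertex strictly inside $C_\ell$ to one strictly outside cannot belong to any $E(C_m)$, since the cycles of $\mathcal C$ are pairwise disjoint and nested), and these edges are pairwise distinct because $\gamma_{uv}$ is a simple path, yielding at least $2(r-2^{3k}-2)>2^{4k}$ non-subframe edges in $\gamma_{uv}$.

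Next, I would replace $\gamma_{uv}$ by the shorter sub-arc $\alpha$ of $S^o$ connecting $u$ to $v$; since $|V(S^o)|\le 2^{3k}$, the arc $\alpha$ contributes at most $2^{3k}$ non-subframe edges. To carry out this swap inside $\mathcal P$, I choose $S^o$ among all minimum cuts between $C_1$ and $C_{2^{3k}+2}$ so as to minimize its intersection with the other $k-1$ paths of $\mathcal P$. By planarity, and because $\ring(I_i,I_j)$ is terminal-free, any other path of $\mathcal P$ meeting the region enclosed by $\gamma_{uv}\cup\alpha$ must enter and leave through $\alpha$; the resulting conflicts can be resolved by local reroutings whose total cost is $O(k\cdot 2^{3k})$ non-subframe edges. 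Even after absorbing these costs, the net saving is at least $2^{4k}-O(k\cdot 2^{3k})>0$, yielding a $T$-linkage $\mathcal P'$ with strictly fewer non-subframe edges than $\mathcal P$, the desired contradiction.

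The delicate point, which I expect to be the main obstacle, is the vertex-disjointness of the rerouting: one must show that $\alpha$ together with the associated local adjustments can be assembled into a valid $T$-linkage disjoint from every unchanged $P_\ell\in\mathcal P$. The key leverage is the drastic asymmetry between the small complexity of $S^o$ ($\le 2^{3k}$) and the huge number of non-subframe edges wasted by $\gamma_{uv}$ ($>2^{4k}$). Translating this budget into an explicit edge-swap argument, while carefully choosing $S^o$ from the family of small cuts provided by the bramble/treewidth argument of Lemma~\ref{lem:min-cut}, is the crux of the proof.
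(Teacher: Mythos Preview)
Your proposal identifies the right ingredients (the short noose $S^o$ from Lemma~\ref{lem:min-cut}, the $\mathcal C$-cheapness, the symmetry between inner and outer visitors), but the rerouting step is a genuine gap, and your own closing paragraph already senses this. Concretely: the bound ``$O(k\cdot 2^{3k})$'' for the cost of resolving conflicts is not justified. The number of segments of $\mathcal P$ crossing $S^o$ is bounded by $|S^o|\le 2^{3k}$, not by $k-1$ (a single path $P_\ell$ can contribute many segments in the ring), so the naive conflict budget is $2^{3k}\cdot 2^{3k}$, which already swamps the $2^{4k}$ savings. More importantly, you give no mechanism for making the reroutes along $\alpha\subset S^o$ vertex-disjoint from one another and from the unchanged segments; ``local reroutings'' is exactly the part that needs an argument, and minimizing the choice of $S^o$ over cuts does not by itself supply one.

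The paper's proof avoids this entirely by rerouting along the concentric cycles $C_\ell$ themselves (whose edges are free for the $\mathcal C$-cheapness count) rather than along the noose. The key device is an \emph{order} on the at most $2^{3k}$ inner visitors that meet $S^o$: the order of $\gamma$ is the number of other such visitors whose base arc on $C_1$ is nested inside the base arc of $\gamma$. One then proves by induction on $x$ that an order-$x$ visitor cannot reach $C_{2^{3k}+x+1}$. For the inductive step, the region bounded by $\gamma$ and its base arc contains no traversing segment (such a segment would be trapped and could not reach $C_p$ without crossing $\gamma$) and, by the inductive hypothesis, every visitor nested inside stays strictly below $C_{2^{3k}+x}$. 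Hence the portion of $\gamma$ outside $C_{2^{3k}+x}$ is free of all other segments, and one may replace each maximal such portion by an arc of $C_{2^{3k}+x}$. This reroute is automatically vertex-disjoint from the rest of $\mathcal P$ and strictly decreases the number of non-$\mathcal C$ edges, contradicting cheapness. Since the order is at most $2^{3k}$, no inner visitor reaches $C_{2^{4k}+1}$. The induction on nesting depth is what buys disjointness for free; your direct swap along $S^o$ has no analogous structure to exploit.
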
 
\begin{proof}
    By Lemma~\ref{lem:min-cut}, 
   a minimum cut $S$ for $V(C_1)$ and $V(C_{2^{3k}})$ has size
   at most $2^{3k}$. 
    Then there is a  \emph{noose} $S^{o}$
    with $S^{o}\cap V=S$.
    A noose is a closed simple curve in the plane intersecting 
     $G$ only at vertices. 
    Similarly, let $R$ be a minimum cut for 
     $V(C_{p-2^{3k}})$ and $V(C_{p})$ in  
    $\textsf{Ring}(I_i,I_j)$, and let $R^o$ be a noose with $R^o\cap V=R$. See Figure~\ref{fig:bramble}(b). 

We prove the lemma
    only for the inner visitor since the outer visitor can be proved symmetrically. (Note that the tightness of a sequence of concentric cycles is defined symmetrically
    for $C_1$ and $C_p$.) 
    For an inner visitor not intersecting $S^o$,
    the claim holds immediately since $S^o$ lies in the interior of $C_{2^{4k}}$. 
    The number of inner visitors intersecting $V(S^o)$ 
    is at most $2^{3k}$
    since the visitors are internally vertex-disjoint, and the size of $V(S^o)$ is $2^{3k}$. 
    
    
    For an inner visitor $\gamma$ intersecting $S^o$, 
    let $a$ and $b$ be its endpoints in $V(C_1)$. There are two arcs of $C_1$  connecting $a$ and $b$. Exactly one of them forms a bounded region not containing the interior of $C_1$ together with $\gamma$. See Figure~\ref{fig:bramble}(b). We call such an arc the \emph{base arc} of $\gamma$. 
    We define the \emph{order} of an inner visitor $\gamma$
    intersecting $S^o$ as the number of other inner visitors
    intersecting $S^o$ 
    whose base arcs are contained in the base arc of $\gamma$. 
    
    \begin{figure}
        \centering
        \includegraphics[width=0.8\textwidth]{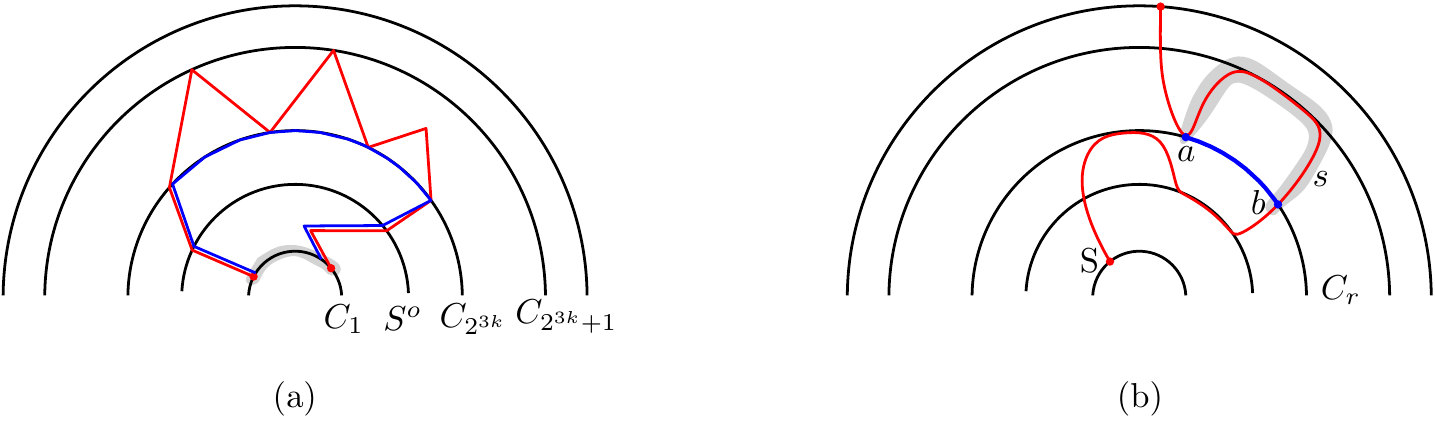}
        \caption{\small (a) The red path $\gamma$ is an inner visitor. If its order is zero, no segment of $\mathcal P$ intersecting $S^o$ has endpoints on the base of $\gamma$. Therefore, the blue path does not intersect any other segments of $\mathcal P$, and thus we can replace the red path with the blue path.
        (b) $S$ is a traversing segment. The subpath of $S$ lying between $a$ and $b$ is an $r$-horn. The base of the $r$-horn is colored blue.}
        \label{fig:rerouting}
    \end{figure}

    To prove the lemma,
    we show that an inner visitor intersecting $S^o$ of
    order $x$ does not intersect any of $C_{2^{3k}+x+1}, C_{2^{3k}+x+2},\ldots, C_p$. We prove this using induction on the order of the inner visitors. 
    Note that the order of an inner visitor is at most $2^{3k}$  since the number of visitors intersecting $S^o$ is at most $2^{3k}$. 
    As a base case, consider an inner visitor $\gamma$ of order $0$. 
    The region bounded by $\gamma$ and its base arc
    does not intersect by any other segments intersecting $S^o$.
    Moreover, a segment not intersecting $S^o$ does not intersect $C_{2^{3k}}$. 
    If $\gamma$ intersects $C_{2^{3k}+1}$, 
    we can reroute it so that it uses strictly fewer edges not contained in the cycles of $\mathcal C$:
    for every maximal part of $\gamma$ 
    lying outside of the interior of $C_{2^{3k}}$,
    we replace it with an arc of $C_{2^{3k}}$. See Figure~\ref{fig:rerouting}(a). 
    The resulting $\gamma$ does not intersect any other
    path in $\mathcal P$. 
    Therefore, the claim holds for the base case. 
    
    Now we assume that no inner visitor intersecting $S^o$ of order $x-1$ 
    intersects $C_{2^{3k}+x}$. Then let $\gamma$ be an inner
    visitor intersecting $S^o$ of order $x$. 
    If $\gamma$ intersects $C_{2^{3k}+x+1}$, 
    the region bounded by $\gamma$ and $C_{2^{3k}+x+1}$
    is not intersected  by any other visitors by the induction hypothesis. Similarly to the previous case,
    we can reroute it along arcs of $C_{2^{3k}+x}$ using fewer edges not contained in the cycles of $\mathcal C$ in this case. 
    This violates the fact that $\mathcal P$ is $\mathcal C$-cheap, and thus
    $\gamma$ does not intersect $C_{2^{3k}+x+1}$, and thus
    it does not intersect any of $C_{2^{3k}+x+1}, C_{2^{3k}+x+2},\ldots, C_p$.
    This proves the claim. 
    
    Since the order of an inner visitor is at most $2^{3k}$,
    no inner visitor intersects any of  $C_{2^{4k}+1} ,\ldots, C_p$. Symmetrically, 
    no outer visitor intersects any of 
    $C_{1}, C_{2},\ldots, C_{p-2^{4k}-1}$.
    Therefore, the lemma holds. 
\end{proof}

Now we focus on the traversing segments of $\mathcal P$ in $\ring(I_i,I_j)$.
We show that for a traversing segment $S$ of $\mathcal P$ and an index $r\in(2^{4k}, 2^{5k})\cup (p-2^{5k},p-2^{4k})$, the vertices of $V(S) \cap V(C_r)$ appear consecutively along $S$. 
    We only prove this for indices $r\in(2^{4k}, 2^{5k})$. The other case, that is, $r\in(p-2^{5k}, p-2^{4k})$, can be handled symmetrically. 
    To prove this claim, we define a \emph{horn} of a traversing segment $S$.
    A subpath $s$ of $S$ is called an $r$-\emph{horn} 
     if
     its endpoints lie on $C_r$, and no internal vertices of $s$ lie in $\textsf{cl}(C_r)$,
     where $\textsf{cl}(C_r)$ denotes the region in the plane bounded by $C_r$ including $C_r$. 
     See Figure~\ref{fig:rerouting}(b). 
    In this case, 
    the \emph{base} of $s$ is defined as 
    the subpath of $C_r$
    which, together with $s$, forms a bounded region lying outside of the interior of $C_r$. 
    For illustration of the following lemma, see Figure~\ref{fig:horn}(a). 
    
    \begin{figure}
        \centering
        \includegraphics[width=0.9\textwidth]{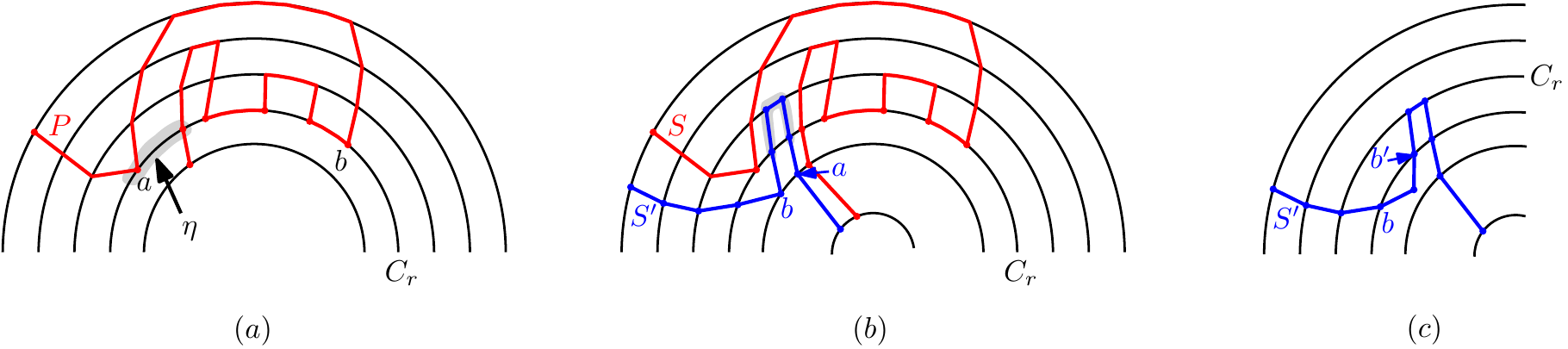}
        \caption{\small (a) The red segment $S$ is a traversing path of $\mathcal P$, and the subpath $s$ of $S$ between $a$ and $b$ is an $r$-horn. Then $\eta$ is the part of the base of $s$ neither intersecting $P$ nor contained in the bases of any other $r$-horns coming from $S$.
        (b) Since $S$ is $\mathcal C$-cheap, there is a segment $S'$ intersecting $\eta$. (c) If $b$ is contained in $V(C_r)$, the tightness of $\mathcal C$ is violated because of the subpath of $S'$ between $b$ and $b'$.}
        \label{fig:horn}
    \end{figure}

\begin{lemma}\label{lem:horn}
    Let $s$ be an $r$-horn of a traversing segment $S$ of $\mathcal P$ for 
 $r\in(2^{4k}, 2^{5k})$. Let $x$ and $y$ be the first and last
 vertices of $S$ on the base of $s$ (including the endpoints of $s$) along $S$ from the  end vertex of $S$ lying on $V(C_1)$. 
    Let $\eta$  be the subpath of the base of $s$ connecting $x$ and $y$. 
    Then there is   a traversing segment of $\mathcal P$ other than $S$
    which has an $(r-1)$-horn intersecting $\eta$. 
\end{lemma}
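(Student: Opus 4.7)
The plan is a contradiction argument driven by the $\mathcal{C}$-cheapness of $\mathcal{P}$, in two steps: first establish that some traversing segment $S'\neq S$ of $\mathcal{P}$ meets $\eta$, then extract an $(r-1)$-horn of $S'$ passing through $\eta$. In both steps the tightness of $\mathcal{C}$ will be essential, as will the planar fact that, because $C_{r-1}$ and $C_r$ are cycles of $G$, no single edge of $G$ can connect a vertex strictly outside such a cycle with a vertex strictly inside.

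For the first step, suppose for contradiction that no segment of $\mathcal{P}$ other than $S$ touches $\eta$. Because $x$ and $y$ are the first and last vertices of $S$ on the base of $s$ along $S$ from the $C_1$-endpoint, the rest of $S$ avoids the base (and in particular $\eta$), so replacing the subpath $S[x,y]$ of $S$ with $\eta$ produces a valid $T$-linkage. The replacement $\eta$ lies on $C_r\in\mathcal{C}$ and so contributes no edge to $E(G)\setminus\bigcup_{C\in\mathcal{C}}E(C)$. On the other hand, the horn $s$ contained in $S[x,y]$ begins at an endpoint in $V(C_r)$ and proceeds to an internal vertex lying strictly outside $\textsf{cl}(C_r)$; its first edge has one endpoint on $C_r$ and the other not on $C_r$, and since the cycles of $\mathcal{C}$ are pairwise vertex-disjoint, this edge lies on no cycle of $\mathcal{C}$. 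Hence the rerouting strictly decreases the count of edges outside $\bigcup_{C\in\mathcal{C}}E(C)$, contradicting $\mathcal{C}$-cheapness. By Lemma~\ref{lem:visitors}, any such $S'$ intersecting $C_r$ must be a traversing segment.

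For the second step, fix a vertex $v\in S'\cap\eta\subseteq V(C_r)$, which lies strictly outside $\textsf{cl}(C_{r-1})$. Tracing $S'$ backward from $v$ toward $S'^-\in C_1\subseteq\textsf{cl}(C_{r-1})$, let $a'$ be the first vertex encountered in $\textsf{cl}(C_{r-1})$; by the planarity argument above, $a'\in V(C_{r-1})$. A symmetric search forward along $S'$ yields $b'\in V(C_{r-1})$, provided $S'$ re-enters $\textsf{cl}(C_{r-1})$ after $v$; in that case the subpath of $S'$ from $a'$ to $b'$ is the desired $(r-1)$-horn intersecting $\eta$. The main obstacle is that $S'$ terminates at $S'^+\in C_p$ outside $\textsf{cl}(C_{r-1})$, so $S'$ may never return after $v$. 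I would handle this by choosing $v$ to lie in an excursion of $S'$ past $C_{r-1}$ that does return to $C_{r-1}$, and rule out the pathological case in which every intersection of $S'$ with $\eta$ lies in the final non-returning excursion by a tightness argument: combining a suitable subpath of $S'$ between two vertices of $C_r$ (one of which is in $\eta$) with an arc of $C_r$ yields a cycle $C$ with $\textsf{cl}(C_{r-1})\subsetneq\textsf{cl}(C)\subsetneq\textsf{cl}(C_r)$, violating the third tightness condition at index $r-1$, exactly as illustrated in Figure~\ref{fig:horn}(c).

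The hardest part of the proof will be the planar bookkeeping in this final tightness step: determining on which side of $\eta$ and of $C_{r-1}$ each constructed subpath lies, using the vertex-disjointness of $S$ and $S'$ to preclude unintended crossings of the curves involved, and verifying that the constructed cycle really does satisfy the strict inclusion $\textsf{cl}(C_{r-1})\subsetneq\textsf{cl}(C)\subsetneq\textsf{cl}(C_r)$ required to invoke the third tightness condition.
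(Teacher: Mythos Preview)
Your first step matches the paper's: the cheapness contradiction forces some other segment $S'$ to meet $\eta$, and Lemma~\ref{lem:visitors} makes $S'$ traversing.

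Your second step diverges from the paper and, as written, has a gap. The paper does not trace forward and backward from a single point $v\in S'\cap\eta$. Instead it introduces the bounded region $R$ enclosed by $\eta$ together with the subpath of $S$ from $x$ to $y$. Because $S'$ is vertex-disjoint from $S$ and both endpoints of $S'$ lie outside $R$, $S'$ can enter and leave $R$ only through $\eta$; hence $S'$ meets $\eta$ at least twice, and one may fix a maximal subpath $s'\subseteq R$ with both endpoints on $\eta$. The crucial consequence is \emph{local}: immediately before and after $s'$, the path $S'$ sits in the interior of $C_r$ (the side of $\eta$ opposite to $R$). Therefore the first $\mathcal C$-vertices $a,b$ on $S'$ just outside $s'$ are forced to lie in $V(C_{r-1})\cup V(C_r)$, and the tightness violation of Figure~\ref{fig:horn}(c) rules out $V(C_r)$ symmetrically on both sides.

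Your plan lacks this localisation. In your ``non-returning excursion'' case you want to build a tightness-violating cycle from a subpath of $S'$ between two vertices of $C_r$, one of which is in $\eta$; but you never say where the second $C_r$-vertex comes from, and without the region-$R$ observation there is no reason the relevant portion of $S'$ stays inside $\textsf{cl}(C_r)$ at all---after leaving $\eta$ it could head straight toward $C_{r+1}$ and on to $C_p$, never revisiting $C_r$. The asymmetric forward/backward tracing you propose can be repaired, but only by reintroducing exactly the $R$-argument above (e.g.\ take $v$ to be the \emph{last} point of $S'\cap\eta$ and split on whether $S'$ next enters $R$ or the interior of $C_r$); that is the missing idea.
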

\begin{proof}
    Since $\mathcal P$ is $\mathcal C$-cheap,
    $\eta$ is intersected by a segment $S'$ of $\mathcal P$. 
    Otherwise, we can reroute $S$ along $\eta$ so that
    it uses strictly fewer edges not contained in the cycles of $\mathcal C$, which contradicts the fact that
    $\mathcal P$ is $\mathcal C$-cheap. 
    Moreover, $S'$ is a traversing segment since no visitor intersects $C_r$ by Lemma~\ref{lem:visitors}. 
    Since the paths of $\mathcal P$ are vertex-disjoint, and $S'$ intersects
    the bounded region $R$ formed by
    $\eta$ and the subpath of $S$ connecting $x$ and $y$, $S'$ must intersect $\eta$ at least twice. 
    Then consider a maximal subpath $s'$ of $S'$ contained in $R$ and intersecting 
     $\eta$ only at its endpoints. See Figure~\ref{fig:horn}(b). 
    Let $a$ be the first vertex of $S'$ lying before $s'$ (from the end vertex of $S'$ lying on $V(C_1)$) that lies on the cycles in $\mathcal C$. Similarly, let $b$ be the first vertex
    of $S'$ lying after $s'$ (from the end vertex of $S'$ lying on $V(C_1)$) that lies on the cycles in $\mathcal C$.
     Note that $a$ and $b$ are  vertices of $V(C_r)\cup V(C_{r-1})$ since $S'$ is a path in $G$ connecting a vertex in $C_1$ and a vertex in $C_p$. 
    Moreover, we claim that $a$ and $b$ are vertices of
    $C_{r-1}$. 
    If $b$ is not a vertex of $C_{r-1}$, then it is a vertex of $C_r$. 
    See Figure~\ref{fig:horn}(c). 
    Let $b'$ be the endpoint of $s'$ appearing before the other endpoint of $s'$ along $S'$ from $b$.  
    Then a path $\gamma$ connecting $b$ and $b'$ along $C_r$,
    together with
    the subpath $\gamma'$ of $S'$ between $b$ and $b'$, 
    forms a bounded region not intersecting the interior of $C_{r-1}$. 
    This violates the tightness of $\mathcal C$: By replacing $\gamma$ with $\gamma'$ from $C_r$, we can obtain a cycle $C$ 
    contained in $\textsf{cl}(C_{r})\setminus \textsf{cl}(C_{r-1})$ 
with $\textsf{cl}(C_r) \subsetneq  \textsf{cl}(C) \subsetneq \textsf{cl}(C_{r-1})$. 
Therefore, both $a$ and $b$ are vertices of $C_{r-1}$. 
    This implies that $S'$ has an $(r-1)$-horn
    containing $s'$ and having $a$ and $b$ as the endpoints of its base. Therefore, the lemma holds.
\end{proof}

\begin{lemma}\label{lem:no-horn}
    No $r$-horn of a traversing segment of $\mathcal P$ exists
    for $r\in (2^{5k}, 2^{6k})$.
\end{lemma}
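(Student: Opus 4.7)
The plan is to argue by contradiction, mirroring the construction in the proof of Lemma~\ref{lem:horn} but iterating without appealing to the tightness of $\mathcal C$, which is unavailable for $r\in(2^{5k},2^{6k})$. Suppose some traversing segment $S_0$ of $\mathcal P$ has an $r$-horn $s_0$ for some $r$ in this range, and let $\eta_0$ be the subpath of the base of $s_0$ lying between the first and last vertices of $S_0$ on that base, defined exactly as in Lemma~\ref{lem:horn}. I would then build a chain $(S_0,\eta_0),(S_1,\eta_1),\ldots$ of traversing segments with $r$-horns whose associated arcs on $C_r$ strictly nest, i.e.\ $\eta_{j+1}\subsetneq\eta_j$ at every step.

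At step $j$ the dichotomy is identical to that in Lemma~\ref{lem:horn}: either no other segment of $\mathcal P$ meets $\eta_j$, in which case rerouting $S_j$ along $\eta_j$ strictly decreases the number of edges of $E(G)\setminus\bigcup_{C\in\mathcal C} E(C)$ used by the linkage and contradicts $\mathcal C$-cheapness; or some traversing segment $S_{j+1}\ne S_j$ meets $\eta_j$ (Lemma~\ref{lem:visitors} forces any blocker to be a traversing segment, since $r\in(2^{4k},p-2^{4k})$ for any thick maximal terminal-free ring). In the latter case I take the maximal subpath $s_{j+1}$ of $S_{j+1}$ contained in the bounded region enclosed by $\eta_j$ and the $x_j$-to-$y_j$ piece of $S_j$. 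Its endpoints lie on $\eta_j\subseteq C_r$ and its interior lies in that region, hence outside $\textsf{cl}(C_r)$, so $s_{j+1}$ is already an $r$-horn of $S_{j+1}$. Letting $\eta_{j+1}$ be the subpath of the base of $s_{j+1}$ between the first and last vertices of $S_{j+1}$ on it, vertex-disjointness of $S_{j+1}$ from $S_j$ forces the endpoints of $s_{j+1}$ strictly inside $\eta_j$, so $\eta_{j+1}\subsetneq\eta_j$.

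The key departure from Lemma~\ref{lem:horn} is that the earlier proof used tightness at index $r-1$ to extend the endpoints of $s_{j+1}$ outward along $S_{j+1}$ onto $C_{r-1}$, upgrading the $r$-horn to an $(r-1)$-horn and driving the induction towards smaller indices. For $r\in(2^{5k},2^{6k})$ tightness is unavailable, so I would instead keep the horn anchored on $C_r$ and drive the induction through the strict shrinking of the arcs on the finite cycle $C_r$. The nested sequence $\eta_0\supsetneq\eta_1\supsetneq\cdots$ of sub-arcs of $C_r$ must terminate in at most $|V(\eta_0)|$ steps, at which point only the no-blocker branch can apply, giving the contradiction. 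The main obstacle I anticipate is checking that the final rerouting really does yield a valid $T$-linkage with a strictly smaller off-cycle edge count; vertex-disjointness follows from the no-blocker assumption at that step (with a standard shortcutting if the resulting walk revisits a vertex of $S_j$), and the strict decrease follows from the fact that every $r$-horn must use at least one edge outside $\bigcup_{C\in\mathcal C}E(C)$ while the replacement $\eta_j\subseteq C_r$ consists only of cycle edges.
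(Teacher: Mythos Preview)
Your approach differs substantively from the paper's. The paper builds a sequence $\langle s_1,s_2,\ldots\rangle$ of horns that is permitted to descend from level $r'$ to level $r'-1$ at each step, invoking Lemma~\ref{lem:horn} (and hence the tightness of $\mathcal C$) to guarantee the next horn exists; it then derives a contradiction by showing the sequence is simultaneously long (the index can drop by at most one per step, from some $r>2^{5k}$ down to $2^{4k}$, so the length exceeds $2^{5k}-2^{4k}>2^{3k}$) and short (at most $2^{3k}$, via a structural argument about how two horns in the chain can come from the same traversing segment). You instead anchor every horn at the single level $r$ and terminate by strict nesting of arcs on the finite cycle $C_r$, which is conceptually cleaner and avoids the paper's chain-length bound entirely.

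There is, however, a real gap in your iteration step. You claim that whenever a segment $S_{j+1}\ne S_j$ meets $\eta_j$, its maximal subpath inside $R_j$ is a nontrivial $r$-horn with base strictly inside $\eta_j$. But $S_{j+1}$ may meet $\eta_j$ at a vertex $c$ with both incident edges of $S_{j+1}$ lying in $\textsf{cl}(C_r)$ --- a touch from the interior side of $C_r$. In that case $S_{j+1}$ never enters the open region $R_j$ at all, so it produces no $r$-horn with base inside $\eta_j$, and the chain cannot be extended. Yet the rerouting of $S_j$ along $\eta_j$ is still invalid, since the new path shares the vertex $c$ with $S_{j+1}$; so neither branch of your dichotomy fires. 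This touching-from-inside scenario is exactly where the proof of Lemma~\ref{lem:horn} spends its effort: it extends the trapped subpath along $S'$ outward to the nearest cycle of $\mathcal C$ on each side and uses tightness to force those endpoints onto $C_{r-1}$, thereby manufacturing an $(r-1)$-horn. Having discarded both tightness and the descent to level $r-1$, you need some replacement argument to dispose of this case; as written, the proof is incomplete.
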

\begin{proof}
Assume to the contrary that there is an $r$-horn $s_1$ of a traversing segment of $\mathcal P$ for $r\in (2^{5k}, 2^{6k})$.
Then we can define a sequence $\mathcal S=\langle s_1,\ldots, s_q\rangle$ of horns inductively as follows. 
Given an $r'$-horn $s_t$, we define $s_{t+1}$ as follows. 
Let $S$ be the segment of $\mathcal P$ containing $s_t$. 
Then we define $x,y$ and $\eta$ as defined in the statement of Lemma~\ref{lem:horn}: Let $x$ and $y$ be the first and last
 vertices of $S$ on the base of $s_t$ (including the endpoints of $s$) along $S$ from the end vertex of $S$ lying on $V(C_1)$. 
    Let $\eta$  be the subpath of the base of $s$ connecting $x$ and $y$. 
If there is an $r'$-horn (other than $s_t$) whose base is contained in $\eta$, we let $s_{t+1}$ be the one with a longest base among all such $r'$-horns. 
Otherwise, we let $s_{t+1}$ be the one 
with a longest base among all $(r'-1)$-horns intersecting  $\eta$. 
See Figure~\ref{fig:horn-seq}(a). 
Note that if $r'\geq 2^{4k}$, $s_{t+1}$ is well-defined
by Lemma~\ref{lem:horn}. Therefore, the size of $\mathcal S$ must be at least $2^{5k}-2^{4k}>2^{3k}$. 
By construction, the following properties hold.

\begin{itemize}
\item[(1)] No two consecutive horns in $\mathcal S$ come from the same segment of $\mathcal P$.  
\item[(2)] If an $r$-horn $s_t$ and $r'$-horn $s_{t'}$ come from  the same segment of $\mathcal P$ with $t<t'$, 
then $r> r'$. 
\end{itemize}


\begin{figure}
    \centering
    \includegraphics{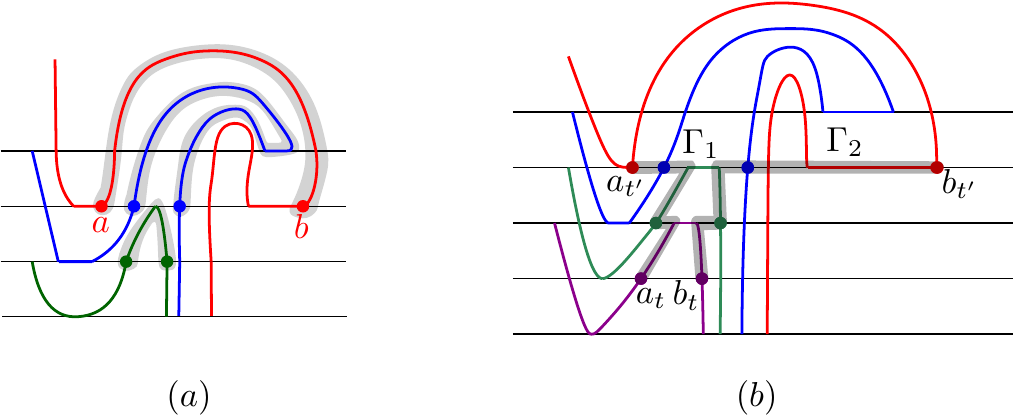}
    \caption{\small (a) The sequence of horns starting from the horn with endpoints $a$ and $b$
    consists of three horns colored gray. 
    (b) The left path colored gray is $\Gamma_1$ that connects $a_{t'}$ and $a_{t}$.
    The right path colored gray is $\Gamma_2$ that connects $b_{t'}$ and $b_{t}$.
    }
    \label{fig:horn-seq}
\end{figure}
In the following, we claim that 
the size of $\mathcal S$ is at most $2^{3k}$, which makes a contradiction. 
That is, this claim implies that no segment of $\mathcal P$ has a $r$-horn for $r\in(2^{5k}, 2^{6k})$. 
    If every horn in $\mathcal S$ comes from different segments of $\mathcal P$, the claim holds immediately. This is because the number of traversing segments of $\mathcal P$ is at most the size of a minimum cut for $C_1$ and $C_p$, which is at most $2^{3k}$. 
     
    Thus assume that there are two horns of $\mathcal S$
    coming from the same traversing segment of $\mathcal P$. 
    If there are more than two such pairs, we choose a pair 
    $(s_t,s_{t'})$ coming from the same traversing segment $S$ of $\mathcal P$  with $t<t'$ that minimizes $t'-t$. 
    Recall that $t'-t \geq 2$ as stated in Property~(1). 
    For an index $x\in[q]$, let $a_x$ and $b_x$ be the counterclockwise and clockwise endpoints of the base of $s_{x}$ along the cycle of $\mathcal C$ containing its base, respectively. 
    Let $\Gamma_1$ be the path of $G$ connecting
    $a_{t'}, a_{t'-1}, \ldots, a_t$ in order such that 
    $a_{x}$ and $a_{x-1}$ are connected by a part of the base of $s_x$ and 
    a subpath of $s_{x-1}$ for $x\in[t,t']$. 
    Similarly, let $\Gamma_2$ be the path of $G$ connecting 
    $b_t, b_{t+1},\ldots, b_{t'}$ in order such that 
    $b_{x}$ and $b_{x-1}$ are connected by a part of the base of $s_x$ and 
    a subpath of $s_{x-1}$ for $x\in[t,t']$. 
    See Figure. 
        
        
    
    
    Let $r$ be the index such that $s_t$ is an $r$-horn,
    and $r'$ be the index such that $s_{t'}$ is an $r'$-horn. 
    By Property~(2), $r>r'$,
    and therefore, 
    there is an $(r-1)$-horn $s_{t''}\in\mathcal S$ not coming from the segment $S$ but  intersecting the base of $s_t$ with $t<t''<t'$. 
    Let $S''$ be the traversing segment of $\mathcal P$ 
    containing $s_{t''}$. 
    In the following, we show that $S''$ has a horn in $\mathcal S$ lying between $s_t$ and $s_{t'}$.
    This contradicts the choice of $s_t$ and $s_{t'}$. 
    Since $S''$ connects a vertex of $V(C_1)$ and a vertex of  $V(C_p)$, 
    $S''$ intersects $\Gamma_1$ (and $\Gamma_2$) at a vertex lying outside of $s_{t''}$. 
    Consider a minimal subpath $\pi$ of $S''$ having one endpoint on $\Gamma_1$ and one endpoint on $\Gamma_2$. 
    Let $\ell$ be the index such that
    one endpoint of $\pi$ lies between $a_\ell$ and $a_{\ell+1}$ along $\Gamma_1$,
    and the other endpoint of $\pi$ lies between $b_\ell$ and $b_{\ell+1}$ along $\Gamma_2$.
    Note that the subscript $(\ell)$ of $a$ and $b$ must be the same since the paths of $\mathcal P$ are pairwise vertex-disjoint. 
    If $s_{\ell}$ comes from $S''$, then we are done. Thus we assume that $s_\ell$ comes
    from a segment other than $S''$. Then by the construction, $\pi$ must be included in $\mathcal S$
    before $s_{\ell+1}$ is included in $\mathcal S$. This makes a contradiction. 
    Therefore, the size of $\mathcal S$ is at most $2^{3k}$, and the lemma holds. 
\end{proof}

Lemma~\ref{lem:no-horn} implies the following corollary. 
\begin{corollary}\label{lem:monotone}
    The vertices of $V(S) \cap V(C_{r})$ appear consecutively along $S$
    for each traversing segment $S$ of $\mathcal P$ for each index $r\in(2^{4k}, 2^{5k})\cup (p-2^{5k},p-2^{4k})$. 
\end{corollary}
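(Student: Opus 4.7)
The plan is to derive the corollary directly from Lemma~\ref{lem:no-horn}, via contradiction. Suppose that for some traversing segment $S$ of $\mathcal{P}$ and some $r$ in the stated range, the vertices of $V(S)\cap V(C_r)$ do not appear consecutively along $S$. Then, walking along $S$ from its endpoint on $V(C_1)$ to its endpoint on $V(C_p)$, the visits to $V(C_r)$ split into at least two maximal consecutive blocks separated by a proper gap of vertices off $V(C_r)$. Pick any two such consecutive blocks and let $s$ be the subpath of $S$ strictly between them: both endpoints of $s$ lie on $V(C_r)$, and all of its internal vertices lie in $V\setminus V(C_r)$.

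Next, because $C_r$ is a simple closed curve in the plane and $s$ is a connected path in $G$ meeting $C_r$ only at its endpoints, the Jordan curve theorem forces $s$ to lie entirely in the open exterior $(\textsf{cl}(C_r))^c$ or entirely in the open interior $\textsf{int}(C_r)$. In the outward case, $s$ fits the definition of an $r$-horn of $S$ verbatim, and Lemma~\ref{lem:no-horn} immediately rules this out for $r$ in the given range, yielding the desired contradiction. For the symmetric range $r\in(p-2^{5k},p-2^{4k})$ near the outer boundary $C_p$, we run the identical argument after swapping the roles of $C_1$ and $C_p$, which is legal because the tightness of $\mathcal{C}$ (the third and fourth bullets in the definition in Section~\ref{sec:rings}) was set up symmetrically for this purpose.

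It remains to rule out the inward excursion, $s\subseteq \textsf{int}(C_r)$, which is not literally an $r$-horn. The plan is to replay the pipeline of Lemmas~\ref{lem:horn} and~\ref{lem:no-horn} with ``outside'' and ``inside'' swapped: define an \emph{inner} $r$-horn as a subpath of a traversing segment with endpoints on $C_r$ whose internal vertices all lie in $\textsf{int}(C_r)$, define its base as the arc of $C_r$ that, together with the inner horn, bounds a region \emph{contained in} $\textsf{int}(C_r)$, and repeat the proof of Lemma~\ref{lem:horn} verbatim using the fourth tightness condition (which controls intermediate cycles on the interior side) in place of the third. The $\mathcal{C}$-cheapness of $\mathcal{P}$ is direction-agnostic: any reroute along an arc of $C_r$ strictly reduces the number of edges outside $\bigcup_{C\in\mathcal{C}}E(C)$, regardless of whether the bounded region lies inside or outside $C_r$. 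Then the same counting argument as in Lemma~\ref{lem:no-horn}, using Lemma~\ref{lem:min-cut} to bound the number of traversing segments by $2^{3k}$, derives a contradiction from the existence of an inner $r$-horn in the stated range.

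The main obstacle I anticipate is precisely this verification that the inward case admits a symmetric treatment. It is intuitively clear that ``everything is symmetric,'' but to be rigorous one must track each appeal to tightness and to $\mathcal{C}$-cheapness in Lemmas~\ref{lem:horn} and~\ref{lem:no-horn} and check that the inner-horn analogue remains valid; the asymmetry in how $\textsf{cl}(C_r)$ and $\textsf{int}(C_r)$ enter the definition of the base and the region it bounds is the subtle point to handle with care. Once this is in place, the two cases together prove that no excursion between two blocks of $V(S)\cap V(C_r)$ can exist, so the blocks collapse to a single consecutive block, as claimed.
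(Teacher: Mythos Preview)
Your outward case matches the paper exactly: the paper derives the corollary in one sentence from Lemma~\ref{lem:no-horn}, and an outward excursion between two blocks of $V(S)\cap V(C_r)$ is literally an $r$-horn. (There is an apparent index mismatch between the range $(2^{4k},2^{5k})$ in the corollary and $(2^{5k},2^{6k})$ in Lemma~\ref{lem:no-horn}; this looks like a typo in the paper and does not affect the logic.) Your identification of the inward case as the nontrivial part, and your plan to mirror the horn machinery for it, is also the natural route and goes beyond what the paper spells out.

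The gap is in the tightness bookkeeping for the mirrored argument. You propose to replay Lemma~\ref{lem:horn} for inner horns using the fourth tightness bullet in place of the third, and that is indeed the condition an inner-horn version would invoke: when the auxiliary segment $S'$ returns to $C_r$ instead of reaching $C_{r+1}$, the separating cycle one builds lives in $\textsf{int}(C_{r+1})\setminus\textsf{int}(C_r)$. But the fourth bullet is only asserted for $i\in[p-2^{5k},p]$, so it is unavailable precisely in the range $r\in(2^{4k},2^{5k})$ where you need it; symmetrically, the third bullet is unavailable near $C_p$. To close this you must either argue that the BFS construction of the tight sequence in fact yields both bullets throughout the outward-built prefix (plausible, since every face between $C_i$ and $C_{i+1}$ is incident to $C_i$, but not what the definition states), or fall back to the observation that the downstream use---the next corollary about $\ring(C_{2^{5k}},C_{p-2^{5k}})$---only needs the one-sided conclusions that Lemma~\ref{lem:no-horn} and its $C_1\!\leftrightarrow\! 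C_p$ twin already give: $S$ outside $C_{r_1}$ is a single suffix and $S$ inside $C_{r_2}$ is a single prefix, whose intersection is then a single traversing subpath of the smaller ring.
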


    

\subsection{Subframes and Frames for a Thick Maximal Terminal-Free Ring}
Now we are ready to define two frames and two subframes for each thick maximal terminal-free ring $\ring(I_i,I_j)$.
Let $\cup\mathcal C$ be the union of the tight sequence of concentric cycles for all thick terminal-free rings. 
We call a $\cup\mathcal C$-cheap linkage $\mathcal P$ a \emph{cheap} linkage. 
Since all terminal-free rings are pairwise disjoint, a cheap $T$-linkage $\mathcal P$ is also a $\mathcal C$-cheap $T$-linkage for the tight sequence $\mathcal C$ of concentric cycle 
we computed for a thick maximal terminal-free ring. 
Then Corollary~\ref{lem:monotone} implies the following corollary. 
We call $C_{2^{5k}}$ and $C_{p-2^{5k}}$ the \emph{subframe} in $\ring(I_i,I_j)$. In this way,
we can obtain at most $4k$ subframes in total for all thick maximal terminal-free rings.
Moreover, note that the subframes are concentric. 

\begin{corollary}
    For a cheap $T$-linkage $\mathcal P$, every segment of $\mathcal P$ in $\textsf{Ring}(I_i.I_j)$ has at most one traversing subsegment and no visitor in $\ring(C_{2^{5k}},C_{p-2^{5k}})$.
\end{corollary}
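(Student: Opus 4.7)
The plan is to dispatch the two types of segments of $\mathcal{P}$ in $\ring(I_i,I_j)$ separately by appealing directly to Lemma~\ref{lem:visitors} and Corollary~\ref{lem:monotone}. I would first verify that the sub-ring is well defined: thickness of $\ring(I_i,I_j)$ gives $p-1>2^{10k}$, so $C_{2^{5k}}$ and $C_{p-2^{5k}}$ exist in the tight sequence, are vertex-disjoint from $V(C_1)\cup V(C_p)$, and their indices lie in the ranges to which the two preceding results apply.

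Let $S$ be any segment of $\mathcal{P}$ in $\ring(I_i,I_j)$. If $S$ is a visitor, its two endpoints lie on the same boundary cycle (either $V(C_1)$ or $V(C_p)$), and Lemma~\ref{lem:visitors} says that $S$ avoids both $C_{2^{5k}}$ and $C_{p-2^{5k}}$. Hence $S$ is trapped either in the open disk bounded by $C_{2^{5k}}$ (inner-visitor case) or in the open exterior of $C_{p-2^{5k}}$ (outer-visitor case), and is in particular disjoint from the closed annulus $\ring(C_{2^{5k}},C_{p-2^{5k}})$. Such an $S$ contributes neither a traversing subsegment nor a visitor to the sub-ring.

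If instead $S$ is a traversing segment, then one endpoint of $S$ lies in $V(C_1)$ (strictly inside $C_{2^{5k}}$) and the other in $V(C_p)$ (strictly outside $C_{p-2^{5k}}$). Corollary~\ref{lem:monotone} asserts that $V(S)\cap V(C_{2^{5k}})$ is a contiguous block along $S$, so after $S$ leaves $C_{2^{5k}}$ for the last time it never returns; the symmetric statement holds at $C_{p-2^{5k}}$. Combining these two monotonicity statements, the intersection of $S$ with the closed annulus $\ring(C_{2^{5k}},C_{p-2^{5k}})$ is a single connected subpath with one endpoint on $C_{2^{5k}}$ and one on $C_{p-2^{5k}}$, i.e.\ exactly one traversing subsegment of the sub-ring, and no visitor.

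The proof is essentially bookkeeping on top of the two preceding results; the only step that demands care is the inference from ``$V(S)\cap V(C_r)$ appears consecutively along $S$'' to ``$S$ crosses $C_r$ exactly once, in the monotone inside-to-outside direction''. I would record explicitly that a hypothetical re-entry of $\mathrm{int}(C_r)$ after leaving it would split the consecutive block $V(S)\cap V(C_r)$ into at least two separated sub-blocks along $S$, contradicting Corollary~\ref{lem:monotone}; once this observation is in hand, the remaining assertions in the statement follow at once.
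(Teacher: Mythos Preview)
Your proposal is correct and follows essentially the same approach as the paper: the paper gives no explicit proof for this corollary, merely stating that it ``follows'' from Corollary~\ref{lem:monotone} (and implicitly Lemma~\ref{lem:visitors}), and your write-up faithfully supplies those details --- visitors are excluded from the sub-ring by Lemma~\ref{lem:visitors}, and traversing segments yield a single traversing subsegment by the consecutiveness in Corollary~\ref{lem:monotone} together with the Jordan-curve observation you spell out at the end.

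One small caution: as literally stated in the paper, Corollary~\ref{lem:monotone} covers indices $r\in(2^{4k},2^{5k})\cup(p-2^{5k},p-2^{4k})$, an \emph{open} union that excludes the exact values $r=2^{5k}$ and $r=p-2^{5k}$ you invoke. This is almost certainly a boundary/typo artifact in the paper (note the mismatched ranges between Lemma~\ref{lem:no-horn} and Corollary~\ref{lem:monotone}), not a defect in your argument; if you want to be fully safe, you can either cite the consecutiveness at $r=2^{5k}-1$ and $r=p-2^{5k}+1$ and observe that this already forces the desired single-crossing behaviour at the adjacent cycles, or simply note that the proof of Lemma~\ref{lem:no-horn} goes through verbatim at the endpoint index.
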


Although $C$ and $C'$ cut the paths of $\mathcal P$ in a structured way, the complexity of a subframe might be large.
To handle this issue, we deal with two noose $B$ and $B'$ of total complexity $2^{O(k)}$ 
in $\textsf{Ring}(C,C')$ lying close to $C$ and $C'$, respectively, instead of dealing with $C$ and $C'$,
where $C=C_{2^{5k}}$ and $C'=C_{p-2^{5k}}$.
Recall that $\mathcal C=\langle C_1,\ldots, C_{p}\rangle$ is a tight sequence 
of concentric cycles in $\textsf{Ring}(I_i,I_j)$ with $p\geq 2^{10k}$. 
By Lemma~\ref{lem:min-cut}, a minimum cut for $V(C_{2^{5k}})$ and $V(C_{2^{6k}})$ in $\textsf{Ring}(C_{2^{5k}},C_{2^{6k}})$ has size at most $2^{3k}$. 
Moreover, such a cut forms a noose $B$ in $\textsf{Ring}(C_{2^{5k}},C_{2^{6k}})$. 
We can compute it in $2^{O(k)}n$ time using the Ford–Fulkerson algorithm for computing a maximum flow in $O(fN)$ time, where $f$ denotes the size of a maximum flow, and $N$ denotes the complexity of a given  graph. In our case, $f=2^{O(k)}$ and $N=O(n)$. 
Similarly, we compute a noose $B'$ in $\textsf{Ring}(C_{p-2^{5k}},C_{p-2^{6k}})$ of complexity $2^{O(k)}$ in time $2^{O(k)}n$. 
In the following, we call the two nooses $B$ and $B'$ the \emph{frames}. 

\begin{figure}
    \centering
    \includegraphics{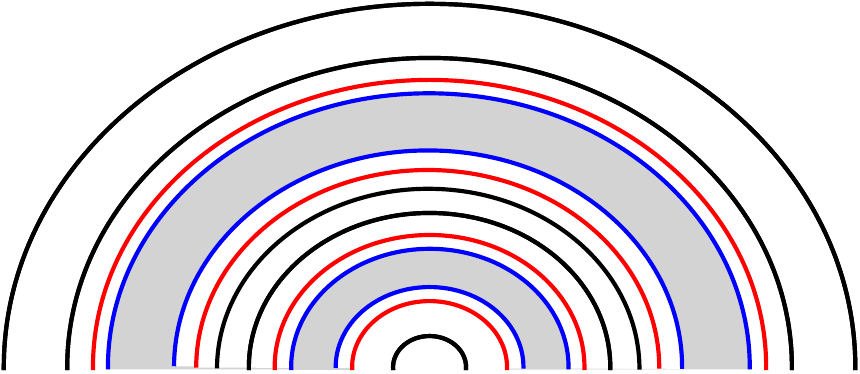}
    \caption{\small The black cycles are the boundaries of the thick maximal terminal-free rings,
    the red cycles are the subframes, and the blue cycles are the frames. The terminal-free framed rings are colored gray.}
    \label{fig:frames}
\end{figure}
Let $\mathcal B= \langle B_1, B_1', B_2, B_2', \ldots, B_{k'},B_{k'}' \rangle$ be the sequence of 
(concentric) frames such that $B_i$ and $B_i'$ are frames from the $i$th thick maximal terminal-free ring. See Figure~\ref{fig:frames}(a). 
Note that the radial distance between $V(B_i)$ and $V(B_i')$ is at least $2^{k}$, and the radial distance between $V(B_i')$ and $V(B_{i+1})$ is at most $2^{O(k)}$ by construction for all indices $i\in[k')$.
Moreover, no terminal is contained in $\textsf{Ring}(B_i,B_{i}')$. 
A ring defined by two consecutive nooses in $\mathcal B$ is called a \emph{framed ring}. 
A framed ring containing a terminal is called a \emph{terminal-containing} framed ring,
and a framed ring not containing a terminal is called a \emph{terminal-free} framed ring.

\paragraph{Substructure of a Terminal-Containing Framed Ring.}
Let $\oring=\textsf{Ring}(B_{i-1}',B_{i})$ be a terminal-containing framed ring.
We construct a substructure of $\oring$, which we will call the \emph{skeleton forest}.  
For each terminal $t$ of $\bar{T}$ lying in $\oring$, we connect $t$ and a vertex of $B_i$ by a radial curve (a shortest path in $\radgraph$) of complexity $2^{O(k)}$. 
By construction of $\mathcal B$, the radial distance between $B_{i}$ and $t$ is $2^{O(k)}$. 
In addition to this, we connect $B_{i-1}'$ and of $B_i$ by a radial curve of complexity $2^{O(k)}$. Recall that the radial distance between the two radial curves is at most $2^{O(k)}$.
We can compute all such radial curves in $2^{O(k)}n$ time in total. 
Let $\Gamma_i$ be the union of these radial curves, which forms a forest. We call $\Gamma_i$ the \emph{skeleton forest} of $\oring$. 
The number of leaf nodes of $\Gamma_i$ is $O(k_i)$, where $k_i$ denotes the number of terminals in $\oring$. 
Also, the number of nodes of $\Gamma_i$ of degree at least three is $O(k_i)$. 
A maximal path of $\Gamma_i$ consisting of degree-2 vertices 
is called a \emph{tree-path}. 
Also, a maximal subpath of $B_{i}$ (and $B_{i-1}'$) consisting of vertices not contained in $\Gamma_i$ is called a \emph{boundary-path}. 
Then there are $O(k_i)$ tree-paths and boundary-paths, and the total complexity of the tree-paths and boundary-paths is $2^{O(k)}$. 

\medskip
The following lemma summarizes this section. 

\begin{lemma}
    We can construct $O(k)$ framed rings in $2^{O(k)}n$ time in total. 
    Also, we can compute the skeleton forests for all terminal-containing framed rings in $2^{O(k)}n$ time in total.
\end{lemma}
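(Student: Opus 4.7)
The plan is to build everything breadth-first from the radial-distance layers. First, I compute an innermost face $F^*$ in $O(n)$ time and run a radial BFS from $F^*$ to partition $V(G)$ into layers $V_0,V_1,\ldots,V_{n'}$; reading off the concentric family $\mathcal I=\langle I_1,\ldots,I_{n'}\rangle$ as the outer boundary cycles of the prefix subgraphs, and maintaining them incrementally with a doubly connected edge list, takes $O(n)$ total time. Scanning $\mathcal I$ from inside out and marking the indices where a new terminal first appears identifies all maximal terminal-free rings; since $|\bar T|\le 2k$, at most $O(k)$ of them are thick.

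For each thick maximal terminal-free ring $\textsf{Ring}(I_i,I_j)$ I then apply the two-sided face-BFS described at the start of Section~\ref{sec:rings} to produce the tight sequence $\mathcal C=\langle C_1,\ldots,C_p\rangle$ in time linear in the complexity of the ring; since distinct thick rings are interior-disjoint, their complexities sum to $O(n)$. Thickness forces $p-1>2^{10k}$, so the subframes $C_{2^{5k}}$ and $C_{p-2^{5k}}$ are well-defined, and I promote each into a low-complexity frame by running Ford--Fulkerson for a minimum vertex cut between $V(C_{2^{5k}})$ and $V(C_{2^{6k}})$ (and symmetrically between $V(C_{p-2^{6k}})$ and $V(C_{p-2^{5k}})$). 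Lemma~\ref{lem:min-cut} bounds the maximum flow by $2^{3k}$, so each execution runs in $2^{O(k)}\cdot N$ time on a subring of complexity $N$, and the resulting minimum cut is converted into a noose of complexity $2^{O(k)}$ by the standard planar-duality argument. Summing over the $O(k)$ thick rings delivers the frame sequence $\mathcal B=\langle B_1,B_1',\ldots,B_{k'},B_{k'}'\rangle$ in total time $2^{O(k)}n$, and the framed rings are simply the strips bounded by consecutive members of $\mathcal B$ together with the non-thick terminal-free rings.

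For each terminal-containing framed ring $\oring$, I build its skeleton forest $\Gamma_i$ by first constructing $\radgraph$ once in $O(n)$ time and then running a multi-source BFS in $\radgraph$ restricted to $\oring$ starting simultaneously from all vertices of $V(B_i)$. The construction of $\mathcal B$ guarantees that every terminal of $\oring$ and some vertex of $B_{i-1}'$ lies within radial distance $2^{O(k)}$ of $V(B_i)$, so retracing parent pointers from these endpoints yields one radial curve per terminal and one extra curve to $B_{i-1}'$, each of length $2^{O(k)}$; their union is $\Gamma_i$. Since the terminal-containing rings are interior-disjoint, the total BFS work is $O(n)$, and $\sum_i(k_i+1)=O(k)$ bounds the curve-extraction cost by $2^{O(k)}$. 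The main obstacle I anticipate is in the frame step: converting Ford--Fulkerson's vertex cut into an honest noose of complexity $2^{O(k)}$ requires planar-duality bookkeeping in the enclosing subring, and keeping the cumulative cost within $2^{O(k)}n$ depends crucially on the $2^{3k}$ flow cap from Lemma~\ref{lem:min-cut}, without which the per-ring bound would degrade to $\Omega(n)$ augmentations.
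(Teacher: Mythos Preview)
Your proposal is correct and follows essentially the same construction as the paper: the lemma is a summary of Section~\ref{sec:cutting}, and you have accurately traced through the radial BFS for $\mathcal I$, the two-sided face-BFS for the tight sequences, the Ford--Fulkerson computation of the frames via Lemma~\ref{lem:min-cut}, and the radial-curve construction of the skeleton forests. One small wording issue: the framed rings are exactly the strips between consecutive members of $\mathcal B$; non-thick maximal terminal-free rings do not become separate framed rings but are absorbed into the adjacent terminal-containing framed rings (this is why the radial distance between $B_{i-1}'$ and $B_i$ is still $2^{O(k)}$), so your clause ``together with the non-thick terminal-free rings'' should be dropped.
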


\section{Crossing Pattern and Abstract Polygonal Schema}\label{sec:week_linkages}
In this section, we define a canonical encoding of a cheap $T$-linkage $\mathcal P$, which
we will call the crossing pattern, with respect to the frames.
Let $\mathcal B= \langle B_1, B_1', B_2, B_2', \ldots, B_{k'},B_{k'}' \rangle$ be the sequence of 
(concentric) frames such that $B_i$ and $B_i'$ are frames from the $i$th thick maximal terminal-free ring. 
Also, let $\mathcal C= \langle C_1, C_1', C_2, C_2', \ldots, C_{k'},C_{k'}' \rangle$ be the sequence of 
(concentric) subframes such that $C_i$ and $C_i'$ are subframes from the $i$th thick maximal terminal-free ring. 
Recall that the radial distance between $B_{i-1}'$ and $B_i$ is $2^{O(k)}$. 
That is, there is a radial curve between $B_{i-1}'$ and $B_i$ of complexity $2^{O(k)}$. 
We call the endpoints of the radial curve lying on $B_{i-1}'$ and $B_i$ the \emph{origins} of $B_{i-1}'$ and $B_i$, respectively.

We first define the \emph{winding number} of two walks in $\textsf{Ring}(D,D')$ for two nooses $D$ and $D'$ of $G$.
We say two walks are \emph{aligned} if the pairs of the endpoints of the two walks are the same. 
For two aligned walks $\pi$ and $\pi'$ traversing $\textsf{Ring}(D,D')$ 
oriented from the endpoints on $D$ to the endpoints on $D'$,
we record the signed numbers of crossings of $\pi$ along $\pi'$:
for every common vertex of $V(\pi)\cap V(\pi')$ excluding their endpoints, we record +1 if $\pi$ crosses $\pi'$ from left to right, record -1 if $\pi$ crosses $\pi'$ from right to left, and record 0 if $\pi$ does not cross $\pi'$ at that vertex. 
Also, if the first and last edges of $\pi$ lie in the same direction of $\pi'$, we record 0. 
If the first edge of $\pi$  lies to the left of $\pi'$, but the last edge of $\pi$ on $D'$ lies to the right of $\pi'$, then we record $-1$. Otherwise, we record $+1$. 
Then the winding number between $\pi$ and $\pi'$, denoted by $\textsf{WindNum}(\pi,\pi')$, is defined as the sum of all recorded numbers. 
See Figure~\ref{fig:winding}(a-b). 
If two walks $\pi$ and $\pi'$ are not aligned, we contract all the edges of $D_\textsf{sub}$ into a single vertex, and contract all the edges of $D_\textsf{sub}'$ into a single vertex 
so that the two walks become aligned, where
$D_\textsf{sub}$ (and $D_\textsf{sub}'$) is the subnoose of $D$ (and $D'$) 
lying from the endpoints of $\pi$ to the endpoints of $\pi'$ in clockwise direction. 
 Then the winding number between $\pi$ and $\pi'$
is defined as their winding numbers after the contraction. 
See Figure~\ref{fig:winding}(c). 
\begin{figure}
    \centering
    \includegraphics{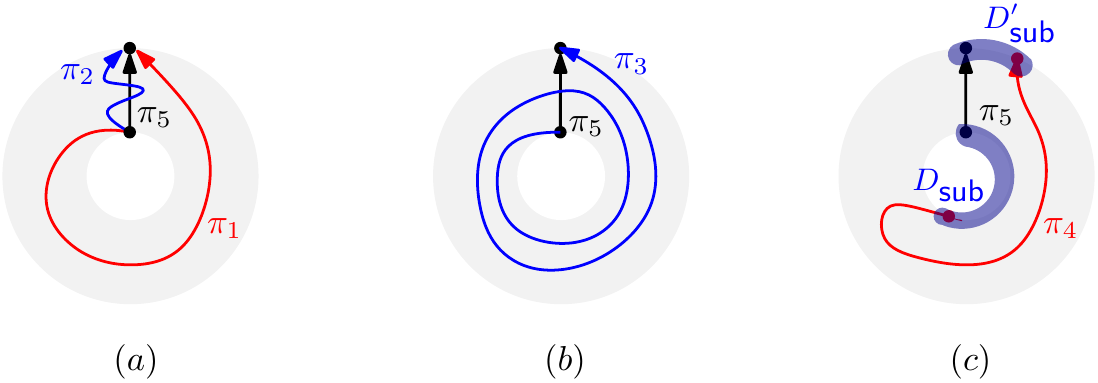}
    \caption{\small $\windnum(\pi_1,\pi_5)=-1$, $\windnum(\pi_2,\pi_5)=0$, $\windnum(\pi_3,\pi_5)=-2$, and $\windnum(\pi_4,\pi_5)=0$}
    \label{fig:winding}
\end{figure}

The following lemma holds due to Proposition~7.1 and Proposition~7.2 of~\cite{lokshtanov2020exponential}. 
\begin{lemma}\label{lem:winding}
     Let $\textsf{Ring}(D,D')$ be a ring defined by two nooses $D$ and $D'$,
     and let $\mathcal P$ and $\mathcal Q$ be two linkages traversing $\textsf{Ring}(D,D')$.  
     Then there exists a traversing linkage $\mathcal P'$  aligned with $\mathcal P$ such that  
     for a fixed path $Q\in\mathcal Q$, $|\textsf{WindNum}(P', Q)| \leq 7$ for all paths $P'$ of $\mathcal P'$. 
\end{lemma}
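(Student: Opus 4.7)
The plan is to establish the lemma via a minimization/exchange argument, following the approach of Propositions~7.1 and~7.2 in \cite{lokshtanov2020exponential}. Fix the path $Q\in\mathcal Q$. Among all traversing linkages $\mathcal R$ in $\textsf{Ring}(D,D')$ aligned with $\mathcal P$, I would choose $\mathcal P'$ so as to lexicographically minimize the multiset of absolute winding numbers with $Q$, e.g., by minimizing the potential
\[
\Phi(\mathcal R) := \sum_{R\in \mathcal R} |\textsf{WindNum}(R,Q)|.
\]
Since the graph is finite and $\Phi$ takes non-negative integer values, a minimizer exists. The goal then reduces to proving that at such a minimum $\mathcal P'$, every $P'\in \mathcal P'$ satisfies $|\textsf{WindNum}(P',Q)|\leq 7$.

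To derive the bound, suppose for contradiction some $P^*\in\mathcal P'$ has $|\textsf{WindNum}(P^*,Q)|\geq 8$. The high winding number forces, by a pigeonhole argument on the signed crossings recorded along $P^*$, the existence of many internally disjoint subpaths of $P^*$ that traverse $Q$ consecutively in the same orientation. These subpaths, together with the corresponding subarcs of $Q$, bound a family of nested ``pockets'' inside the annulus $\textsf{Ring}(D,D')$. In a sufficiently deep pocket I would identify two such subpath-pairs whose endpoints along $Q$ admit a local exchange: cut $P^*$ at two same-sign crossing vertices of $Q$, and reconnect the resulting pieces via a short detour that uses a subarc of $Q$ (or, if other paths of $\mathcal P'$ lie inside the pocket and block the direct detour, via a cyclic shift through those paths, which is possible by planarity and their parallel orientation inside the pocket). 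A standard planar homology computation shows that after this local exchange, the family is still a traversing linkage aligned with $\mathcal P$, the winding number of the modified $P^*$ with $Q$ drops by at least $2$, and any changes to the winding numbers of the other paths touched by the exchange sum to less than this decrease, so $\Phi$ strictly decreases, contradicting minimality.

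The constant $7$ arises from careful accounting of the boundary effects near $D$ and $D'$: the fixed endpoints of $\mathcal P$ on the two nooses can force a residual winding of a few units that cannot be removed by any local exchange. Tracking these boundary contributions precisely gives the bound $|\textsf{WindNum}(P',Q)|\leq 7$, matching the constants in Propositions~7.1 and~7.2 of \cite{lokshtanov2020exponential}.

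The main obstacle will be setting up the local exchange rigorously so that the modified family remains a vertex-disjoint linkage and so that the decrease in $|\textsf{WindNum}(P^*,Q)|$ is not cancelled by increases in the winding numbers of the other paths of $\mathcal P'$. This requires analyzing the cyclic order of path endpoints on $D$ and $D'$ and using planarity of $\textsf{Ring}(D,D')$ to show that the ``pockets'' can be peeled off in a prescribed order without introducing crossings; this planar-topological bookkeeping is essentially the content of the two cited propositions, and I would invoke them to conclude.
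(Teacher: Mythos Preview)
The paper's proof is a two-line black-box reduction, not an exchange argument. It applies Proposition~7.2 of \cite{lokshtanov2020exponential} to obtain an aligned linkage $\mathcal P'$ with $|\textsf{WindNum}(P',Q')|\le 6$ for each $P'$ and \emph{some} $Q'\in\mathcal Q$ (possibly depending on $P'$), and then applies Proposition~7.1 to pass from $Q'$ to the fixed $Q$: since $Q$ and $Q'$ are vertex-disjoint paths of the same linkage $\mathcal Q$, the two winding numbers differ by at most $1$, giving the bound $7$. That is the whole argument; the constant $7=6+1$ comes from concatenating the two cited propositions, not from boundary bookkeeping.

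Your proposal instead tries to re-derive the content of those propositions from scratch via a potential-minimization and local exchange. Two issues. First, this is far more than what is needed: since the lemma is explicitly stated as a consequence of Propositions~7.1 and~7.2, invoking them directly is the intended route, and you yourself end by saying you would ``invoke them to conclude,'' which makes the preceding exchange sketch superfluous. Second, as a standalone argument your sketch has a real gap: you minimize $\Phi$ with respect to the single fixed path $Q$ and propose to reroute $P^*$ along subarcs of $Q$, but the rerouting in Proposition~7.2 crucially uses the \emph{entire} family $\mathcal Q$ of vertex-disjoint paths as parallel tracks; with only one $Q$ available, the other paths of $\mathcal P'$ can block every detour, and your ``cyclic shift through those paths'' is not justified without those extra tracks. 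You also never identify the two-step structure (bound relative to some $Q'\in\mathcal Q$, then transfer to the fixed $Q$), which is precisely how the $7$ arises.
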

\begin{proof}
By Proposition~7.2 in~\cite{lokshtanov2020exponential}, there exists a linkage $\mathcal P'$ aligned with $\mathcal P$ so that each $P'$ in $\mathcal P'$ satisfies $|\textsf{WindNum}(P',  Q')| \leq 6$ with a path $Q'$ in $\mathcal Q$.
Furthermore, for a fixed path $Q\in \mathcal Q$, Proposition~7.1 in~\cite{lokshtanov2020exponential} implies that $\textsf{WindNum}(P',  Q')$ and $\textsf{WindNum}(P',  Q)$ differ by at most one since $Q$ and $Q'$ are exactly same or vertex-disjoint paths. This completes the proof. 
\end{proof}

\subsection{Reference Paths and a Base $T$-Linkage}
Given a cheap $T$-linkage $\mathcal P$, we reroute each path of $\mathcal P$ to
control the winding number of each path of $\mathcal P$ and a precomputed path, called a \emph{reference path}, using Lemma~\ref{lem:winding}.
For each two consecutive subframes $C_i$ and $C_i'$ of $\mathcal C$, 
we compute a  maximum-cardinality set $\mathcal Q_i$ of vertex-disjoint paths from $C_i$ to $C_i'$. 
The paths in $\mathcal Q_i$ are called the \emph{reference paths} in $\ring(C_i,C_i')$. 
Let $Q$ be a fixed path $\mathcal Q_i$, which we call the \emph{initial} path of $\mathcal Q_i$. 
By Lemma~\ref{lem:min-cut}, the size of $\mathcal Q_i$ is $2^{O(k)}$, and thus  we can compute $\mathcal Q_i$ in $2^{O(k)}n$ time for all $i\in[k']$ 
using the Ford–Fulkerson algorithm. 

Also, the size of $\mathcal Q_i$ is at least the number of traversing segments of $\mathcal P$ in $\ring(C_i,C_i')$ since
the traversing segments of $\mathcal P$ in $\ring(C_i,C_i')$ are also vertex-disjoint paths  from $C_i$ to $C_i'$. 
By Lemma~\ref{lem:winding}, there is a linkage $\mathcal P'$,  
aligned with the set of all traversing segments of $\mathcal P$ in $\textsf{Ring}(C_i,C_i')$ 
with $|\textsf{WindNum}(P', Q)| \leq 7$ for all paths $P'\in\mathcal P'$. 
We replace each traversing segment of $\mathcal P$ in $\ring(C_i,C_i')$ with its corresponding segment of $\mathcal P'$. We do this for all two consecutive subframes defining a terminal-free ring.
Since $\ring(C_i,C_i')$ has no visitor, the resulting paths are pairwise vertex-disjoint. 
We call the resulting linkage $\mathcal P$ the \emph{base $T$-linkage}. 
Note that the base $T$-linkage always exists if $(G,T,k)$ is a \textsf{YES}-instance.

\paragraph{Oscillating Subsegments and Traversing Subsegments.}
Now let $\mathcal P$ be a base $T$-linkage. 
Let $\mathcal S$ be the set of the traversing  segments of $\mathcal P$ in $\ring(C_i,C_i')$, and let $\mathcal Q_i$ be the set of all reference paths in $\ring(C_i,C_i')$.
Since $\textsf{Ring}(C_i,C_i')$ is terminal-free, 
every subsegment of a segment in $\mathcal S$ having both endpoints on $V(C_i)$ (or $V(C_i')$)
is fully contained in $V(C_i)$ (or $V(C_i')$).
Also, we may assume that every path of $\mathcal Q_i$ intersects $V(C_i)$ and $V(C_i')$ only at its endpoints; if it is not the case, we can take its maximal subpath
having one endpoint on $V(C_i)$ and $V(C_i')$. 
However, it is possible that a path of $\mathcal S\cup \mathcal Q_i$ crosses $B_i$ (and $B_i'$) more than once. 

\begin{figure}
    \centering
    \includegraphics{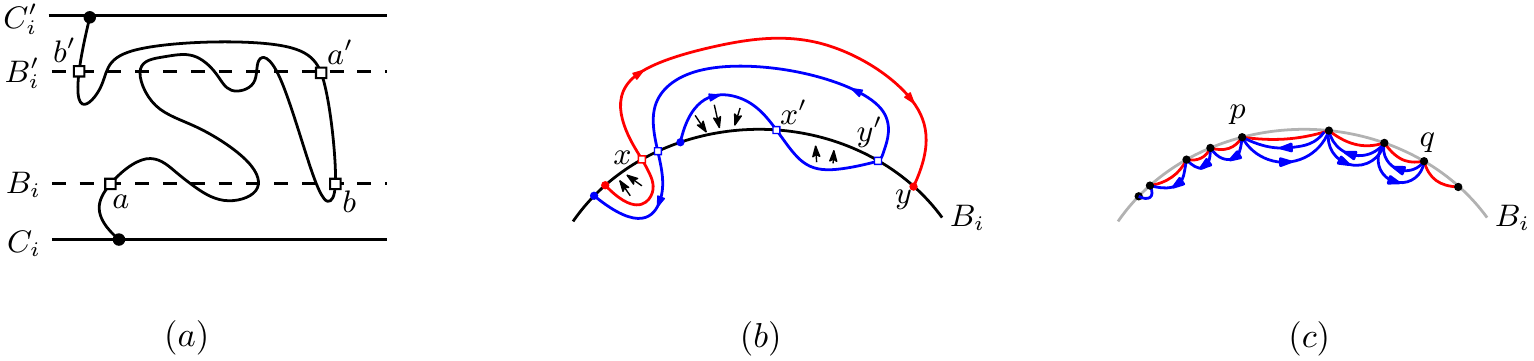}
    \caption{\small (a) The subpath of $S$ lying between $a$ and $b$ is the subsegment oscillating around $B_i$,
    and the subpath of $S$ lying between $b'$ and $b'$ is the subsegment oscillating around $B_i'$.
    (b) The red oscillating subsegment has two pieces, and the blue one has four pieces.
    The piece with endpoints $x$ and $y$ has order three, and the piece with endpoints $x'$ and $y'$ has order zero.
    We push the pieces towards their base arcs. See the arrow. 
    (c) After pushing the pieces, we obtain two walks traversing along $B_i$. Then we remove the six blue edges lying between $p$ and $q$ during the U-turn elimination process. 
    }
    \label{fig:oscillating}
\end{figure}

To handle this, we decompose each path $S$ in $\mathcal S\cup \mathcal Q_i$ into five pieces: a prefix, two oscillating subsegments, a traversing subsegment, and a suffix. 
First, we consider $S$ as oriented 
from its endpoint on $C_i$ to its endpoint on $C_i'$.
Let $a$ (and $b$) is the first (and last) vertex of $s$ lying on $B_i$. Then, let $a'$ (and $b'$) be the first (and last) vertex of $S$ lying on $B_i'$ among all vertices of $S$
lying after $b$ along $S$. See Figure~\ref{fig:oscillating}(a). 
We say the subpath lying between $a$ and $b$ (and between $a'$ and $b'$) is the \emph{subsegment oscillating around} $B_i$ (and $B_i'$).
Also, the subpath lying between $b$ and $a'$ is called the \emph{traversing subsegment} of $S$. 
The other pieces are called the prefix and suffix of $S$, respectively. 
Also, the subsegment between $a$ and $b'$ is called the \emph{middle subsegment} of $S$. 

\begin{lemma}
    For each path $S$ of $\mathcal S$, 
    the absolute value of the 
    winding number between the middle subsegments of $S$ and $Q$ is  $2^{O(k)}$.
\end{lemma}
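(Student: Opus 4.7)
The plan is to bound $|\textsf{WindNum}(M_S, M_Q)|$ by combining the a priori bound $|\textsf{WindNum}(S, Q)| \le 7$, which holds for the full traversing segments of the base $T$-linkage by Lemma~\ref{lem:winding} applied inside $\ring(C_i, C_i')$ with the reference linkage $\mathcal Q_i$, together with a $2^{O(k)}$ estimate for the error incurred in passing from $(S, Q)$ in $\ring(C_i, C_i')$ to their middle subsegments $(M_S, M_Q)$ in $\ring(B_i, B_i')$.

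First I would write $S = \sigma_S \cdot M_S \cdot \tau_S$ and $Q = \sigma_Q \cdot M_Q \cdot \tau_Q$, where $\sigma_\bullet$ is the prefix ending at the first vertex of the path on $B_i$ and $\tau_\bullet$ is the suffix starting at the last vertex on $B_i'$; the prefixes lie in $\ring(C_i, B_i)$ and the suffixes in $\ring(B_i', C_i')$. Expanding the signed-crossing sum that defines $\textsf{WindNum}(S, Q)$, every common internal vertex lies in exactly one of nine pieces obtained from the three-way decomposition of each path. The (middle, middle) piece, together with a boundary correction reflecting that the endpoints of $M_S, M_Q$ lie on $B_i, B_i'$ rather than on $C_i, C_i'$, equals $\textsf{WindNum}(M_S, M_Q)$. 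So the task reduces to bounding the other eight pieces plus this correction.

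The key input is that $B_i$ and $B_i'$ are nooses of complexity $2^{O(k)}$, so $|V(B_i)|, |V(B_i')| = 2^{O(k)}$ and each of $S, Q$ meets $V(B_i) \cup V(B_i')$ in at most $2^{O(k)}$ vertices. I would then use a homotopy argument inside the thin rings $\ring(C_i, B_i)$ and $\ring(B_i', C_i')$: each prefix/suffix can be homotoped onto a short arc meeting the corresponding frame at a single vertex without altering its winding number with the opposing path, up to a $\pm 1$ correction per frame crossing. After this homotopy the error becomes a signed sum indexed by the $2^{O(k)}$ frame crossings of $S$ and $Q$, each contributing $O(1)$, for a total error of $2^{O(k)}$. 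Combined with $|\textsf{WindNum}(S, Q)| \le 7$ this yields the claim.

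The main obstacle is the signed bookkeeping: because $M_S$ and $M_Q$ themselves oscillate across $B_i$ and $B_i'$, every boundary and homotopy correction must be tallied with the right orientation, and naive counting of intersections can overstate cancellations. Handling this cleanly requires treating the frame crossings as a sequence of topological ``flips'' whose signs are fixed by a Jordan-curve argument on the arcs of $B_i, B_i'$ used in the contraction that defines winding number for non-aligned walks; the saving grace is that the number of such flips is controlled by $|V(B_i)| + |V(B_i')| = 2^{O(k)}$, so even the worst-case signed sum is $2^{O(k)}$.
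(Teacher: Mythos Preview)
Your approach is essentially the same as the paper's: decompose $S$ and $Q$ into prefix, middle, and suffix at the first hit of $B_i$ and last hit of $B_i'$, use $|\textsf{WindNum}(S,Q)|\le 7$ as the base, and bound the eight cross-pair contributions by $2^{O(k)}$ via the frame complexity. For the middle--prefix (and middle--suffix) terms you and the paper agree exactly: the part of $M_S$ lying in $\ring(C_i,B_i)$ has both endpoints on $B_i$, and since $|V(B_i)|=2^{O(k)}$ there are at most $2^{O(k)}$ such subsegments, each contributing $O(1)$.

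The one place where your write-up is looser than the paper is the prefix--prefix term. Your justification (``$\pm 1$ correction per frame crossing'') is not quite the right accounting for $|\textsf{WindNum}(\sigma_S,\sigma_Q)|$: each prefix touches $B_i$ only once, so the number of frame crossings of the prefixes is not what controls this quantity. What you actually need is the \emph{thinness} of $\ring(C_i,B_i)$: by the tight-concentric-cycle construction its radial width is $2^{O(k)}$, so there is a radial curve $\mu$ of length $2^{O(k)}$ from $C_i$ to $B_i$. The paper then bounds $|\textsf{WindNum}(\sigma_S,\mu)|$ and $|\textsf{WindNum}(\sigma_Q,\mu)|$ by $|\mu|=2^{O(k)}$ and applies Proposition~7.1 of~\cite{lokshtanov2020exponential} to pass to $|\textsf{WindNum}(\sigma_S,\sigma_Q)|$. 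Your ``homotope onto a short arc'' sentence is gesturing at the same $\mu$, but the bound follows from the length of $\mu$, not from the frame-crossing count; make that source of the $2^{O(k)}$ explicit and the argument is complete.
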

\begin{proof}
Let $\bar S$, $S_\textsf{pre}$, and $S_\textsf{su}$ be the middle subsegment, prefix, and suffix of $S$, respectively. Analogously, we let $\bar Q$, $Q_\textsf{pre}$, and $Q_\textsf{su}$
be the middle subsegment, prefix, and suffix of $Q$, respectively. 
Our goal is to give an upper bound of $|\textsf{WindNum}(\bar S, \bar Q)|$. 
It is at most the sum of $|\textsf{WindNum}(S,Q)|$ and the absolute value of the winding numbers between the two middle subsegments,
the two prefixes, and the two suffixes. We show that the absolute value of the winding number between each pair of pieces is $2^{O(k)}$. 
Note that, by construction $|\textsf{WindNum}(S,Q)|\leq 7$. 

We first show that $|\textsf{WindNum}(S_\textsf{pre}, Q_\textsf{pre})|$ is at most $2^{O(k)}$. By the definition, two prefixes are segments of traversing segments in $\ring(C_i,B_i)$. Note that, 
By the tightness, there is a radial curve $\mu$ connecting $C_i$ and $B_i$ whose length is at most $2^{O(k)}$ in $\ring(C_i,B_i)$. Note that    $|\textsf{WindNum}(S_\textsf{pre}, \mu)|$ and $|\textsf{WindNum}(Q_\textsf{pre}, \mu)|$ are at most the complexity of $\mu$, which is $2^{O(k)}$. Furthermore, by  Proposition~7.1 in~\cite{lokshtanov2020exponential}, $|\textsf{WindNum}(S_\textsf{pre}, Q_\textsf{pre})|$ differs by at most one from the sum of $|\textsf{WindNum}(S_\textsf{pre}, \mu)|$ and $|\textsf{WindNum}(Q_\textsf{pre}, \mu)|$. Therefore, $|\textsf{WindNum}(S_\textsf{pre}, Q_\textsf{pre})|$ is $2^{O(k)}$, and  $|\textsf{WindNum}(S_\textsf{su}, Q_\textsf{su})|$ is $2^{O(k)}$.

Now we show that $|\textsf{WindNum}(\bar S, Q_\textsf{pre})|$ is $2^{O(k)}$. 
The crossing between $\bar S$ and $Q_\textsf{pre}$ appears in $\ring(C_i,B_i)$. The part of $\bar S$ contained in $\ring(C_i,B_i)$ has both end vertices in $V(B_i)$. Furthermore, the number of such subsegments is $2^{O(k)}$ since $|B_i|\in 2^{O(k)}$. 
The absolute value of the winding number between such a subsegment and a traversing subsegment is at most one. Thus, $|\textsf{WindNum}(\bar S, Q_\textsf{pre})|$ is $2^{O(k)}$. Analogously, $|\textsf{WindNum}(\bar S, Q_\textsf{su})|$, $|\textsf{WindNum}(S_\textsf{pre}, \bar Q)|$, and $|\textsf{WindNum}(S_\textsf{su}, \bar Q)|$ are also $2^{O(k)}$. Therefore, we can conclude that the absolute value of the winding number  between the middle subsegments is $2^{O(k)}$.
\end{proof}

\subsection{Canonical Weak Linkage for a Base $T$-Linkage $\mathcal P$}
In this subsection, we define a \emph{canonical weak linkage} $\mathcal W$ which is discretely homotopic to the base $T$-linkage $\mathcal P$. 
To work with discrete homotopy, we use the radial completion $G^\textsf{rad}$ of $G$. 
Note that $\mathcal P$ is also a $T$-linkage of $G^\textsf{rad}$.
The frames of $G$ are nooses of $G$, thus we can consider them as cycles in $G^\textsf{rad}$.  
For any two walks in the canonical weak linkage $\mathcal W$ we construct in this subsection, 
an edge traversed by the walks of $\mathcal W$ by more than once lies 
on the frames or skeleton forests only. Moreover, no walk of $\mathcal W$ uses an edge of $E(G^\textsf{rad})\setminus E(G)$ 
which does not lie on any frame or skeleton forest. These two properties are crucial for reconstructing a $T$-linkage of $G$ from $\mathcal W$  in Section~\ref{sec:reconstructing}.

To obtain $\mathcal W$, 
we first \emph{push} some subsegments of the segments of $\mathcal P$ in each framed ring 
 in a specific order 
so that no two paths of $\mathcal P$ cross during the pushing procedure. 
For a terminal-free framed ring,  we push the subsegments of the paths of $\mathcal P$ oscillating around the frames onto the frames. 
For a terminal-containing framed ring, we push the minimal subsegments $\pi$ of the paths of $\mathcal P$ such that 
the end vertices of $\pi$ lie on the same tree-path of the skeleton forest defined in the ring. 


\paragraph{Pushing Segments on a tree-path $\tau$.}
For a terminal-containing ring, imagine that we cut each segment of $\mathcal P$ in the ring with respect to the skeleton forest. 
An endpoint of the subsegments lies on the skeleton forest or the boundary of the ring. For the subsegments having both endpoints on the same tree-path, we push them to the tree-path. 
For a tree-path $\tau$, let $L$ be the set of 
 all minimal subsegments $\pi$ having both endpoints on $\tau$. 
Note that $|L|=2^{O(k)}$ since the length of $\tau$ is $2^{O(k)}$, and the paths of $\mathcal P$ are pairwise vertex-disjoint. 
The \emph{base arc} of a subsegment $\pi\in L$ is defined as the subpath of $\tau$ having the same endpoints as $\pi$. 
The \emph{order} of $\pi$ is defined as the number of the other subsegments in $L$ contained in the region bounded by $\pi$ and its base arc. 
We push the subsegments in $L$ in the increasing order by applying the face operations. 
In this way, the base $T$-linkage $\mathcal P$ is transformed into a weak $T$-linkage. 
Since we transform $\mathcal P$ to the weak $T$-linkage using the face operations, they are homotopic. 

\paragraph{Pushing Segments on $B_i$.} For each terminal-free ring $\xring$, we first cut each subsegment oscillating around $B_i$ into several \emph{pieces} 
at the vertices intersected by $B_i$. 
Let $S_\textsf{os}$ be the set of all subsegments oscillating around $B_i$, and 
let $\Pi$ be the set of all pieces obtained from the subsequences of $S_\textsf{os}$ by cutting them at the vertices intersected by $B_i$. 
Each piece $\pi$ has its \emph{base arc} on $B_i$ and its \emph{order}.
The definition of base arcs and orders are similar to their definitions in the proof of Lemma~\ref{lem:visitors}. 
For a piece $\pi$ oriented from $x$ to $y$, there are two paths between $x$ and $y$ in $B_i$. Furthermore, exactly one of them forms,  together with $\pi$, a bounded region not containing the interior of $C_i$ in its interior. 
We call this arc the \emph{base arc} of $\pi$. 
We consider it as oriented from $x$ to $y$. See Figure~\ref{fig:oscillating}(b). 
We define the order of $\pi$ as the number of other pieces (which could come from the same subsegment) whose base arcs are  contained in the base arc of $\pi$.  
Then we push all pieces of $\Pi$ to their base arcs in the increasing order of their orders. 
We can push all pieces of $\Pi$ by applying face operations. 
Therefore, the set $\mathcal W_\textsf{pushed}$ of the walks obtained in this way 
is homotopic to $S_\textsf{os}$. 

\paragraph{Eliminating ``U-turns''.}  At this point, a walk of  $\mathcal W_\textsf{pushed}$ moves along $B_i$ back and forth. We further simplify the walks of
 $\mathcal W_\textsf{pushed}$ 
by eliminating ``U-turns''. 
To make the description easier, 
for an edge of $B_i$ used by the walks of $\mathcal W_\textsf{pushed}$ more than once,
say $N$ times, we make $N$ copies of the edge so that they become parallel edges in $B_i$. 
The complexity of $\radgraph$ increases in this way, 
but our algorithm does not compute these parallel edges. In fact, all arguments in this subsection show the existence of a weak linkage of $G$
satisfying certain properties, which be used only for the analysis of our algorithm. 
Due to the duplication of edges, we can consider $\mathcal W_\textsf{pushed}$ as a weak \emph{edge-disjoint} linkage. 

Whenever two edges of a walk $W$ of $\mathcal W_\textsf{pushed}$ lying consecutive along $W$ form a face $F$, 
 we apply the face operation for $W$ on $F$ so that the two consecutive edges of $W$ are eliminated. 
Note that this happens only when the two edges are copies of the same edge of $G$, and $W$ traverses them in the opposite directions. 
See Figure~\ref{fig:oscillating}(c). 
We repeat this until no such two consecutive edges exist. Recall that we do this only for the purpose of analysis, and thus we do not need to
care about the running time of this procedure. 
Let $\mathcal W_{\textsf{elim}}$ be the walks we obtained from $\mathcal W_\textsf{pushed}$ by eliminating all U-turns.


\begin{lemma}\label{lem:oscillating}
    The walks in $\mathcal W_{\textsf{elim}}$ are non-crossing. Moreover, either they are vertex-disjoint paths, or they turn around $B_i$ in the same direction. 
    Moreover, the total complexity of $\mathcal W_{\textsf{elim}}$ is $2^{O(k)}$. 
\end{lemma}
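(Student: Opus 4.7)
The plan is to handle the three conclusions in order, exploiting the fact that every transformation producing $\mathcal W_\textsf{pushed}$ and then $\mathcal W_\textsf{elim}$ is realized by a sequence of \textsf{Face Move} operations, which preserve both homotopy and the non-crossing property. First, I would verify that the pushing procedure is well-defined and preserves non-crossingness. By the ordering, when a piece $\pi$ of order $o$ is pushed, every piece whose base arc is strictly contained in the base arc of $\pi$ has strictly smaller order and has already been pushed onto $B_i$. Consequently, the bounded region enclosed by $\pi$ and its base arc contains no other pieces of $\Pi$ and no other segment of the vertex-disjoint base $T$-linkage $\mathcal P$. Hence $\pi$ can be swept face by face across this region onto its base arc via \textsf{Face Move} operations. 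Since \textsf{Face Move} preserves non-crossingness and $\mathcal P$ is non-crossing, $\mathcal W_\textsf{pushed}$ is non-crossing; the U-turn elimination step consists of \textsf{Face Move} operations on length-$2$ faces formed by parallel copies of the same edge, which also preserve non-crossingness, so $\mathcal W_\textsf{elim}$ is non-crossing.

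For the directional claim, I would argue that after U-turn elimination every walk in $\mathcal W_\textsf{elim}$ lies on the multigraph version of $B_i$ and has no two consecutive edges forming a length-$2$ face of $\radgraph$. This means each walk visits vertices of $B_i$ without immediate reversal and therefore has a well-defined rotational direction around the interior of $B_i$. Suppose two walks $W_1$ and $W_2$ share a vertex $v$ on $B_i$ and cross $v$ in opposite rotational directions; a case analysis of the cyclic rotation of edges at $v$ in $\radgraph$ shows that the incoming and outgoing edges of $W_1$ and $W_2$ must interleave, contradicting the non-crossing property established in the previous step. Thus either all walks are pairwise vertex-disjoint paths, or every walk traversing a shared vertex of $B_i$ turns in the same rotational direction.

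Finally, for the complexity bound, the number of subsegments oscillating around $B_i$ is at most the number of traversing segments of $\mathcal P$ in the ring, which is $2^{O(k)}$ by Lemma~\ref{lem:min-cut}. Since the paths of $\mathcal P$ are vertex-disjoint, the intersections of all these subsegments with $V(B_i)$ occur at distinct vertices, so the total number of pieces in $\Pi$ is at most $|V(B_i)|=2^{O(k)}$. Each piece's base arc has length at most $|B_i|=2^{O(k)}$, and each walk of $\mathcal W_\textsf{pushed}$ is the concatenation of the base arcs of its pieces, so summing gives total complexity $2^{O(k)}\cdot 2^{O(k)}=2^{O(k)}$, which is only reduced by U-turn elimination. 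The main obstacle in this plan is the directional argument: the rotation at a shared vertex of $B_i$ must be analyzed carefully taking into account edge duplication, so that an assumed direction-reversal yields an explicit crossing that contradicts non-crossingness.
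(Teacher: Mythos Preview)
Your treatment of non-crossingness and of the complexity bound matches the paper's reasoning closely and is fine. The gap is in the directional argument, and it is precisely the point you flagged as the obstacle.

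You assert that after U-turn elimination each walk ``visits vertices of $B_i$ without immediate reversal and therefore has a well-defined rotational direction.'' The inference does not go through. U-turn elimination removes only those reversals where the two consecutive edges of the walk bound a length-$2$ face of the multigraph, i.e., they are \emph{adjacent} parallel copies of the same edge of $B_i$. Nothing prevents a walk $W$ from entering a vertex $v$ along one copy $c_1$ of an edge $e$ and leaving along a different, \emph{non-adjacent} copy $c_2$ of the same $e$ (with some other walk's copy sandwiched between $c_1$ and $c_2$). This is not a U-turn in the sense of the elimination step, yet it is a direction reversal of $W$ on $B_i$. So ``no immediate reversal'' does not by itself give a global rotational direction, and the subsequent crossing argument at a shared vertex then has no firm starting point.

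The paper closes this gap by encoding, for each edge $e$ of $B_i$, the sequence $\sigma(e)$ of signed walk-labels on the parallel copies of $e$ in their planar order. First it shows $\sigma(e)$ contains no factor $\ell\ell^{-1}$ or $\ell^{-1}\ell$ (this is exactly U-turn elimination on adjacent copies). Then, crucially, it argues that $\sigma(e)$ cannot contain both $\ell$ and $\ell^{-1}$ at all: take a shortest substring from $\ell$ to $\ell^{-1}$; the intermediate symbol belongs to another walk which is then trapped between two oppositely oriented strands of $W$ and must either cross $W$ or itself reverse, yielding a shorter such substring. Only after this step does each walk acquire a single rotational direction, and a similar trapping argument then forces walks sharing an edge to rotate the same way. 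To repair your plan you need an argument of this strength; the local rotation analysis at a single shared vertex is not enough because the contradiction really comes from following the sandwiched walk until it is forced to cross.
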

\begin{proof}
Since pushing and eliminating U-turns keeps a homotopic relation, the walks in $\mathcal W_\textsf{elim}$ are non-crossing.
To make the description easier, we represent the interaction between the copies of an edge $e$ of $B_i$ and the walks of $\mathcal  W_{\textsf{elim}}$
as a string over the alphabet $\Sigma=\{1,2,\ldots, r'\}$, where $r'=|\mathcal W_\textsf{elim}|$.  Each symbol $\ell$ in the alphabet represents
the $\ell$th walk of $\mathcal W_\textsf{elim}$ for $\ell\in[r']$.
Assume that all edges of $B_i$ is oriented in clockwise direction along $B_i$. 
A copy of $e$ has a symbol $\ell$ if the $\ell$th walk traverses the copy
of $e$ in clockwise order, and a symbol $\ell^{-1}$
if the $\ell$th walk traverses $e$ in counterclockwise order. 
Since each copy of an edge of $G$ is traversed by the walks in $\mathcal W_\textsf{elim}$ only once, each copy of $e$ has at most one symbol. 
For an edge $e$ of $G$, we define $\sigma(e)$ as a sequence of symbols 
of the copies of $e$ in the order. 

We first claim that $\sigma(e)$ does not contain $\ell\ell^{-1}$ or
$\ell^{-1}\ell$ as its substring for any $\ell\in\Sigma$.
Assume to the contrary that there exists  an edge $e$ whose string $\sigma(e)$ has a substring $\ell\ell^{-1}$.
Let $W$ be the $\ell$th walk of $\mathcal{W}_\textsf{elim}$.
Imagine that we walk along $W$ from the copy of $e$ contributing to $\ell$ to the copy of $e$ contributing to $\ell^{-1}$. Then there must be an edge $e'$ of $G$ such that $W$ traverses two copies of $e'$ in the opposite directions consecutively. Moreover, since the copies of $e$ contributing to $\ell$ and $\ell^{-1}$ are consecutive among all copies of $e$, 
the copies of $e_1$ traversed by $W$ are also consecutive among all copies of $e_1$. Therefore, the two copies of $e'$ form a U-turn. 
The elimination step removes all U-turns, and thus this makes a contradiction.

Moreover, $\sigma(e)$ contains at most one of $\ell$ and $\ell^{-1}$ for all $\ell\in\Sigma$. 
To see this, assume to the contrary that a copy of $e$ has symbol $\ell$ (or $\ell^{-1}$) 
and another copy of $e$ has symbol $\ell^{-1}$ (or $\ell$). 
Among all substrings of $\sigma(e)$ starting with $\ell$ (or $\ell^{-1}$) and ending at $\ell^{-1}$ (or $\ell$) 
for a symbol $\ell\in\Sigma$, we choose a shortest one $\sigma'(e)$. 
If $\sigma'(e)$ has length two, it means $\sigma'(e)=\ell\ell^{-1}$ or $\ell^{-1}\ell$, and it contradicts the claim we proved earlier. 
Thus, we assume that $\sigma'(e)$ has length larger than two. 
This means that there is another symbol $\ell_1$ or $\ell_1^{-1}$ in $\sigma'(e)$. 
Without loss of generality, assume that $\sigma(e)$ 
starts with $\ell$ and ends with $\ell^{-1}$. Furthermore, $\ell_1\in \sigma(e)$. 
The other cases can be handled analogously. 

\begin{figure}
    \centering
    \includegraphics[width=0.9
    \textwidth]{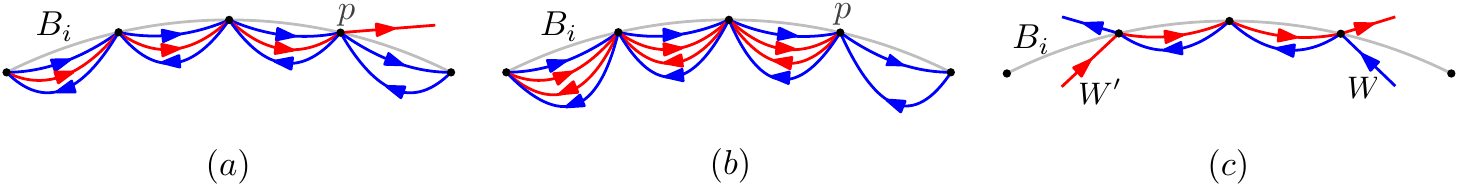}
    \caption{\small (a) The blue line is $W$ and the red line is $W_1$.
    $W_1$ leaves the frame $B_i$ at $p$ that must cross with $W$.
    (b) The case where $W_1$ has the counterclockwise direction at $p$ makes a U-turn.
    (c) The blue line is $W$ and the red line is $W'$.
    $W$ traverse $B_i$ in counterclockwise direction and $W'$ traverse $B_i$ in clockwise direction.
    The $W'$ must cross with $W$ for leaving or entering the frame.
    }
    \label{fig:canonical_weak}
\end{figure}

Let each $W$ and $W_1$ be the $\ell$th and $\ell_1$th walk of $\mathcal W_\textsf{elim}$. Similarly to the earlier proof, we walk along $W_1$ from a copy of $e$. 
The copy of $e$ contributing to $\ell_1$ lies between two copies of $e$ contributing to each $\ell$ and $\ell^{-1}$. Thus, the copies of $e_1$ traversed by the walk along $W_1$ lies between two copies of $e_1$ contributing to each of $\ell$ and $\ell^{-1}$.
Since $W$ and $W_1$ are non-crossing walks, $W_1$ ends $p$ which has 4 incident edges in $W$ or  traverses another copy of $e$ with opposite direction between $W$. The second case means that $\sigma(e)$ has $\ell_1^{-1}$ also. This contradicts that $\sigma(e)$ is the shortest one.
We consider the first case. By the definition of $\mathcal W_{\textsf{elim}}$, the path corresponded by $W_1$ leaves the frame $B_i$ at $p$. 
Since the walk $W_1$ is surrounded by $W$ at $p$, the path must cross with $W$ to leave the frame. This contradicts to the non-crossing condition.
See Figure~\ref{fig:canonical_weak}(a,b).



We have shown that, each walk $W$ in $\mathcal W_\textsf{elim}$ traverses $B_i$ in clockwise or counterclockwise direction. 
Now we show that two walks $W$ and $W'$ traverse $B_i$ in the same direction if they traverse two copies of a common edge $e\in B_i$. We  suppose that $W$ traverses $B_i$ in counterclockwise direction and $W'$ traverses $B_i$ in  clockwise direction. 
Let $\ell$ and $\ell'$ be the indices of $W$ and $W'$ in $\mathcal W_\textsf{elim}$. We consider the maximal substring $\sigma'(e)$ of $\sigma(e)$ which starts with $\ell^{-1}$ or $\ell'$. Without loss of generality, we assume that it starts with $\ell^{-1}$. 
This means that the subwalk of the reverse of $W'$ must cross the walk $W$. This contradicts to the fact that
$W$ and $W'$ are non-crossing. 
See Figure~\ref{fig:canonical_weak}(c).

The remaining task is to bound the complexity of $\mathcal W_\textsf{elim}$. 
The complexity of $\mathcal W_\textsf{elim}$ is at most the sum of length of the base arcs of $S_\textsf{os}$. The number of subsegments in $S_\textsf{os}$ is at most $|B_i|$. Furthermore, the length of each base arc is also at most $|B_i|$. Thus, the complexity of $\mathcal W_\textsf{elim}$ is at most $|B_i|^2\in 2^{O(k)}$. This proves the lemma.
\end{proof}

We replace each oscillating segment of $\mathcal S_\textsf{os}$ 
with its corresponding walk in $\mathcal W_\textsf{elim}$. We do this for $B_i'$, and 
We also do this for all rings $\ring(C_i,C_i')$ with $i\in[k']$. 
The resulting weak $T$-linkage is called the \emph{canonical} weak $T$-linkage for $\mathcal P$. 
Note that 
an edge (and a vertex) of $G$ traversed by the walks  in the canonical weak linkage $\mathcal W$ more than once lies on frames of $\mathcal B$.
Also, no walk of $\mathcal W$ uses an edge of $E(G^\textsf{rad})\setminus E(G)$ 
which does not lie on any frame. 

\begin{lemma}\label{lem:no-loop in tree-path}
    No maximal subwalk of the walks of $\mathcal W$ contained in $\oring \setminus (E(B_{i-1}')\cup E(B_i))$ has their endpoints on the  same boundary-path or the same tree-path. 
\end{lemma}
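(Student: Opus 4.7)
The proof will be by contradiction. Suppose some walk $W\in\mathcal W$ has a maximal subwalk $W'$ contained in $\oring\setminus(E(B_{i-1}')\cup E(B_i))$ whose two endpoints lie on a common tree-path $\tau$; the boundary-path case is analogous, with the pushing of oscillating pieces in the two adjacent terminal-free framed rings (Lemma~\ref{lem:oscillating}) playing the role of the tree-path pushing.

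The first step is to pull $W'$ back through the construction of $\mathcal W$ inside $\oring$. By definition, $\mathcal W$ inside $\oring$ is obtained from the base $T$-linkage $\mathcal P$ only by pushing, via face operations, every minimal subsegment of $\mathcal P$ whose two endpoints lie on a common tree-path onto that tree-path, processed in increasing order of their orders. Pulling $W'$ back along this sequence of face operations yields a corresponding subpath $P'$ of some $P\in\mathcal P$ lying in $\oring$ whose two endpoints also lie on $\tau$. Decomposing $P'$ at its intersections with $\Gamma_i$ gives a sequence of minimal subsegments; the one of smallest order with both endpoints on $\tau$ has already been pushed onto $\tau$, and propagating this through the processing order shows that $P'$ itself has been replaced by a walk lying entirely on $\tau$.

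Next, I invoke the maximality of $W'$: since $W'$ cannot be extended on either side without using a frame edge, each endpoint of $W'$ must be a frame vertex (or a terminal at which $W$ starts or ends). The internal vertices of $\tau$ are degree-$2$ vertices of $\Gamma_i$ sitting in the interior of a radial curve of $\radgraph$, and hence lie on no frame; the only frame vertices on $\tau$ are therefore the two extremal vertices of $\tau$ itself. This forces $W'$ to traverse $\tau$ end-to-end. A rerouting argument then rules out this last possibility: replacing the end-to-end traversal of $\tau$ with the corresponding subpath of $\mathcal P$ and shortcutting along $\tau$ (which is a shortest path in $\radgraph$ by construction of $\Gamma_i$) produces a $T$-linkage using strictly fewer edges outside $\cup\mathcal C$, contradicting the cheapness of $\mathcal P$.

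The main obstacle I anticipate is executing this last exchange step cleanly: one has to check that shortcutting $P$ along $\tau$ preserves vertex-disjointness from the other paths of $\mathcal P$ and does indeed reduce the number of non-$\cup\mathcal C$ edges (rather than merely trading one class of edges for another). I expect the non-crossing structure of $\mathcal W$, together with the ordering on the tree-path pushes, to let one localise the rerouting so that only the portion of $P$ in $\oring$ is affected; the boundary-path case is simpler since Lemma~\ref{lem:oscillating} already guarantees that all oscillating pieces have been flattened onto the frames, so no interior detour with both endpoints on a single boundary-path survives.
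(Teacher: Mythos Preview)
The paper states this lemma without proof, so there is no ``paper's own proof'' to compare against; I can only assess your argument on its merits.

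For the tree-path case, your first two paragraphs already contain the whole argument: by construction of $\mathcal W$ in $\oring$, every minimal subsegment of the base linkage with both endpoints on a common tree-path $\tau$ has been pushed onto $\tau$ via face operations. Hence, after the pushing, any piece of a walk of $\mathcal W$ that lies off $\Gamma_i$ cannot have both endpoints on the same tree-path. That is the content of the lemma for tree-paths, and nothing further is needed.

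Your third paragraph, however, introduces a genuine error. You argue that maximality forces the endpoints of $W'$ to be frame vertices, conclude that $W'$ traverses $\tau$ end-to-end, and then try to reroute $\mathcal P$ along $\tau$ to contradict cheapness. This rerouting cannot work: $\tau$ is a radial curve, i.e.\ a path in $\radgraph$, not in $G$, so you cannot route a $T$-linkage of $G$ along it. Moreover, cheapness of $\mathcal P$ is defined with respect to the edges outside $\cup\mathcal C$ (the tight concentric cycles inside the thick terminal-free rings), and $\tau$ lies in a terminal-containing ring where $\mathcal C$ has no cycles at all; shortcutting along $\tau$ says nothing about the quantity that cheapness minimises. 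This entire paragraph should be deleted---the lemma for tree-paths is already proved once you observe that the pushing absorbs every such subsegment.

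There is also a mismatch in your reading of the statement. If ``maximal subwalk in $\oring\setminus(E(B_{i-1}')\cup E(B_i))$'' literally means a maximal subwalk avoiding frame edges, then its endpoints lie on frame vertices or terminals, and ``both endpoints on the same tree-path'' is vacuous (internal vertices of a tree-path are never frame vertices). The lemma is used to show that subwalks correspond to \emph{diagonals} of the polygonal schema $\Delta_i$, so the intended pieces are those obtained after cutting along $\Gamma_i$ as well; their endpoints can land on tree-paths. Your maximality discussion in the third paragraph reflects this confusion.

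For the boundary-path case, invoking Lemma~\ref{lem:oscillating} is the right instinct, but that lemma concerns the pushing inside $\xring$, not $\oring$; you still need to say why a segment of $\mathcal P$ in $\oring$ cannot have both endpoints on a single boundary-path of $B_i$ (or $B_{i-1}'$). This is where one uses that visitors of the cheap linkage do not reach the region containing the frame (Lemma~\ref{lem:visitors} together with the placement of $B_i$ between $C_{2^{5k}}$ and $C_{2^{6k}}$), and the paper's own text is silent here as well.
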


\subsection{Crossing Pattern with respect to Abstract Polygonal Schema} 
Now we are ready to define the crossing pattern of the canonical weak linkage $\mathcal W$ of $\mathcal P$. To do this, we encode the information
on the interaction between $\mathcal W$ and each framed ring. 
In the following, we let $\ring_i^\textsf{o}=\ring(B_{i-1}', B_i)\setminus (E(B_{i-1}')\cup E(B_i))$ and $\ring_i^\textsf{x}=\ring(B_{i},B_i')$. 
In this way, each edge of $E(G^{\textsf{rad}})$ belongs to exactly one of the framed rings.
Note that $\ring_i^\textsf{o}$ 
contains a terminal of $\bar{T}$, but $\ring_i^{\textsf{x}}$ contains no terminal of $\bar{T}$. 
Moreover, the radial distance between $B_{i-1}'$ and $B_i$ is $2^{O(k)}$
by construction. That is, the shortest path in $\radgraph$ between a vertex of $B_{i-1}'$ and a vertex in $B_i$ is $2^{O(k)}$.

\paragraph{Interaction between $\mathcal W$ and $\textsf{Ring}_i^\textsf{o}$.} 
A maximal subwalk $W$ of a walk in $\mathcal W$ contained in $\textsf{Ring}_i^\textsf{o}$ has endpoints on
the frames or the terminals. Also, no edge of the frames
are included in $\oring$.

\begin{figure}
    \centering
    \includegraphics[width=0.9\textwidth]{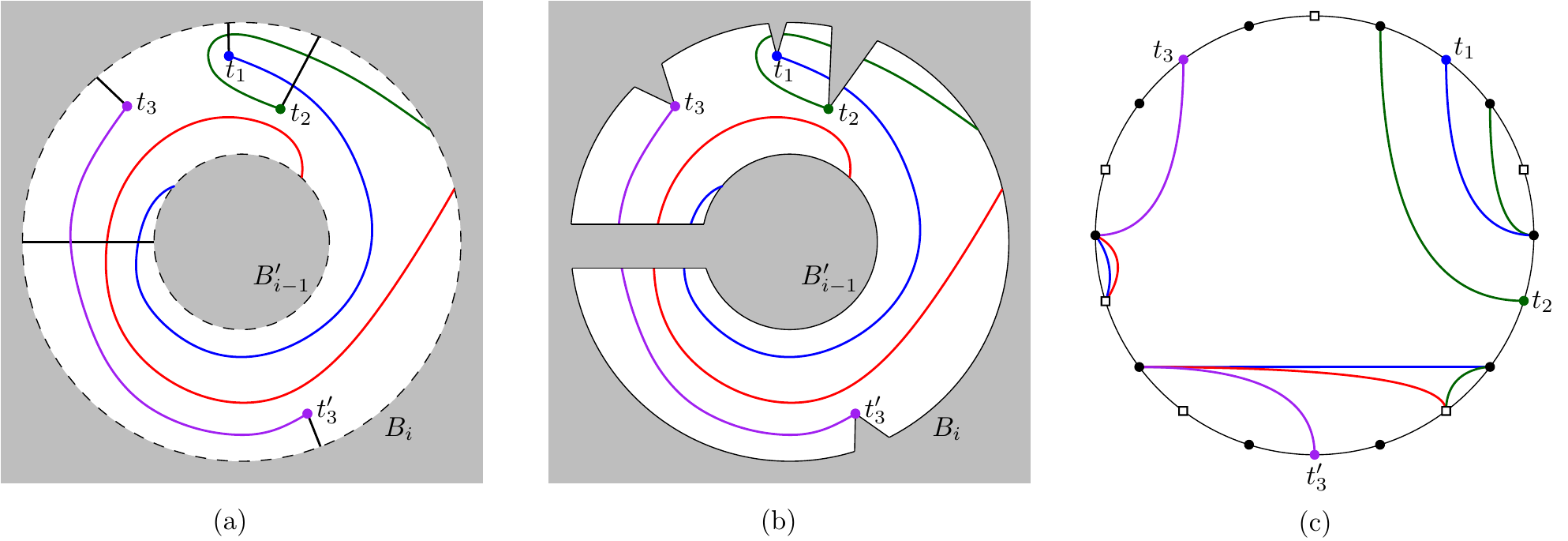}
    \caption{\small (a) 
    The black lines are tree-paths in $\Gamma_i$, and the dashed lines are boundary-paths. 
    The colored vertices of the leaf nodes in $\Gamma_i$ are terminals.
    (b) By cutting $\oring$ along $\Gamma_i$, every terminal lies on the boundary, and each tree-path appears exactly twice.
    (c) The black vertices are tree-paths, the squares are boundary-paths, and the colored vertices are terminals in $\oring$.
    }
    \label{fig:poly_schema}
\end{figure}

We first cut $\oring$ along the skeleton forest $\Gamma_i$ of $\oring$ so that every terminal in $\oring$ lies on the boundary of $\oring$ and each tree-path appears exactly twice. See Figure~\ref{fig:poly_schema}. 
Since the complexity of $\Gamma_i$ is $2^{\Theta(k)}$ in the worst case, 
we represent it in a more compact way as an \emph{abstract polygonal schema} $\Delta_i$: a convex polygon with $O(k_i)$ vertices, where $k_i$ denotes the number of terminals in $\oring$. This idea also used in~\cite{erickson2011shortest} for computing
shortest non-crossing walks in a planar graph. 
Each vertex of $\Delta_i$ corresponds to a tree-path, a boundary-path, or 
a vertex of $\Gamma_i$ of degree one or degree at least three. 
A maximal subwalk $W$ of a walk of $\mathcal W$ contained in $\oring$ corresponds to a diagonal of $\Delta_i$ due to Lemmas~\ref{lem:visitor-oring} and~\ref{lem:no-loop in tree-path}.  
Note that two distinct maximal subwalk of a walk of $\mathcal W$ might correspond to the same diagonal of $\Delta_i$. We define the \emph{weight} of a diagonal of $\Delta_i$ as the number of distinct maximal subwalks of the walks of $\mathcal W$ corresponding to the diagonal. Note that each diagonal has weight $2^{O(k)}$. 
In this way, 
the interaction between $\mathcal W$ and $\oring$ is encoded as a weighted triangulation of $\Delta_i$. 
 
\paragraph{Interaction between $\mathcal W$ and $\textsf{Ring}_i^\textsf{x}$.} 
A maximal subwalk $\omega$ of a walk of $\mathcal W$ 
contained in $\xring$ 
has  one endpoint on $V(B_{i})$ and the other endpoint on $V(B_{i}')$. 
But in this case, the edges of the frames $B_{i}$ and $B_i'$ are included in $\xring$. Therefore, 
$\omega$ first rotates around $B_{i}$ several times, and then it departs from $B_{i}$. Then it hits $B_i'$, and then rotates around $B_i'$ several times. Since $\xring$ is terminal-free, it suffices to record the winding numbers of the maximal walks as the crossing pattern.

Let $\mathcal W^i= \langle \omega^1,\ldots,\omega^\ell\rangle$ be the sequence of maximal walks of the walks of $\mathcal W$ in $\xring$
sorted in sorted along the their endpoints in $V(B_i)$. 
Also, let $\mathcal W^i_{\textsf{mid}}$ be the sequence of the subsegments of $\omega^j$'s excluding the prefixes and suffixes of
the walks of $\mathcal W^i$. 
Also, let $Q_\textsf{mid}$ be the part of the initial path of $\mathcal Q_i$ excluding its prefix and its suffix. 
As the crossing pattern, we first record $\ell$, the size of $\mathcal W^i$.
Then we record $\textsf{WindNum}(\omega, Q_\textsf{mid})$ 
for all subsegments $\omega$ of $\mathcal W^i_\textsf{mid}$ in a \emph{compact} way.
There exists a $\omega_j$ in $\mathcal W^i_{\textsf{os}}$ (and $\mathcal W^i_{\textsf{os}}$) such  that $\textsf{WindNum}(\omega_{j'},Q_\textsf{mid})$ is the same for all $j'\leq j$, and 
$\textsf{WindNum}(\omega_{j''},Q_\textsf{mid})$ is the same for all $j''>j$ 
by the proof of Lemma~\ref{lem:oscillating}. Therefore, it is sufficient to record the index $j$,  $\textsf{WindNum}(\omega_{j},Q_\textsf{mid})$, and 
$\textsf{WindNum}(\omega_{j+1},Q_\textsf{mid})$.

\medskip

The crossing pattern of $\mathcal W$ consists of 
the weighted triangulations for all terminal-containing framed rings  $\textsf{Ring}_i^\textsf{o}$ and 
the two components of the encoding for all  terminal-free framed rings  $\textsf{Ring}_i^\textsf{x}$.

\begin{lemma}\label{lem:num-crossing}
We can enumerate $2^{O(k^2)}$ different crossing patterns in $2^{O(k^2)}n$ time one of which
is the crossing pattern of the canonical weak linkage
of a base $T$-linkage. 
\end{lemma}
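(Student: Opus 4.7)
The plan is to bound the number of crossing patterns ring by ring, using the fact that the contributions from terminal-containing and terminal-free framed rings are independent, and then to describe a direct iteration over all combinations. The $n$ factor will come solely from the preprocessing (frames, subframes, skeleton forests, reference paths, and abstract polygonal schemas) carried out in the preceding sections in $2^{O(k)}n$ time; the combinatorial enumeration itself will fit in $2^{O(k^2)}$ time.

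For a terminal-containing framed ring $\oring$, the abstract polygonal schema $\Delta_i$ has $O(k_i)$ vertices, where $k_i$ is the number of terminals contained in the ring. A crossing pattern on $\oring$ is a triangulation of $\Delta_i$ together with an integer weight on every triangle side. The Catalan bound gives $2^{O(k_i)}$ distinct triangulations, and each triangulation has $O(k_i)$ sides. By the construction of the canonical weak linkage, and in particular by the fact that each push operation onto the skeleton forest increases multiplicity only along already-traversed oscillating subsegments, each side carries an integer weight in $\{0,1,\dots,2^{O(k)}\}$. Hence the number of weighted triangulations of $\Delta_i$ is at most $2^{O(k_i)}\cdot (2^{O(k)})^{O(k_i)}=2^{O(k\cdot k_i)}$. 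Since $\sum_i k_i\le 2k$, taking the product over all terminal-containing rings yields $2^{O(k^2)}$ patterns from this type of ring.

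For a terminal-free framed ring $\xring$, a crossing pattern is the 4-tuple $(\ell,j,w_1,w_2)$, where $\ell$ is the number of maximal subwalks of $\mathcal W$ crossing $\xring$, $j\in\{1,\dots,\ell\}$ is the pivot index, and $w_1=\windnum(\omega_j,Q_\textsf{mid})$, $w_2=\windnum(\omega_{j+1},Q_\textsf{mid})$ are the two winding numbers with respect to the middle subsegment of the initial reference path. Lemma~\ref{lem:min-cut} gives $\ell\le 2^{O(k)}$, and the earlier bound on the winding number between the middle subsegments of a traversing segment of $\mathcal P$ and $Q$ gives $|w_1|,|w_2|\le 2^{O(k)}$. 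So each terminal-free ring contributes $2^{O(k)}$ distinct patterns, and with $O(k)$ such rings the total is $(2^{O(k)})^{O(k)}=2^{O(k^2)}$. Multiplying the two types gives $2^{O(k^2)}$ crossing patterns overall.

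For the enumeration, we first compute the frames, subframes, skeleton forests, reference paths, and abstract polygonal schemas in $2^{O(k)}n$ time using the constructions in Sections~\ref{sec:cutting} and~\ref{sec:week_linkages}, and then iterate over all combinations of weighted triangulations (ring by ring) and terminal-free parameters. Each crossing pattern has description size $2^{O(k)}$, so writing them all out takes $2^{O(k^2)}$ additional time, for a grand total of $2^{O(k^2)}n$. Correctness — that the canonical weak linkage of a base $T$-linkage $\mathcal P$ (which exists whenever $(G,T,k)$ is a \textsf{YES}-instance) has a crossing pattern in this family — is immediate from the definition of crossing pattern together with the weight and winding number bounds used above. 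The main subtlety I expect to handle carefully is the per-side weight bound of $2^{O(k)}$ for the triangulations of $\Delta_i$: this needs the fact that, after pushing onto the skeleton forest, no tree-path and no boundary-path is traversed more than $2^{O(k)}$ times, which follows from the $2^{O(k)}$ complexity of these paths combined with Lemma~\ref{lem:no-loop in tree-path} ruling out subwalks with both endpoints on the same tree-path or boundary-path.
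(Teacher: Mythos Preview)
Your proposal is correct and follows essentially the same approach as the paper: bound terminal-containing rings via weighted triangulations of $\Delta_i$ (Catalan bound times $(2^{O(k)})^{O(k_i)}$ weight choices, then take the product using $\sum_i k_i\le 2k$), bound terminal-free rings via the $4$-tuple $(\ell,j,w_1,w_2)$ with each entry in $2^{O(k)}$ over $O(k)$ rings, and account for the $n$ factor only in the $2^{O(k)}n$ preprocessing. The only cosmetic difference is that you bound the description size of a single crossing pattern by $2^{O(k)}$ rather than the sharper $O(k^2)$, but this does not affect the final $2^{O(k^2)}n$ bound.
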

\begin{proof}
    To enumerate crossing patterns, we first compute $\Delta_i$ for all terminal-containing framed rings $\oring$ in $2^{O(k)}n$ time. 
    Also, we compute the set  $\mathcal Q_i$ of the reference paths 
    for all terminal-free framed rings $\xring$ in $2^{O(k)}n$ time.
    Then for each index $i$, we decompose
    the initial path of $\mathcal Q_i$ into five pieces: the prefix, suffix, oscillating subsequences, and traversing subsequences. 
    This also takes $2^{O(k)}n$ time. 
 
    We first analyze the number of different crossing patterns for the terminal-containing rings.
    The crossing pattern for each ring $\oring$ is a weighted triangulation of $\Delta_i$. Recall that the complexity of $\Delta_i$ is $O(k_i)$, where $k_i$ denotes the number of terminals contained in $\oring$. Also, the weight of each diagonal of $\Delta_i$ is  $2^{O(k)}$.
    The number of distinct triangulations of $\Delta_i$ is $2^{O(k_i)}$.
    For a fixed triangulation, 
    the number of distinct combinations of the weights of the diagonals
    is $(2^{O(k)})^{O(k_i)}$ since $\Delta_i$ has $O(k_i)$ diagonals.
    The total number of different crossing patterns for all terminal-containing rings is $\prod_{i}(2^{O(k)})^{O(k_i)}=2^{O(k^2)}$ since
    the sum of $k_i$ for all terminal-containing rings is $k$.

    Now we analyze the number of different crossing patterns for the terminal-free framed rings. Recall that the number of terminal-free framed rings is at most $k$.
    As the crossing pattern, we record the number of walks, and a specific index of the walks and the winding numbers of two specific walks, which are all integers of $2^{O(k)}$. The number of combinations of the four integers  for 
    all terminal-free rings is $(2^{O(k)})^k=2^{O(k^2)}$. 
    Therefore, the total number of distinct crossing patterns for all terminal-free framed rings is $2^{O(k^2)}$.

    Therefore, the total number of different crossing patterns is $2^{O(k^2)}$, and we can enumerate each crossing pattern in time linear in the complexity of the crossing pattern. Therefore, in total,
    we can enumerate all crossing patterns in $2^{O(k^2)}n$ time. 
\end{proof}

\section{Reconstructing a Weak Linkage from a Crossing Pattern}\label{sec:weak_linkage_construction}
In this subsection, given a crossing pattern $\sigma$, we show how to compute
a weak $T$-linkage $\mathcal W$ 
homotopic to a $T$-linkage $\mathcal P$ whose crossing pattern is $\sigma$. 
The weak linkage $\mathcal W$ which we will construct in this subsection
might traverse the same vertex more than once. But in this case, such a vertex 
lies on a frame. 
Recall that the total complexity of the frames is $2^{O(k)}$, which will be crucial to reconstruct a $T$-linkage from a weak $T$-linkage efficiently in Section~\ref{sec:reconstructing}. 

We compute the subwalks of the walks of $\mathcal W$ contained in each framed ring,
and then we merge the subwalks from two framed rings so that they become a weak $T$-linkage.
To make the merge procedure easier, for a frame $B_i$ (and $B_i'$), we contract all edges on $B_i$ (and $B_i'$) so that $B_i$ (and $B_i'$) consists a single loop
with both endpoints on the origin of $B_i$ (and $B_i'$).  Let $G_\textsf{contract}$ be the plane graph obtained from $G$ by contracting all edges on the frames. 
A weak linkage of $\gcontract$ corresponds to a weak linkage of $G$ in a straightforward way. 
Moreover, if two weak linkages of $\gcontract$ are homotopic, then their corresponding weak linkages in $G$ are also homotopic. 

In the following, we consider $\ring_i^{\textsf{o}}$ and $\ring_i^{\textsf{x}}$ as the framed rings of $\gcontract$.
Also, we let $\mathcal W$ be the weak $T$-linkage we are going to construct in this section. 

\begin{figure}
    \centering
    \includegraphics{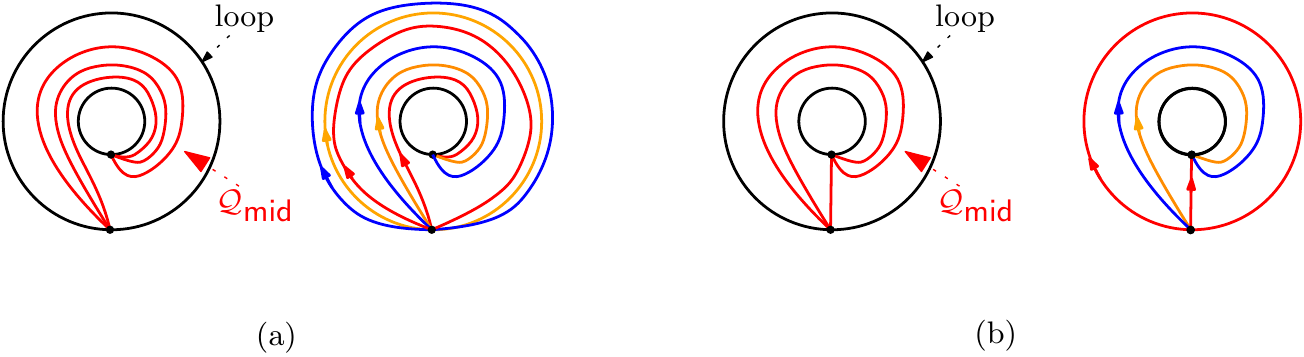}
    \caption{\small (a) The reconstruction of $\mathcal W$ in the case that  $\windnum(\omega, Q_\textsf{mid})=-2$
    for all $\omega\in\mathcal W_\textsf{mid}$,
    and $\windnum(Q, Q_\textsf{mid})=0$ for all $Q\in\mathcal Q_\textsf{mid}$.
    (b) The reconstruction of $\mathcal W$ in the case that  $\windnum(\omega, Q_\textsf{mid})=0$
    for all $\omega\in\mathcal W$, 
    and $\windnum(Q, Q_\textsf{mid})=-1$ for exactly one middle subsequences $Q$ of $\mathcal Q_\textsf{mid}$.    
    }
    \label{fig:reconstructing}
\end{figure}

\paragraph{Reconstructing a Weak Linkage lying on $\ring_i^\textsf{x}$.}
Let $\mathcal Q_i$ be the set of reference paths of $\xring$, and let $\mathcal Q_\textsf{mid}$ be the middle subsequences of 
the reference paths. 
Let $Q_\textsf{mid}$ be the middle subsegment of the initial path of $\mathcal Q_i$. 
Also, let $\mathcal W_\textsf{mid}=\langle \omega_1,\ldots,\omega_\ell\rangle$ be the set of middle subsegments
of the walks of $\mathcal W$. 

As the crossing pattern for $\xring$, we are given the size of $\mathcal W_\textsf{mid}$, and 
the winding numbers of $\mathcal W_\textsf{mid}$ with respect to $Q_\textsf{mid}$. We can construct $\ell$ walks each of which
traverses $B_i$ (which is a loop) several times, and then traverses
a path of $\mathcal Q_\textsf{mid}$ due to Lemma~\ref{lem:winding}. 

\begin{lemma}\label{lem:winding}
For any crossing pattern in $\xring$, we can reconstruct a weak linkage $\mathcal W_{\textsf{mid}}$ having the given crossing pattern in time linear in $2^{O(k)}n$ time. 
\end{lemma}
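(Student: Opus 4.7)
The plan is to build the $\ell$ middle subsegments $\omega_1,\ldots,\omega_\ell$ explicitly by concatenating prescribed wraps around the contracted loops $B_i$ and $B_i'$ with appropriate reference paths. As noted in the encoding step, a valid crossing pattern supplies four numbers: $\ell$, a pivot index $j$, and two winding numbers $w_1 := \textsf{WindNum}(\omega_j, Q_\textsf{mid})$ and $w_2 := \textsf{WindNum}(\omega_{j+1}, Q_\textsf{mid})$. From the proof of Lemma~\ref{lem:oscillating} applied to the canonical weak linkage inside $\xring$, we know $\textsf{WindNum}(\omega_s, Q_\textsf{mid}) = w_1$ for all $s \le j$ and $= w_2$ for all $s > j$, so these four numbers fully determine the winding profile that $\mathcal W_\textsf{mid}$ must realise.

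First I would pick $\ell$ pairwise vertex-disjoint reference paths $Q_1,\ldots,Q_\ell$ from $\mathcal Q_i$; such a selection exists since $|\mathcal Q_i|$ is a maximum cut and is at least $\ell$ by Menger's theorem and Lemma~\ref{lem:min-cut}. For each $s \in [\ell]$, I would assemble $\omega_s$ in three phases inside $\gcontract$: (i) starting at the endpoint of $Q_s$ on $B_i$, traverse the loop $B_i$ exactly $|w|$ times in the direction given by the sign of $w \in \{w_1, w_2\}$ (whichever group $s$ belongs to); (ii) follow $Q_s$ across $\xring$ to its endpoint on $B_i'$; (iii) traverse $B_i'$ a few times to absorb a possible $\pm 1$ correction coming from which side of $Q_\textsf{mid}$ the walk enters and exits the frames. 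Because $B_i$ is a single loop in $\gcontract$, each wrap contributes exactly $\pm 1$ to the winding number against $Q_\textsf{mid}$ by the signed-crossings definition, so the target $w_1$ or $w_2$ is hit exactly.

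To check that the result is a valid weak linkage, I would verify non-crossingness as follows. Within each of the two groups $\{1,\ldots,j\}$ and $\{j+1,\ldots,\ell\}$, the walks share the same direction and the same winding number around $B_i$ and $B_i'$, so their wraps can be arranged as concentric nested tours of the loop, exactly as in the U-turn-free configuration of Lemma~\ref{lem:oscillating}; since the reference paths $Q_1,\ldots,Q_\ell$ are vertex-disjoint, the portions outside the loops never meet. Between the two groups, the jump at the pivot index $j$ places all wraps with winding $w_2$ either strictly inside or strictly outside those with winding $w_1$ along $B_i$, preserving the non-crossing property.

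The main obstacle is the running time bound of $2^{O(k)}n$. Each reference path has complexity $O(n)$, there are $\ell \le 2^{O(k)}$ of them, and each loop $B_i, B_i'$ has complexity $|V(B_i)| + |V(B_i')| = 2^{O(k)}$ while the absolute winding numbers are $2^{O(k)}$ by the bound preceding the construction of the base $T$-linkage. The total edge count of $\mathcal W_\textsf{mid}$ is therefore $2^{O(k)} \cdot n + \ell \cdot 2^{O(k)} \cdot 2^{O(k)} = 2^{O(k)} n$, and each walk can be written down in time linear in its length once the reference paths and the contracted frames are known, which were pre-computed in Section~\ref{sec:cutting}. This yields the claimed $2^{O(k)}n$ bound.
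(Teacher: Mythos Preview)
Your proposal is essentially the same approach as the paper: wrap around the contracted frame loop an appropriate number of times and then follow one of the reference paths, with the cyclic order of the reference paths matched to the cyclic order of the target walks. The only real difference is bookkeeping: the paper computes the wrap count in one shot as $(N_x - q_x)$ on $B_i$, where $q_x = \textsf{WindNum}(Q_x, Q_\textsf{mid})$ is the winding of the $x$th reference middle subsegment, whereas you do $|w|$ wraps on $B_i$ and defer a correction to $B_i'$; your ``$\pm 1$ correction'' is exactly the $q_x$ term, and the paper's alignment assumption (rotating so that the last reference path has the largest winding) plays the same role as your implicit alignment of the pivot index when arguing non-crossingness between the two groups.
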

\begin{proof}
Our goal is to compute $\omega_x$ for all $x\in[\ell]$, where $\mathcal W_\textsf{mid}=\langle \omega_1,\ldots,\omega_\ell\rangle$. 
Note that $\mathcal W_{\textsf{mid}}$ and $\mathcal Q_i$ are sorted along its end vertices in clockwise order.
There are two tasks to construct $\omega_x$: first, we are required to decide which walk of $\mathcal Q_{\textsf{mid}}$ 
$\omega$ traverses. Next, we are required to determine how many times $\omega$ traverses around $B_i$. These canbe determined by looking
at the winding numbers of $\omega_1,\dots,\omega_{\ell}$. Without loss of generality, we assume that the winding number for the walks  $\omega_1,\dots \omega_j$ is $N$ and for the others $\omega_{j+1},\dots \omega_\ell$ is $(N-1)$. Otherwise, we can rotate $\mathcal W$ 
and $\mathcal W_{\textsf{mid}}$ (Recall that $\mathcal W$ is a cyclic sequence). 
Analogously, we assume that the last path $Q_{-1}$ of $\mathcal Q_\textsf{mid}$ has the largest winding number with respect to the initial path $Q_1$ of $\mathcal Q_{\textsf{mid}}$. 

We construct the $x$th walk $\omega_x$ of $\mathcal W_{\textsf{mid}}$ as follows: we first traverse $B_i$ in clockwise direction  $(N_{x}-q_x)$-times and then traverse $Q_x$, where $N_{x}$ and $q_x$ be the winding numbers of $\omega_x$ and $Q_x$, respectively, with respect to the initial path of $\mathcal Q_\textsf{mid}$. For illustration, see Figure~\ref{fig:reconstructing}. 
The constructed weak linkage $\mathcal W_{\textsf{mid}}$ has the give the crossing pattern. 
\end{proof} 

In this way, we can construct $\ell$ walks in $\xring$ such that the winding numbers with respect to $Q_\textsf{mid}$ 
are determined in the same way as the crossing pattern for $\xring$.

\paragraph{Reconstructing a Weak Linkage Lying on $\ring_i^\textsf{o}$.}
Recall that the crossing pattern for  $\ring_i^\textsf{o}$ is a weighted triangulation of $\Delta_i$. 
To obtain a weak linkage on $\ring_i^\textsf{o}$, we use Lemma~\ref{lem:boundary-instance}. 
We construct $G'$ and $T'$ satisfying the condition of Lemma~\ref{lem:boundary-instance} as follows. 
Recall that every vertex of $\Pi_i$ corresponds to a boundary-path, a tree-path, or a degree-1 vertex of the forest $\Gamma$. 
We first compute $\Gamma$, and cut $\oring$ along $\Gamma$ so that the tree-paths of $\Gamma$ appear exactly twice, and they become incident to the outer face of the resulting graph. Let $\ring_i^{\textsf{cut}}$ be the resulting graph. 
Then we contract all the edges in  each tree-path of $\Gamma$ on  $\ring_i^{\textsf{cut}}$. Let $G'$ be the resulting graph. 
A vertex of $G'$ lying on the outer face corresponds to a vertex of $\Delta_i$.
Then a diagonal of $\Delta_i$ corresponds to a vertex pair or $G'$. 
We let $T'$ be the set of all diagonals (pairs of vertices of $G'$) of the weighted triangulation of $\Delta_i$ in the crossing pattern.
If the diagonal (vertex pair) of the triangulation of $\Delta_i$ has weight $w$, we add $w$ copies of the vertex pair to $T'$. 

By applying Lemma~\ref{lem:boundary-instance} to $(G',T')$, we can compute a weak $T'$-linkage of $G'$  
such that the paths of the weak linkage are vertex-disjoint and edge-disjoint, except for their endpoints, in $2^{O(k)}n$ time.
Then we ``uncontract'' the vertices of $G'$ to reconstruct $\ring_i^{\textsf{cut}}$. 
In this way, the walks of the weak $T'$-linkage of $G'$ becomes the walks of $\ring_i^{\textsf{cut}}$ 
accordingly. Then we reconstruct $\oring$ from $\ring_i^{\textsf{cut}}$ by ``uncutting'' the boundary edges of $\ring_i^{\textsf{cut}}$ 
along $\Gamma$. They form a weak linkage $\mathcal W_i^{\textsf{o}}$ of  $\oring$ such that 
any two such walks traverse the same vertex only at the origins of the frames and $\Gamma$. 

\begin{lemma}\label{lem:boundary-instance}
Let $G'$ be a planar graph with $n$ vertices, and $T'$ be a (multi)set of terminal pairs such that the terminals are incident to the outer face of $G$. Then we can find a weak $T'$-linkage 
such that the walks of the weak linkage are vertex-disjoint (and also edge-disjoint), except for their endpoints, 
in $O(n|T'|)$ time, if it exists.
\end{lemma}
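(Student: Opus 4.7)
}
The plan is to solve this via the classical \emph{innermost pair} recursion, which is the standard technique for disjoint paths problems on planar graphs whose terminals lie on a single face. Since every terminal of $T'$ is incident to the outer face of $G'$, we can fix a cyclic ordering of the terminals along the boundary of the outer face and identify each pair of $T'$ with a chord of the outer disk. In a single pass I would first verify feasibility: if two pairs form crossing chords then no non-crossing (and in particular no vertex-disjoint) linkage can exist, so I return failure. Otherwise the chord diagram is planar, hence there exists an \emph{innermost} pair $(s,t)\in T'$ --- a pair such that one of the two boundary arcs from $s$ to $t$ (call it $A$) contains only terminals whose partners are again on $A$.

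The main algorithmic step is to choose $(s,t)$ innermost with $A$ empty of other terminals (such a pair always exists if the chord diagram is planar, by peeling off the innermost pair on either side) and to compute a shortest $s$--$t$ path $P$ in $G'$ via BFS in $O(n)$ time, preferring edges on the $A$-side whenever ambiguity arises so that $P$ is the ``leftmost'' path toward $A$. The path $P$ together with $A$ encloses a region $R$; I would remove from $G'$ the interior vertices of $P$ together with the vertices strictly inside $R$, obtaining a smaller planar graph $G''$ whose terminals are those of $T'\setminus\{(s,t)\}$ and are still on its outer face. I would then recurse on $(G'',T'\setminus\{(s,t)\})$. If the innermost pair found has nonempty $A$ containing mutually-matched terminals, I would first recurse on those inner pairs and then process $(s,t)$ afterwards.

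Correctness rests on an exchange argument. If any weak $T'$-linkage $\mathcal L$ exists, then the walk $P^\star\in\mathcal L$ routed between $s$ and $t$ separates the outer disk into two sub-disks, one of which contains exactly the terminal set $A$ (by innermostness). The leftmost shortest path $P$ lies in the closure of the $A$-side of $P^\star$, so substituting $P$ for $P^\star$ and discarding vertices enclosed between $P$ and $P^\star$ yields a valid weak linkage for the remaining pairs on $G''$. The multiset nature of $T'$ is not an issue: duplicated pairs $(s,t)$ simply appear as several parallel chords and are handled one-by-one by the same recursion; paths sharing an endpoint terminal remain allowed because the exception in the vertex-disjoint condition is precisely at endpoints.

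For the running time, each recursive call does one BFS costing $O(n)$ in the current subgraph and removes at least the pair $(s,t)$ from the terminal set. With $|T'|$ pairs in total, the cost is $O(n|T'|)$, charging each BFS to the whole graph (rather than only the shrinking part) to avoid bookkeeping. The main obstacle I anticipate is making the exchange argument watertight --- in particular, verifying that after the leftmost substitution the region between $P$ and $P^\star$ does not contain any vertex of another walk of $\mathcal L$. This follows from planarity plus innermostness: such a trapped walk would have both endpoints on $A$, but by innermostness both endpoints of any other pair inside $A$ are in $A$, and these pairs can be rerouted into the same trapped region without conflict, completing the induction.
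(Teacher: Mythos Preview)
Your high-level strategy---peel off an innermost pair $(s,t)$ with a terminal-free arc $A$, route it, delete, recurse---is exactly the paper's approach, and your feasibility check and running-time accounting are fine. The gap is in the routing step and the exchange argument that is supposed to justify it.

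You route $(s,t)$ by the \emph{leftmost shortest} $s$--$t$ path $P$ and then assert that ``the leftmost shortest path $P$ lies in the closure of the $A$-side of $P^\star$.'' This is false. The tiebreak toward $A$ only operates among paths of minimum length; if every shortest path lies on the $A'$-side of $P^\star$, your $P$ will too. Concretely, take the outer cycle $s,a_1,a_2,a_3,t,u,v,s$, delete the boundary edge $uv$, and add an interior vertex $x$ adjacent to $s,t,u,v$. With $T'=\{(s,t),(u,v)\}$, the pair $(s,t)$ is innermost with $A=s\,a_1\,a_2\,a_3\,t$. The unique shortest $s$--$t$ path is $s\,x\,t$, which lies on the $A'$-side of the only feasible $P^\star=s\,a_1\,a_2\,a_3\,t$. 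After you remove $x$ (interior of $P$) and $a_1,a_2,a_3$ (interior of $R$), the vertices $u$ and $v$ become disconnected and the recursion fails, even though the instance is a \textsf{YES}-instance via $P^\star$ for $(s,t)$ and $u\,x\,v$ for $(u,v)$.

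The fix is the one the paper uses: do not run BFS at all; simply take $P$ to be the boundary arc $A$ itself. Since $A$ is the extreme path on the $A$-side, it automatically lies in the closed $A$-side of any $P^\star$, and then your exchange argument goes through verbatim: the region between $A$ and $P^\star$ contains no other terminals (innermostness) and hence no other walk of $\mathcal L$, so replacing $P^\star$ by $A$ and deleting the internal vertices of $A$ preserves feasibility for $T'\setminus\{(s,t)\}$. The per-step cost is still $O(n)$ (walk along the outer face), so the $O(n|T'|)$ bound stands.
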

\begin{proof}
    We assume that $G'$ is connected. Otherwise, we consider each component independently. Let $C$ be the boundary cycle which incident to the outer face. Every terminals are contained in $C$.
    We say two terminal nodes $t$ and $t'$ in $T'$ is consecutive if $t'$ be the very first terminal node from $t$ in clockwise (or counterclockwise) direction along $C$. 

    Our algorithm works as follows: 
    choose a terminal pair $(t, t') \in T'$ which consecutive, and compute 
    the path between $t$ and $t'$ along the outer face of $G'$ in the  clockwise direction. 
    If there is not such consecutive terminal pair, then we return \textsf{NO}.
    Then we remove internal vertices and edges of the path from $G'$, and remove $(t,t')$ from $T'$. 
    Furthermore, if a vertex $t$ has no pair in $T'$, we delete $t$ also.
    Even if the obtained $G'$ is not connected, we can consider each connected component separately. 
    We repeat this procedure until $T'$ becomes empty. Let $\mathcal W$ be the weak $T'$-linkage obtained in this way. 
    Then the walks of $\mathcal W$ are vertex-disjoint and edge-disjoint, except for their endpoints. 

    We show that, if the instance is a \textsf{YES}-instance, then there exists a pair $(t,t')$ in $T'$ 
    such that $t$ and $t'$ lie consecutively along $C$.  
    We assume to the contrary that there is a solution $\mathcal W$ to the instance, but there is no such terminal pair in $T'$. 
    Then for any pair $a=(t,t')$, $G'$ is separated by two subgraphs with respect to the path of $\mathcal W$ connecting $t$ and $t'$. 
    Let $\pi_a$ and $\pi_a'$ be the parts of $C$ contained in the two subgraphs. 
    We define the \emph{order} of a pair $a=(t,t')\in T'$ as the minimum of $|V(\pi_a)\cap \bar{T}|$ and $|V(\pi_a')\cap \bar{T}|$.  
    Let $a=(t,t')$ be the pair of $T'$ with a smallest order. 
    Without loss of generality, we assume that $\pi_a$ contains a smaller number of terminals than $\pi_a'$. 
    Let $W$ be the path in $\mathcal W$ connecting $t$ and $t'$. 
    If there exists a terminal pair in $T'$ consisting of two terminals contained in different paths among $\pi_a$ and $\pi_a'$, 
    then any path connecting them crosses $W$. 
    Thus, for every pair in $T'$, the two terminals of the pair are contained in the same path among $\pi_a$ and $\pi_a'$. 
    If there is no terminal pair in $T'$ whose terminals are contained in $\pi_a$, this contradicts to the assumption that $t$ and $t'$ does not lie consecutively along $C$. If it is not the case, there is a terminal pair whose terminals are contained in $\pi_a$, this contradicts that $(t,t')$ is a pair with a smallest order. Thus, there exists a terminal pair in $T'$ whose terminals lie consecutively along $C$
    if the given instance is a \textsf{YES}-instance.
    
    Now we show that the algorithm returns a feasible solution if there exists a feasible weak $T'$-linkage $\mathcal W$. Let $(t,t')$ be a  terminal pair in $T'$ such that $t$ and $t'$ lie on $C$ consecutively.
    We claim that $\mathcal W$ remains to be a weak $T'$-linkage even after we replace $\omega$ with $\omega'$, 
    where $\omega$ is the walk of $\mathcal W$ between $t$ and $t'$, and $\omega'$ is the walk connecting $t$ and $t'$ returned by our algorithm. 
    Note that $\omega$ and $\omega'$ form a bounded face whose boundary and interior have no terminal node, except $t$ and $t'$. Thus, if there exists another walk in $\mathcal W$ which intersects $\omega'$, it would touch $\omega'$ first. This contradicts
    the fact that the walks in $\mathcal W$ are vertex-disjoint (except for their endpoints). Thus,  $\mathcal W$ remains to be a weak $T'$-linkage even after we replace $\omega$ with $\omega'$.

    In each iteration, we find a consecutive terminal pair and a path connecting them along $C$ in $O(n)$ time.  
    The number of iterations is $O(|T'|)$ since we we remove one terminal pair in each iteration.
    Therefore, the total running time is $O(n|T'|)$. 
\end{proof}

\paragraph{Merging the Subwalks Obtained from Each Framed Ring.}
We simply merge all subwalks obtained from the framed rings.
Since each frame forms a loop in $\gcontract$, we can merge them
in a straightforward way. Also, clearly, the resulting walks, say $\mathcal W$, are
non-crossing. Then we ``uncontract'' the loops of $\gcontract$ 
so that each loop becomes a cycle of $\radgraph$. 
Then $\mathcal W$ are ``uncontracted'' accordingly, and it becomes
a weak $T$-linkage of $G$ such that 
an edge traversed by the resulting weak linkage more than once
lies on the frames and the backbone forest. 

This section is summarized in the following lemma. 
\begin{lemma}\label{lem:enumeration-time}
    Given a crossing pattern $\sigma$, we can
    compute a weak $T$-linkage $\mathcal W$ of $\radgraph$ in $2^{O(k)}n$ time whose crossing pattern is $\sigma$ such that an edge of $\radgraph$
    traversed by $\mathcal W$ more than once lies on a frame or a skeleton forest, and an edge of $E(\radgraph)\setminus E(G)$
    traversed by $\mathcal W$ lies on a frame. 
    Moreover, if $\sigma$ is the crossing pattern of a base $T$-linkage $\mathcal P$, 
    then $\mathcal W$ is homotopic to $\mathcal P$. 
\end{lemma}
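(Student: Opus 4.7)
The plan is to build $\mathcal W$ piece by piece, one framed ring at a time, and then stitch the pieces together along the frames. To keep the stitching combinatorial, I would first pass to the contracted plane graph $\gcontract$ in which each frame edge has been contracted, so that every frame $B_i$ (respectively $B_i'$) collapses to a single self-loop based at its origin. A weak linkage constructed in $\gcontract$ lifts canonically back to one in $\radgraph$ in linear time, and the lift preserves discrete homotopy.

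For each terminal-free framed ring $\xring$ I would invoke the construction proved in Lemma~\ref{lem:winding} with the data supplied by $\sigma$, namely the number $\ell$ of walks together with a pivot index and two consecutive winding numbers relative to the initial reference path $Q$ of $\mathcal Q_i$. Concretely, for the $x$-th walk I wind around the loop $B_i$ as many times as dictated by the difference between its prescribed winding number and that of the reference path $Q_x$ it will follow, concatenate with $Q_x$, and then wind around the loop $B_i'$ analogously, laying the walks out at the origin of $B_i$ in the cyclic order fixed by $\sigma$. Because the reference paths are precomputed and all winding numbers are bounded by $2^{O(k)}$, each such ring is handled in $2^{O(k)}n$ time.

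For each terminal-containing framed ring $\oring$ the crossing pattern is a weighted triangulation of the abstract polygonal schema $\Delta_i$. I would cut $\oring$ along its skeleton forest $\Gamma_i$ so that every tree-path appears twice on the boundary of the resulting graph $\ring_i^{\textsf{cut}}$, contract each tree-path to a single point so that the outer face carries precisely the $O(k_i)$ vertices of $\Delta_i$, and form a multiset $T'$ of terminal pairs on the outer boundary by taking each diagonal of the triangulation with multiplicity equal to its weight. Since $|T'|=2^{O(k)}$, Lemma~\ref{lem:boundary-instance} supplies a weak $T'$-linkage in $2^{O(k)}n$ time whose walks are internally vertex-disjoint; uncontracting and uncutting then yields a family of subwalks inside $\oring$ whose shared vertices lie only on $\Gamma_i$ or at frame origins.

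Finally, I would merge the pieces produced in two consecutive framed rings at the origin of their shared frame loop in $\gcontract$; the cyclic order at each origin is determined by $\sigma$, so the merge is purely combinatorial and linear. Uncontracting the frame loops returns $\mathcal W$ inside $\radgraph$. Summing over the $O(k)$ framed rings gives the $2^{O(k)}n$ bound. The two edge-reuse properties are immediate from the construction: repeated $G$-edges can arise only from the frame loops or from traversals of $\Gamma_i$, since Lemma~\ref{lem:boundary-instance} returns internally disjoint walks and Lemma~\ref{lem:winding} reuses only the frame loop; any $E(\radgraph)\setminus E(G)$ edge that is traversed at all lies on a frame. The main obstacle is the homotopy claim. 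Assuming $\sigma$ is the crossing pattern of the canonical weak linkage of a base $T$-linkage $\mathcal P$, within each terminal-free ring Lemma~\ref{lem:winding} together with the invariance of winding numbers under discrete homotopy forces the reconstructed piece to be discretely homotopic to the canonical piece in that ring; within each terminal-containing ring, the weighted triangulation of $\Delta_i$ is a complete homotopy invariant of a non-crossing internally vertex-disjoint family of walks in the disk obtained after cutting along $\Gamma_i$, so the reconstructed piece again matches the canonical piece up to discrete homotopy. Pasting these local homotopies along the frame loops yields a global discrete homotopy from $\mathcal W$ to the canonical weak linkage, which is in turn homotopic to $\mathcal P$ by the construction of the previous section.
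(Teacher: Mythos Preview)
Your proposal is correct and follows essentially the same approach as the paper: pass to $\gcontract$, reconstruct inside each terminal-free ring via the winding-number recipe of Lemma~\ref{lem:winding} and inside each terminal-containing ring via cut--contract--Lemma~\ref{lem:boundary-instance}--uncontract--uncut, then merge at the frame loops and uncontract. Your treatment of the homotopy claim is in fact more explicit than the paper's, which largely leaves that part implicit in the construction of Section~\ref{sec:weak_linkage_construction}.
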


\section{Reconstructing a Linkage from a Weak Linkage}\label{sec:reconstructing}
In this section, we reconstruct a linkage from each weak linkage which we have constructed in Section~\ref{sec:weak_linkage_construction}. 
We use the the algorithm by Schrijver~\cite{schrijver1994finding} using suitable data structures. 
Given a weak $T$-linkage $\mathcal W$ of a planar graph, the algorithm by by Schrijver~\cite{schrijver1994finding} computes
a $T$-linkage of $G$ homotopic (and $F^*$-homologous) to $\mathcal W$ in $O(n^6)$ time, where $F^*$ denotes the innermost face of $G$. 
In our case, we improve the running time of this algorithm to $2^{O(k)}n$ by using suitable data structures and 
the properties of the weak linkage $\mathcal W$ we constructed in Section~\ref{sec:weak_linkage_construction}. 
Let $\gparallel$ be the planar graph obtained from $\radgraph$ by adding $2^{O(k)}$ copies of each edge of the frames and skeleton forests  
so that the walks of $\mathcal W$ become pairwise edge-disjoint. 


In Section~\ref{sec:homology_algorithm}, 
we briefly describe the algorithm of Schrijver~\cite{schrijver1994finding} so that its description is compatible with our notations.
Then we analyze the running time of this algorithm in a fine-grained manner with respect to 
structural parameters of $\gparallel$ as well as the complexity of the graph. Using a suitable data structure, we can implement the
algorithm of of Schrijver~\cite{schrijver1994finding} in $\textsf{poly}(\chi)\cdot n$ time, where $\chi$ is a structural parameter we used in the analysis. We will show that the structural parameter $\chi$ has value $2^{O(k)}$ in our case.  
To obtain a good bound on $\chi$, we slightly modify $G$ into another planar graph $G'$  in Section~\ref{sec:reducing_violation}. Then we analyze an upper bound on the structural parameter 
in Section~\ref{sec:pre_feasible}. 
By combining these arguments, we can conclude that the \textsf{Planar Disjoint Paths} can be solved in $2^{O(k^2)}n$ time.

\subsection{Summary of the Algorithm in~\cite{schrijver1994finding} and Analysis of Its Running Time}\label{sec:homology_algorithm}
  Since the algorithm in~\cite{schrijver1994finding} works with directed graphs, we replace each undirected edge of $\gparallel$ into two edges
  directed in the opposite directions. Also, we consider each walk $W_i$ of $\mathcal W$ as directed from $s_i$ to $t_i$. 
  Let $\mathcal F$ be the set of the faces of $\gparallel$. Note that there is a face of $\mathcal F$ whose boundary consists of 
  the copies of the edges of the innermost face $F^*$ of $G$. Recall that all frames contain the innermost face $F^*$ in their interiors. 
  To make the description easier, we also use $F^*$ to denote the face of $\gparallel$ corresponding to the innermost face of $G$. 
  
  \paragraph{Flow Function and Linkages.}
  A weak $T$-linkage $\mathcal W$ can be considered as a \emph{flow function} 
   $\phi : E \to \Sigma^*$,  
   where $\Sigma^*$ denotes the set of all strings consisting of the symbols in $\Sigma$
  for an alphabet $\Sigma$. Here, we use $\epsilon$ to denote the empty string. Let $\Sigma=\{1,\dots, k\}$. 
  We let $\sigma^{-1}$ be the string obtained from a string $\sigma$ in the reversed way. 
  A \emph{product} of two strings $\sigma, \sigma'$ of $\Sigma^*$,
  denoted by $\sigma\cdot\sigma'$, is defined as a string obtained by concatenating them and then by deleting all appearances of  $xx^{-1}=x^{-1}x$ for $x\in\Sigma$. Note that $\sigma \cdot \sigma^{-1}=\epsilon$. 
  Let $W_i$ be the path of $\mathcal W$ connecting $s_i$ and $t_i$ for $i\in[k]$. 
  For each edge $e\in E(\gparallel)$, let $\phi(e)=i$ if it is used in the path connecting $s_i$ to $t_i$, and 
  let $\phi(e)=\epsilon$ if it is not used by any path of $\mathcal W$. 
  
  In general, a flow function  $\psi:E(\gparallel) \rightarrow \Sigma^*$  is defined as follows. 
  For a vertex $v$, let $e_1,\dots,e_\ell$ be its incident edges in  $\gparallel$ sorted in the clockwise direction.
  Then $h(v)$ be the product of  $\psi(e_r)^{v(e_r)}$'s for all $r=1,2,\ldots,\ell$,
  where the sign $v(e_r)$ is positive if $e_r$ is an outgoing edge from $v$, and negative, otherwise. 
  A function $\psi$ is called a \emph{flow function}
  if $h(s_i)=i$ for all $i\in[k]$, $h(t_i)=i^{-1}$ for all $i\in[k]$,
  and $h(v)=\epsilon$ for $v\notin \bar{T}$.  

  \paragraph{Homology.}
  For a directed edge $e$ of $\gparallel$, we let $L_e$ denote the face lying to the left of $e$, and let $R_e$ denote
  the face lying to the right of $e$. 
  We say two flow functions $\phi$ and $\psi$ are \emph{homologous} if there exists a \emph{homology} function $f: \mathcal{F}\to \{1,1^{-1},\dots, k,k^{-1}\}^*$, where $\mathcal F$ is the set of all faces of $\gparallel$, such that 
  \begin{itemize}
      \item {$f(F^*)=\epsilon$, and }
      \item {$f(L_e)^{-1}\cdot \phi(e)\cdot f(R_e)=\psi(e)$ for every directed edge $e$.}
  \end{itemize}
  
    \paragraph*{Homology Feasibility Problem.}
  Our goal in this section is to compute a flow function $\psi$ homologous to a given flow function $\phi$ such that
  $\psi$ represents a $T$-\emph{linkage}. 
  To control the number of edges with $\psi(\cdot)\neq \epsilon$ incident to a common vertex,
  Schrijver~\cite{schrijver1994finding} introduced a more general problem setting: the \emph{homology feasibility problem}.
  In this problem, we are given a flow function $\phi$, and a candidate function $\Gamma: {\Pi} \to 2^{\Sigma^*}$,
  where $\Pi$ is a set of \emph{face-edge paths}.
  Here, a face-edge path in $\gparallel$ is an alternating sequence of edges and faces such that 
  two consecutive edge and face are adjacent in $\gparallel$. 
  Moreover,  $\Gamma(\pi)$ must be \emph{hereditary}, that is,
  for every string $x\in \Gamma(\pi)$, all its prefixes and $x^{-1}$ also belong to $\Gamma(\pi)$.
  In the following, for a flow function $\psi$, we let 
  $\psi(\pi)$ denote the product of $\psi(e)^{\mu(e)}$'s for all edges $e$ in $\pi$,
  where the sign $\mu(e)$ is positive if $L_e$ is the face lying previous to $e$ in $\pi$, and negative, otherwise. 
  
  Then the goal is to compute a flow function $\psi$ homologous to a given flow function $\phi$ such that 
  $\psi(\pi)\in \Gamma(\pi)$. 
  Schrijver showed that a flow function $\psi$ with $\psi(\pi)\in \Gamma(\pi)$ for all $\pi\in\Pi$ corresponds to a $T$-linkage
  if $\Pi$ is the set of all face-edge paths each consisting of faces and edges sharing a common vertex,
  and if $\Gamma(\pi)=\Sigma\cup\Sigma^{-1}$ for all $\pi\in \Pi$.
  Therefore, it suffices to present an algorithm for the homology feasibility problem. 
  
  \paragraph{Pre-Feasible Function.} 
  Recall that our goal is compute a homology function $f$ such that $\psi(\pi)\in\Gamma(\pi)$ for all face-edge paths $\pi\in\Pi$
   for a given $\Pi$, $\Gamma$, and $\phi$, where $\psi(e)=f(L_e)^{-1}\cdot \phi(e) \cdot f(R_e)$ for an edge $e\in E(\gparallel)$.
  In this case, we call $f$ a \emph{feasible (homology) function}.
  Also, a function $f:\mathcal F\to \Sigma^*$ with $f(F^*)=\epsilon$ 
  is called a \emph{pre-feasible (homology) function} (with respect to $\phi$) if 
  for each face-edge path $\pi$ of $\Pi$, either $\psi(\pi)\in \Gamma(\pi)$, or $f(F)=f(F')=\epsilon$. 

  For two strings $\sigma$ and $\sigma'$ in $\Sigma^*$, we define the \emph{join} of them, denoted by $\sigma\vee \sigma'$, as the smallest string which starts with both $\sigma$ and $\sigma'$. If no such string exists, we set $x\vee y$ is infinite. Furthermore, two functions $f$ and $g: \mathcal F\to \Sigma^*$,  the join $f\vee g$ is defined as $(f\vee g)(F)=f(F)\vee g(F)$ for all faces $F$ of $\mathcal F$. If $(f\vee g)(F)$ is infinite, then we say $f\vee g$ is infinite. Note that, if $f \vee g$ is finite, then $(f\vee g)(F)$ is equal to $f(F)$ or $g(F)$. Furthermore, if $f\vee g=g$, then we say $f$ is \emph{smaller} than $g$.

  \subsubsection{Finding a Smallest Pre-Feasible $\bar f$} \label{sec:subroutine}
   For a function $f: \mathcal F\to \Sigma^*$ with $f(F^*)=\epsilon$, Schrijver~\cite{schrijver1994finding} showed that
  a \emph{smallest} pre-feasible function $\bar f$ larger than $f$ can be well-defined. That is, any finite pre-feasible function $f^*$ larger than $f$ is also larger than $\bar f$. 
  Furthermore, Section 2.4 of ~\cite{schrijver1994finding} describes an algorithm for finding a smallest pre-feasible function $\bar f$ for a given function $f$ with $f(F^*)=\epsilon$. 
  It will be used as a subroutine for computing a $T$-linkage homotopic to $\mathcal W$. 
  The subroutine is based on the following lemma. 

  \begin{definition}
 We say a face-edge path $\pi \in \Pi$ is a \emph{violating path} if 
  $f(F)^{-1}\cdot \phi(\pi)\cdot f(F') \notin \Gamma(\pi)$, and $\phi(e)\neq\epsilon$, 
     where $F$ and $F'$ are the end faces of $\pi$, and $e$ is the first or last edge in $\pi$. 
  \end{definition}

  
 \begin{lemma}[Proposition~2 in~\cite{schrijver1994finding}]\label{lem:violating_face_pair}
    Assume that $f(F^*)=\epsilon$ and $\bar f$ is finite. For a violating face-edge pair $\pi\in \Pi$, 
    \begin{itemize}
         \item $f(F)$ is smaller than $\phi(\pi)f(F')$, or 
         \item $f(F')$ is smaller than  $\phi(\pi)^{-1}f(F)$, but not both, 
         \end{itemize}
    where $F$ and $F'$ are the end faces of $\pi$. 
  \end{lemma}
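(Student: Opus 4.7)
My plan is to exploit the finiteness and minimality of $\bar f$, together with the hereditary structure of $\Gamma$. Throughout, I set $a=f(F)$, $b=f(F')$, and $c=\phi(\pi)$, so that the violation asserts $a^{-1}c\,b\notin\Gamma(\pi)$. Conditions (i) and (ii) mean, respectively, that $a$ is a literal prefix of the reduced word $c\,b$ and that $b$ is a literal prefix of the reduced word $c^{-1}a$. Since $\bar f\ge f$, I can write $\bar f(F)=a\,\alpha$ and $\bar f(F')=b\,\beta$ for some words $\alpha,\beta$, and pre-feasibility of $\bar f$ at $\pi$ gives $\bar f(F)^{-1}c\,\bar f(F')\in\Gamma(\pi)$.

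For the ``at least one'' half, I would reduce $\bar f(F)^{-1}c\,\bar f(F')=\alpha^{-1}(a^{-1}c\,b)\beta$ carefully in the free group $\Sigma^*$. If neither (i) nor (ii) held, then in the reduction no cancellation would ``eat into'' the middle block $a^{-1}c\,b$ from either side -- the boundary letters of $a^{-1}c\,b$ would survive because $a$ is not a prefix of $cb$ and $b$ is not a prefix of $c^{-1}a$. Thus $a^{-1}c\,b$ would appear as a prefix of an initial segment of $\bar f(F)^{-1}c\,\bar f(F')$ after peeling $\alpha^{-1}$; heredity of $\Gamma(\pi)$ (closure under taking prefixes and inverses) would then force $a^{-1}c\,b\in\Gamma(\pi)$, contradicting the violation. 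Hence at least one of (i), (ii) must hold.

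For the ``not both'' half, I would assume both conditions hold and derive a contradiction from the minimality of $\bar f$. Algebraically, the two conditions give $c\,b=a\,u$ and $c^{-1}a=b\,u^{-1}$ with $u=a^{-1}c\,b$ (possibly empty). I would then exhibit the two natural one-step extensions of $f$ -- one raising the $F$-value to $a\,u$, one raising the $F'$-value to $b\,u^{-1}$ -- show each is pre-feasible at $\pi$ (and remains pre-feasible elsewhere because $\bar f$ is), and conclude that both lie below $\bar f$ by minimality. Combining these two bounds on $\bar f(F)$ and $\bar f(F')$ and tracing the cancellations shows the reduced form of $\bar f(F)^{-1}c\,\bar f(F')$ must contain $u$ as a substring whose letters come from $c$, which by heredity forces $u\in\Gamma(\pi)$ and so contradicts the violation. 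The main obstacle is precisely this ``not both'' step: it requires careful control of free-group cancellations in joins of pre-feasible functions, whereas the ``at least one'' half is essentially a direct consequence of pre-feasibility of $\bar f$ and heredity of $\Gamma(\pi)$.
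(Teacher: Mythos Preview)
The paper does not contain its own proof of this lemma; it is quoted as Proposition~2 of Schrijver with no argument given, so there is nothing in the paper to compare your proposal against.

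On the merits of your sketch: the ``at least one'' direction is headed the right way --- pre-feasibility of $\bar f$ at $\pi$ together with heredity of $\Gamma(\pi)$ is indeed the engine. However, the assertion that ``no cancellation eats into the middle block $a^{-1}cb$'' is not automatic. You must identify precisely the first and last letters of the reduced word $a^{-1}cb$ under the hypothesis that neither prefix relation holds, and then check that the outer factors $\alpha^{-1}$ and $\beta$ (about which you only know that $a\alpha$ and $b\beta$ are reduced as elements of $\Sigma^*$) cannot cancel them. This is standard free-group bookkeeping, but it does require an explicit case analysis you have not supplied.

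The ``not both'' half has a genuine gap. You claim each one-step extension of $f$ ``remains pre-feasible elsewhere because $\bar f$ is'', but this inference is invalid: raising $f$ at the single face $F$ can destroy the escape clause $f(F)=f(F')=\epsilon$ on every other path $\pi'\in\Pi$ ending at $F$, and there is no reason $\psi(\pi')\in\Gamma(\pi')$ should then hold for those paths. The existence of a larger pre-feasible function $\bar f$ says nothing about intermediate functions. Without pre-feasibility of your extensions you cannot invoke minimality of $\bar f$ to bound $\bar f(F)$ and $\bar f(F')$, and the rest of the argument collapses. For this half you should look for a direct free-group calculation (which is what Schrijver does) rather than the minimality route you propose.
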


  The algorithm for finding a smallest pre-feasible function $\bar f$ described in~\cite{schrijver1994finding} consists of several iterations. 
  In each iteration, we find a violating path $\pi$ and reset the string $f(F')$ as the shortest string $x$ which makes $x^{-1}\cdot \phi(\pi) \cdot f(F) \in \Gamma(\pi)$ if $\pi$ belongs to the first case of Lemma~\ref{lem:violating_face_pair}.
  Otherwise, we reset $f(F)$ as the as the shortest string $x$ which makes $x^{-1}\cdot \phi(\pi^{-1}) \cdot f(F') \in \Gamma(\pi)$. 
  We repeat this procedure until $f(\cdot)$ becomes pre-feasible. 
  The analysis of~\cite{schrijver1994finding} shows that 
  one can find a smallest pre-feasible function $\bar f$ of $f$  
  in $O(n^4)$ time (assuming $\mathcal A$ has size $O(n^2)$, the size of $|\Gamma(\cdot)|$ is constant.) 
  However, in Section~\ref{sec:pre_feasible}, we show that 
  this procedure takes $2^{O(k)}n$ time in our case 
  due to the properties of $\mathcal W$.

  \subsubsection{Main Algorithm of~\cite{schrijver1994finding}}
  The algorithm of~\cite{schrijver1994finding} first computes several \emph{initial} functions $f$'s, compute their smallest pre-feasible functions, and then take the \emph{join} of them. Schrijver showed how to choose
  such initial functions $f$'s so that the join of their smallest pre-feasible functions becomes a flow function corresponding to a $T$-linkage
  homotopic to $\mathcal W$. 
  In Section~\ref{sec:pre_feasible}, we will show that $|\bar{f}(\cdot)|$ has length $2^{O(k)}$ for all 
  functions $\bar{f}$ we consider in the course of the algorithm.
  Therefore, to analyze the running time of this algorithm,
  it suffices to analyze the number of functions $f$'s we compute. 

  To do this, we briefly describe how the algorithm of~\cite{schrijver1994finding} chooses initial functions $f$'s. 
  To make the description easier, we consider $\pi$ as an oriented face-edge path. Then let $\pi^{-1}$ be the face-edge path
  obtained from $\pi$ by reversing its orientation.  
  Let $\phi$ be the flow function of the given weak $T$-linkage of $\gparallel$. 
 For each violating path $\pi\in \Pi$, we let $f_\pi$ be the function that maps the first face to $\phi(e)$ 
 and maps all other faces to $\epsilon$, where $e$ be first edge in $\pi$. 
 Furthermore, we say two paths $\pi$ and $\pi'$ are \emph{co-viable} if $\bar f_{\pi} \vee \bar f_{\pi'}$ is finite.
 Recall that $\bar{f}$ is a smallest pre-feasible function of $f$. 
 
 Let $\mathcal X$ be the set of all face-edge paths $\pi$ such that $\phi(\pi)\notin \Gamma(\pi)$. 
 Note that, if $\pi \in \mathcal X$, then $\pi^{-1}\in \mathcal X$. 
 We compute $\bar f_\pi$ for each pair $\pi\in \mathcal X$, and check 
 if $\pi$ and $\pi'$ are co-viable for each pair $(\pi,\pi')$ of face-edge paths in $\mathcal X$. 
 Then, we find a set $X\subset \mathcal X$ such that \textsf{(i)} For each path $\pi$ of $\mathcal X$, $X$ contains at least one of $\pi$ or $\pi^{-1}$, \textsf{(ii)} every pair of face-edge paths in $X$ is co-viable. 
 Then we return $\bigvee_{\pi\in X} \bar f_\pi$. 
 For its correctness, refer to~\cite{schrijver1994finding}. 

    \paragraph*{Time Complexity.} 
    In Section~\ref{sec:reducing_violation}, we modify $G$ and construct $\Gamma$ so that the size of $\mathcal X$ is $2^{O(k)}$. 
    (Also, we will represent each path of $\mathcal X$ in a compact way.) 
    Also, in Section~\ref{sec:pre_feasible}, we show that the subroutine for computing $\bar{f}_\pi$ takes $2^{O(k)}n$ time in our case. 
    This means that $\bar{f}_\pi$ has length $2^{O(k)}$. Thus we can find all pairs of co-viable face-edge paths of $\mathcal X$
    in $2^{O(k)}$ time in total.
    Then computing $X$ is indeed a variant of the 2-SAT problem. Thus, this takes 
    polynomial time in the size of $\mathcal X$, which is $2^{O(k)}$. 
    Finally, we join $2^{O(k)}$ strings each of length $2^{O(k)}$. Therefore, the total running time for computing a $T$-linkage homotopic to $\mathcal W$ is $2^{O(k)}$.

\medskip   
  The remaining tasks are to modify $\gparallel$ and construct $\Gamma$ so that the size of $\mathcal X$ is $2^{O(k)}$,
  and to show that the running time of the subroutine is $2^{O(k)}n$.
   
 \subsection{Modification of $\gparallel$ and Construction of $\Gamma$}\label{sec:reducing_violation}
 We first modify $\gparallel$ so that every vertex of the frames and the skeleton forests has degree $2^{O(k)}$.
 Then the weak linkage $\mathcal W$ changes accordingly. We use $\gmod$ to denote the resulting graph, and $\wmod$ to denote the resulting weak linkage. 
 Then we will see that 
 there is a $T$-linkage of $\gparallel$ homotopic to $\mathcal W$ if and only if there is a $T$-linkage of $\gmod$ 
 homotopic to $\wmod$. 
 
 \begin{figure}
     \centering
     \includegraphics{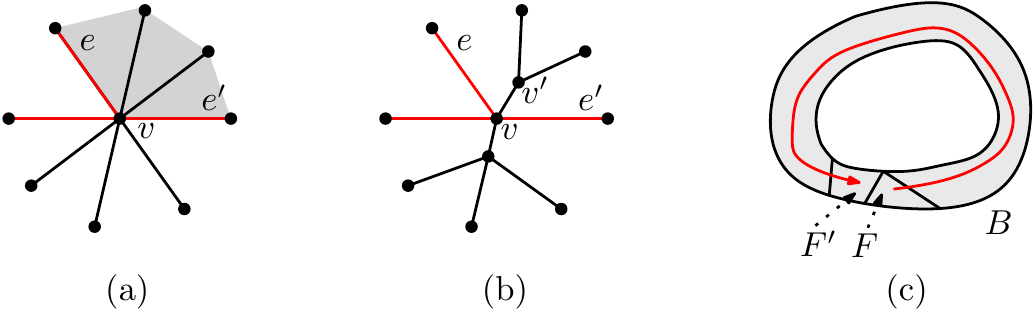}
     \caption{\small (a) The wedge at $v$ defined by $e$ and $e'$ is colored gray. The linkage-edges are colored red. 
     (b) We add two new vertices, one per wedge at $v$. (c) For a frame $B$, we construct a face-edge path (red curve) 
     consisting of the edges and faces
     incident to $B$ and lying in the interior of $B$. Here, the end faces $F$ and $F'$ are chosen arbitrary.}
     \label{fig:gmod}
 \end{figure}
 
 \paragraph{Modification of $\gparallel$ to $\gmod$.}
 We say an edge $e$ is \emph{linkage-edge} if a walk of $\mathcal W$ uses $e$.
 For a vertex $v$ of $\gparallel$, let $d_W(v)$ be the number of linkage-edges 
 incident to $v$. Also, let $d_G(v)$ be the degree of $v$ on $G$.  
 We say an edge $e$ lies \emph{between $e_1$ and $e_2$} if $e$, $e_1$, and $e_2$ are incident to a common vertex $v$, and $e$ lies from $e_1$ to $e_2$ in clockwise direction around $v$.
 Two linkage-edges $e$ and $e'$ incident to $v$ form a \emph{wedge at $v$} if there is no other linkage-edge between $e$ and $e'$. 
 Furthermore, the wedge contains all edges between $e$ and $e'$, and all wedges at $v$ are pairwise interior-disjoint. 
 See Figure~\ref{fig:gmod}.
 There are at most $d_W(v)$ wedges at a vertex $v$. We say a wedge at $v$ is \emph{empty} if
 no edge of $\gparallel$ incident to $v$ is contained in the wedge. 
 
 For each non-empty wedge at a vertex $v$ in the frames and skeleton forests, we insert a new vertex $v'$ to $\gparallel$,
 and add the edge between $v$ and $v'$. 
 Then we remove the edges in the wedge, and reconnect them to $v'$ instead of $v$. Note that, the edges incident to a new vertex $v'$ are not linkage-edges. 
 We do this for all non-empty wedges and all vertices in the frames and skeleton forests. 
 This takes $O(n)$ time in total since the total degree of all vertices in the frames and skeleton forests is $O(n)$. 
 Clearly, there is a one-to-one correspondence between two linkages (and weak linkage) on $\gparallel$ and on $\gmod$.
 Thus, there is a weak linkage $\mathcal W_\textsf{mod}$ in $\gmod$ corresponds the canonical weak linkage $\mathcal W$ in $\gparallel$ constructed in Section~\ref{sec:weak_linkage_construction}. We call the linkage a canonical weak linkage in $\gmod$. Furthermore, from now on, the flow function $\phi$ is corresponding the canonical weak linkage $\mathcal W_\textsf{mod}$ in $\gmod$.


\paragraph{Construction of $\Gamma$ and $\Pi$.}
Let $\Pi$ be the set of all face-edge paths such that 
all faces and edges in $\pi$ are incident to a common vertex and sorted in clockwise direction.
Let $\Pi_\textsf{frame}$ be the set of face-edge paths, one per frame, such that 
all incident faces and edges to a frame $B$ are in $\pi$,
and no edge of $\pi$ lies on $B$. 
Here, the number of paths of $\Pi$ can be $\Omega(n^2)$ in the worst case, and each path of $\pi$ has length $\Theta(n)$ 
in the worst case.
As we will see later, we do not need to construct $\Pi$ explicitly.

For a face-edge path $\pi$, we let $\Gamma(\pi)=\emptyset$ if $\pi$ has exactly one edge of $\gmod$, and it is is constructed due to the radial completion of $G$, 
and let $\Gamma(\pi)=\Sigma \cup \Sigma^{-1} \cup \{\epsilon\}$, otherwise. Recall that the edges of $\gmod$ constructed due to the radial
compleition of $G$ lie on the frames only, and thus the number of such edges is $2^{O(k)}$. 
This is the exactly same was as the definition of $\Gamma$ in~\cite{schrijver1994finding}.
Then Schrijver showed that a flow function $\psi$ with $\psi(\pi)\in\Gamma(\pi)$ corresponds to a linkage. 
Here, we additionally set $\Gamma(\pi)$ for a face-edge path $\pi$ in $\Pi_\textsf{frame}$.
Recall that the crossing pattern tells us the sequence of the crossings between each frame and the paths of our target linkage $\mathcal P$ 
sorted along the frame. In fact, we construct the weak linkage $\mathcal W$ in Section~\ref{sec:weak_linkage_construction}
so that the walks of $\mathcal W$ and $\mathcal W_\textsf{mod}$ cross each frame in the same order as the paths of $\mathcal P$.
See Figure~\ref{fig:gmod}(c). 
For each face-path $\pi$ in $\Pi_\textsf{frame}$, we let $\Gamma(\pi)=\{\phi(\pi)\}$. Recall that the flow function $\phi$ is corresponding the canonical weak linkage $\mathcal W_\textsf{mod}$ in $\gmod$.

\begin{lemma}
 The set $\mathcal X$ of all face-edge paths $\pi$ in $\Pi\cup \Pi_\textsf{frame}$ with $\phi(\pi)\notin \Gamma(\pi)$ has size $2^{O(k)}$.
\end{lemma}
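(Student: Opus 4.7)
The plan is to partition $\Pi\cup\Pi_{\textsf{frame}}$ into three groups and bound the contribution of each to $\mathcal X$ separately. First, every $\pi\in\Pi_{\textsf{frame}}$ satisfies $\phi(\pi)\in\Gamma(\pi)=\{\phi(\pi)\}$ by construction, so $\Pi_{\textsf{frame}}\cap\mathcal X=\emptyset$. Next, I would count the paths $\pi\in\Pi$ consisting of a single radial-completion edge, for which $\Gamma(\pi)=\emptyset$; since radial-completion edges lie only on the frames and the frames have total complexity $2^{O(k)}$, there are only $2^{O(k)}$ such paths, which contribute at most $2^{O(k)}$ to $\mathcal X$.

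For every other $\pi\in\Pi$ we have $\Gamma(\pi)=\Sigma\cup\Sigma^{-1}\cup\{\epsilon\}$, so $\pi\in\mathcal X$ forces $\phi(\pi)$ to have length at least two. This requires the arc of $\pi$ around its common vertex $v$ to contain at least two linkage-edges whose labels do not reduce away; in particular $d_W(v)\geq 2$, where $d_W(v)$ denotes the number of linkage-edges at $v$. By Lemma~\ref{lem:enumeration} the walks of $\mathcal W_{\textsf{mod}}$ use only edges on the frames, the skeleton forests, and the reference paths, so $d_W(v)=0$ whenever $v$ lies on none of these. If $v$ lies on a reference path but neither on a frame nor on a skeleton forest, the reference paths in the enclosing terminal-free ring are pairwise vertex-disjoint, so $d_W(v)\leq 2$ and the two linkage-edges at $v$ belong to the same walk with opposite orientations at $v$; their product therefore reduces to a string in $\Sigma\cup\Sigma^{-1}\cup\{\epsilon\}$, so $\pi\notin\mathcal X$ for every $\pi$ around such a $v$.

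It remains to handle vertices $v$ lying on some frame or skeleton forest. The modification producing $\gmod$ inserts exactly one auxiliary non-linkage edge per non-empty wedge at $v$, so $\deg_{\gmod}(v)\leq 2\,d_W(v)$ and hence the number of face-edge paths $\pi\in\Pi$ around $v$ is at most $O(d_W(v)^2)$. On the other hand, by Lemma~\ref{lem:enumeration} each edge of a frame or a skeleton forest is traversed at most $2^{O(k)}$ times by $\mathcal W_{\textsf{mod}}$, and the frames and skeleton forests together have total complexity $2^{O(k)}$. This yields both $d_W(v)\leq 2^{O(k)}$ for every such $v$ and $\sum_v d_W(v)=2^{O(k)}$ summed over all such $v$. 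Consequently,
\[
\sum_v d_W(v)^2 \;\leq\; \bigl(\max_v d_W(v)\bigr)\cdot\sum_v d_W(v) \;\leq\; 2^{O(k)}\cdot 2^{O(k)} \;=\; 2^{O(k)},
\]
and combining this with the previous two bounds gives $|\mathcal X|=2^{O(k)}$.

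The main obstacle I foresee is justifying rigorously that off-frame/off-skeleton-forest vertices contribute nothing: this relies on combining Lemma~\ref{lem:enumeration} (restricting which edges the walks use), the vertex-disjointness of reference paths inside each terminal-free ring, and the automatic cancellation arising from a single walk's two incidences at a pass-through vertex; overlooking any one of these ingredients would break the argument and inflate the bound past $2^{O(k)}$.
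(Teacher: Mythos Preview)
Your proposal is correct and follows essentially the same approach as the paper: both argue that $\Pi_{\textsf{frame}}$ contributes nothing, that face-edge paths around vertices traversed at most once by $\mathcal W_{\textsf{mod}}$ satisfy $|\phi(\pi)|\le 1$ and hence lie in $\Gamma(\pi)$, and that the remaining vertices lie on frames or skeleton forests where the degree bound $2^{O(k)}$ in $\gmod$ together with the $2^{O(k)}$ total complexity finishes the count. Your version is more explicit in separating out the single radial-completion-edge paths and in arguing directly that reference-path vertices off the frames and skeleton forests cannot contribute, whereas the paper compresses this into the single observation that any vertex traversed more than once must lie on a frame or skeleton forest; your sum-of-squares estimate $\sum_v d_W(v)^2\le(\max_v d_W(v))\sum_v d_W(v)$ is a mild refinement of the paper's cruder (number of vertices)$\times$(max degree)$^2$ bound, but both land at $2^{O(k)}$.
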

\begin{proof}
By construction, no face-edge path of $\Pi_\textsf{frame}$ is contained in $\mathcal X$ by definition. 
For a face-edge path $\pi$ of $\Pi$, 
all faces and edges in $\pi$ are incident to a common vertex $v$ and sorted in clockwise direction.
If $\wmod$ traverses $v$ exactly once, $\phi(\pi)$ has length at most one, and thus $\phi(\pi)\in\Gamma(\pi)$ by construction.
Thus $\pi$ is contained in $\mathcal X$ only when $v$ is traversed by $\wmod$ more than once.
Such a vertex lies on a frame or a skeleton forest. For each vertex in a frame or a skeleton forest, its degree in $\gmod$ is $2^{O(k)}$
by construction.
Therefore, for each vertex $v$, there are $2^{O(k)}$ face-edges paths whose faces and edges are incident to $v$. 
Therefore, in total, the number of face-edge paths in $\mathcal X$ is $2^{O(k)}$. 
\end{proof}

The argument of~\cite{schrijver1994finding} shows that the resulting flow function represents to a $T$-linkage $\mathcal P_\textsf{mod}$ of $\gmod$, and Lemma~\ref{lem:homology-homotopy} implies that the algorithm of~\cite{schrijver1994finding} always finds
a $T$-linkage if the given weak $T$-linkage is discretely homotopic to a $T$-linkage.  
Moreover, we can reconstruct $G$ and $\mathcal P$ from $\gmod$ and $\mathcal P_\textsf{mod}$, respectively, in the reversed way.
    The following lemma summarizes this subsection.
    \begin{lemma}\label{lem:without-subroutine}
     Given a weak $T$-linkage of $G$ homotopic to a $T$-linkage of $G$,
     we can find a $T$-linkage of $G$ in $2^{O(k)}\cdot T(n) +2^{O(k)}n$ time, where $T(n)$ denotes the running time of the subroutine
     described in Section~\ref{sec:subroutine}. 
    \end{lemma}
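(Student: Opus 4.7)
The plan is to assemble the algorithm of Schrijver as summarized in Section~\ref{sec:homology_algorithm}, plugging in the modification and representation of $\Pi,\Gamma$ constructed in Section~\ref{sec:reducing_violation}, and the subroutine of Section~\ref{sec:subroutine}. First, I would apply Section~\ref{sec:reducing_violation} to convert $(G,\mathcal W)$ into $(\gmod,\wmod)$ together with the flow function $\phi$ representing $\wmod$; this takes $O(n)$ time and yields the bound $|\mathcal X| = 2^{O(k)}$ on the set of violating face-edge paths. Because face-edge paths in $\Pi_\textsf{frame}$ are nonviolating by the definition of $\Gamma$, and every violating path in $\Pi$ sits around a vertex lying on a frame or a skeleton forest (whose degree in $\gmod$ is $2^{O(k)}$), the entire set $\mathcal X$ can also be enumerated in $2^{O(k)}n$ time by scanning the $O(n)$ edges of the frames and skeleton forests.

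Next, for each $\pi \in \mathcal X$ I would build the initial function $f_\pi$ (assigning $\phi(e_\pi)$ to the first face of $\pi$ and $\epsilon$ elsewhere) and invoke the subroutine of Section~\ref{sec:subroutine} to compute its smallest pre-feasible function $\bar f_\pi$ in $T(n)$ time. Since $|\mathcal X| = 2^{O(k)}$, the total cost of this phase is $2^{O(k)} \cdot T(n)$, and the total representation size of all $\bar f_\pi$ is also bounded by this quantity. Then I would decide co-viability for each of the $2^{O(k)}$ pairs by checking whether $\bar f_\pi \vee \bar f_{\pi'}$ is finite face by face, which is dominated by the same $2^{O(k)}\cdot T(n)$ bound. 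The task of selecting, for each complementary pair $\{\pi,\pi^{-1}\}$, at least one representative so that all selected elements are pairwise co-viable is a variant of 2-SAT of size polynomial in $|\mathcal X|$, so it can be solved in $2^{O(k)}$ time.

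Finally, I would compute $\psi := \bigvee_{\pi \in X} \bar f_\pi$ in $2^{O(k)} n$ time by joining $2^{O(k)}$ functions at each of the $O(n)$ faces. By the argument of~\cite{schrijver1994finding} combined with the choice of $\Gamma$ made in Section~\ref{sec:reducing_violation}, $\psi$ is a feasible homology function, and therefore represents a $T$-linkage $\mathcal P_\textsf{mod}$ of $\gmod$ that is homologous to $\phi$. Lemma~\ref{lem:homology-homotopy} guarantees that such a $\mathcal P_\textsf{mod}$ exists whenever $\wmod$ is discretely homotopic to a $T$-linkage, which by the one-to-one correspondence between linkages on $\gparallel$ and $\gmod$ holds by hypothesis. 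Reversing the modification of Section~\ref{sec:reducing_violation} then produces a $T$-linkage $\mathcal P$ of $G$ in $O(n)$ additional time, and summing the phases yields the claimed bound $2^{O(k)}\cdot T(n) + 2^{O(k)}n$.

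The main obstacle is the bookkeeping that keeps all auxiliary costs inside $2^{O(k)} n$: concretely, one must avoid ever materializing $\Pi$ (whose size can be $\Omega(n^2)$) and must argue that joining, storing and comparing the functions $\bar f_\pi$ can be done face-locally, so that the total work is proportional to the total representation length rather than to $|\mathcal X| \cdot n$ times the maximum string length. This is why the construction of $\Pi_\textsf{frame}$ and the degree-reduction to $\gmod$ are needed at all; without them the $2^{O(k)}$ bound on $|\mathcal X|$ would fail and the above accounting would collapse.
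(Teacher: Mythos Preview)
Your proposal is correct and follows essentially the same approach as the paper: assemble Schrijver's algorithm from Section~\ref{sec:homology_algorithm}, plug in the degree-reduction $\gmod$ and the choice of $\Pi,\Gamma$ from Section~\ref{sec:reducing_violation} to get $|\mathcal X|=2^{O(k)}$, call the subroutine once per $\pi\in\mathcal X$, solve the 2-SAT instance, and join. If anything, your accounting for the co-viability checks and the final join (carrying the $n$ factor explicitly and bounding everything by $2^{O(k)}\cdot T(n)+2^{O(k)}n$) is more careful than the paper's own time-complexity paragraph; the only slip is notational---what you call $\psi$ is the feasible homology function, from which the flow function (and hence the $T$-linkage) is then read off edge by edge.
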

    
 For convenient to describe, $\mathcal W$ refers $\mathcal W_\textsf{mod}$ if there is no explanation.
 Here, we need one more observation. 
 For any weak $T$-linkage $\mathcal W$ represented by $f$, 
 there is a weak $T$-linkage represented by $f'$ discretely homotopic to $\mathcal W$, 
 where $f'$ is a function obtained from $f$ during the process of computing $\bar{f}$. 
 This is simply because each iteration of the process described in Section~\ref{sec:subroutine} is indeed a face operation.  
 Due to $\Pi_\textsf{frame}$, the following observation holds. This will be used in Lemma~\ref{lem:length}. 
 
\begin{observation}\label{obs:winding} 
 Let $\psi$ be a flow function homologous to $\phi$, and $\xring$ be a terminal-free framed ring of $\gmod$.
 The absolute value of the winding number between a maximal walk of the weak linkage of $\psi$ contained in $\gmod$
 and a maximal walk of $\mathcal W$ contained in $\gmod$ is $2^{O(k)}$.
 Moreover, the number of maximal walks of the weak linkage of $\psi$ contained in $\gmod$ is $2^{O(k)}$.
\end{observation}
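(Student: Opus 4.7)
The plan is to exploit the $\Pi_\textsf{frame}$ constraint, which fixes $\Gamma(\pi_B)=\{\phi(\pi_B)\}$ for each frame $B$. Because this target set is a singleton, any $\psi$ relevant to the observation, either a feasible output of the homology feasibility algorithm, or an intermediate $\psi$ obtained from an intermediate $f$ during the computation of $\bar f$ in Section~\ref{sec:subroutine}, satisfies $\psi(\pi_B)=\phi(\pi_B)$; the latter because each iteration of the subroutine is a face operation, and a face operation cannot change a value already forced to a singleton. Geometrically, this means the weak linkage $\mathcal W'$ represented by $\psi$ crosses each frame $B$ in the same ordered sequence of labels, and at the same edges, as $\mathcal W$ does.

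From this identity I would bound the number of maximal walks of $\mathcal W'$ contained in $\xring$. Since $\xring$ is terminal-free, every such maximal walk has both endpoints on $V(B_i)\cup V(B_i')$, and each endpoint is a crossing with one of the two bounding frames. The number of non-$\epsilon$ symbols in $\psi(\pi_{B_i})=\phi(\pi_{B_i})$ equals the number of linkage-edge incidences along $B_i$; since $|E(B_i)|=2^{O(k)}$ and every edge of $B_i$ is traversed by $\mathcal W$ at most $2^{O(k)}$ times by the canonical weak linkage construction in Section~\ref{sec:week_linkages}, this count is $2^{O(k)}$, and analogously for $B_i'$. Dividing the total number of endpoints by two gives the claimed $2^{O(k)}$ bound on the number of maximal walks of $\mathcal W'$ in $\xring$.

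For the winding-number bound, the identity $\psi(\pi_B)=\phi(\pi_B)$ on both bounding frames yields a canonical bijection between maximal walks of $\mathcal W'$ in $\xring$ and maximal walks of $\mathcal W$ in $\xring$, obtained by pairing crossings at matched positions and labels. For a matched pair $(W',W)$, both walks enter and leave $\xring$ at the same edges of $B_i$ and $B_i'$, carry the same terminal label, and are non-crossing within their respective weak linkage; consequently $W'$ and $W$ lie in the same homotopy class of arcs across the annulus $\xring$ relative to their endpoint incidences. Since winding number across an annulus is a homotopy invariant under these boundary conditions, $|\textsf{WindNum}(W',Q_\textsf{mid})|\le |\textsf{WindNum}(W,Q_\textsf{mid})|+O(1)$, where $Q_\textsf{mid}$ is the middle subsegment of the initial reference path; by the construction of the base $T$-linkage (Lemma~\ref{lem:winding}) this latter quantity is $2^{O(k)}$. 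Applying Proposition~7.1 of~\cite{lokshtanov2020exponential} to $W'$, $W$, and $Q_\textsf{mid}$ then gives $|\textsf{WindNum}(W',W)|\le |\textsf{WindNum}(W',Q_\textsf{mid})|+|\textsf{WindNum}(W,Q_\textsf{mid})|+O(1)=2^{O(k)}$.

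The main obstacle I anticipate is the homotopy step: rigorously showing that the matched walks $W'$ and $W$ genuinely lie in the same homotopy class of arcs across $\xring$. This relies on combining (i) the frame-crossing identity, which fixes boundary behaviour edge-by-edge; (ii) the non-crossing property of each of the two weak linkages, which forces paired walks to interleave the annulus in the same cyclic order; and (iii) the absence of terminals in the interior of $\xring$, which removes obstructions to deforming one walk into the other. Residual ambiguities, such as endpoints lying on matched frame-edges rather than identical vertices, contribute only an $O(1)$ additive correction to the winding numbers and do not affect the $2^{O(k)}$ bound.
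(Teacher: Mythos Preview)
The paper itself does not give a proof of this observation; its entire justification is the two sentences preceding it (each iteration of the subroutine is a face operation, hence intermediate weak linkages are discretely homotopic to $\mathcal W$; then ``Due to $\Pi_\textsf{frame}$, the following observation holds''). Your write-up follows exactly this line and supplies the details the paper omits, so in spirit your approach coincides with the paper's. Your argument for the bound on the number of maximal walks of $\mathcal W'$ in $\xring$ is correct and complete.

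The gap you yourself flag at the homotopy step is, however, real, and the three ingredients you list---(i) identical frame-crossing data from $\psi(\pi_B)=\phi(\pi_B)$, (ii) each weak linkage being internally non-crossing, (iii) $\xring$ being terminal-free---do \emph{not} suffice to conclude that a matched pair $(W',W)$ lies in the same homotopy class of arcs in the annulus. A concrete obstruction: take all maximal arcs of $\mathcal W$ in $\xring$ and replace them by arcs with the same endpoints that each spiral $N$ additional times around the annulus, for arbitrary $N$. The resulting collection is still pairwise non-crossing, has identical crossing sequence along both frames (so $\psi(\pi_{B_i})=\phi(\pi_{B_i})$ and $\psi(\pi_{B_i'})=\phi(\pi_{B_i'})$), and sees no terminals; yet $|\textsf{WindNum}(W',W)|=N$. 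So your deduction ``consequently $W'$ and $W$ lie in the same homotopy class'' is false as stated.

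What rules out this global twist is the one ingredient you mention but then do not actually use in the winding argument: the fact that $\mathcal W'$ is obtained from $\mathcal W$ by a sequence of face operations in $\gmod$ with $f(F^*)=\epsilon$ throughout. No face operation is ever applied at $F^*$, so each walk $W_j'$ of $\mathcal W'$ is homotopic, rel its terminal endpoints, to the corresponding $W_j$ of $\mathcal W$ inside $\gmod\setminus F^*$. Since $F^*$ lies in the interior of every frame, the annulus $\xring$ is a deformation retract neighbourhood of a circle linking $F^*$; hence a uniform twist of all arcs in $\xring$ would change the winding of each full walk $W_j$ around $F^*$, contradicting the homotopy in $\gmod\setminus F^*$. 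Combining this with the frame-crossing identity (which pins down how each $W_j$ is cut into arcs by $B_i,B_i'$) then bounds $|\textsf{WindNum}(W',W)|$ for each matched pair of arcs by $2^{O(k)}$, after which your final triangle-inequality step via Proposition~7.1 of~\cite{lokshtanov2020exponential} goes through.
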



\subsection{Analysis of the Subroutine}\label{sec:pre_feasible}
In this section, we analyze the running time of the algorithm for finding a smallest pre-feasible function $\bar f$. There are two tasks: First, we show that the string size $|\bar f(F)|$ is $2^{O(k)}$ for every face $F$ of $\gmod$. 
Second, we show how to compute a violating face-edge path in constant time in each iteration.  

\begin{lemma}\label{lem:length}
The length of $\bar f_\pi (F)$ is $2^{O(k)}$ for every face $F$ of $\gmod$. 
\end{lemma}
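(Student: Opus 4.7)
The approach is to interpret $|\bar f_\pi(F)|$ via the homology relation between $\phi$ and the flow function $\psi$ defined by $\psi(e) = \bar f_\pi(L_e)^{-1} \cdot \phi(e) \cdot \bar f_\pi(R_e)$, and then bound it using the structural properties of the framed rings. The plan has four steps.

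First, as noted right before the statement of the lemma, each iteration of the subroutine described in Section~\ref{sec:subroutine} corresponds to a face operation on a weak linkage. Starting from $f_\pi$, the flow function $\psi$ at every intermediate step therefore represents a weak $T$-linkage $\mathcal W'$ that is discretely homotopic to $\mathcal W$. Crucially, the constraints imposed via $\Pi_\textsf{frame}$ force $\psi(\pi) = \phi(\pi)$ for every frame path, so $\mathcal W'$ crosses each frame $B_i$ (resp.\ $B_i'$) in the same number and pattern as $\mathcal W$ does; this is what makes Observation~\ref{obs:winding} applicable to $\mathcal W'$.

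Second, for any face-edge path $\sigma = F_0\, e_1\, F_1\, e_2 \cdots e_m\, F_m$ in $\gmod$ from $F_0 = F^*$ to $F_m = F$, oriented so that $L_{e_i} = F_{i-1}$ for every $i$, telescoping the identity $\phi(e) = \bar f_\pi(L_e) \cdot \psi(e) \cdot \bar f_\pi(R_e)^{-1}$ yields $\bar f_\pi(F) = \psi(\sigma)^{-1} \cdot \phi(\sigma)$. Hence $|\bar f_\pi(F)| \le |\phi(\sigma)| + |\psi(\sigma)|$, and the right-hand side is bounded by the number of edges of $\sigma$ used by $\mathcal W$ plus the number used by $\mathcal W'$ (recall that in $\gmod$ every edge is used by at most one walk of a weak linkage, since $\gparallel$ duplicates edges to make the walks edge-disjoint).

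Third, I would choose $\sigma$ strategically so as to bound these ``crossings'' in every framed ring that $\sigma$ traverses. In each terminal-containing ring, the radial distance between the two bounding frames is $2^{O(k)}$ by construction, so $\sigma$ can cross the ring along a radial face-edge path of length $2^{O(k)}$, contributing at most $2^{O(k)}$ edges to each of $\phi(\sigma)$ and $\psi(\sigma)$. In each terminal-free ring $\xring$, I would route $\sigma$ along a reference path of $\mathcal Q_i$ together with short radial portions near the frames. By Observation~\ref{obs:winding}, both $\mathcal W$ and $\mathcal W'$ have at most $2^{O(k)}$ maximal subwalks in $\xring$, and the winding number of each such subwalk relative to the reference path is $2^{O(k)}$; together these bound the number of crossings of $\sigma$ with $\mathcal W \cup \mathcal W'$ in $\xring$ by $2^{O(k)}$. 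Summing the contributions from the $O(k)$ framed rings traversed by $\sigma$ then gives $|\bar f_\pi(F)| = 2^{O(k)}$.

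The most delicate step will be the terminal-free ring analysis, as translating the winding-number bound of Observation~\ref{obs:winding} into a combinatorial bound on the number of intersections of $\sigma$ with the walks of $\mathcal W$ and $\mathcal W'$ requires care: one must argue that the total crossing count is controlled by ``number of maximal walks times winding number plus a $2^{O(k)}$ boundary term'' and does not grow with the (possibly huge) radial diameter of $\xring$. The remaining steps are essentially bookkeeping once the right choice of $\sigma$ is fixed.
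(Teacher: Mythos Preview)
Your proposal is correct and follows essentially the same approach as the paper: express $\bar f_\pi(F)$ via the telescoping identity along a face-edge path to $F$, bound the $\phi$- and $\psi$-contributions ring by ring using short radial paths in terminal-containing rings and winding-number bounds (Observation~\ref{obs:winding}) in terminal-free rings, and sum over the $O(k)$ rings. The only cosmetic differences are that the paper phrases the argument as an induction from the inner boundary of each ring outward rather than via a single global path $\sigma$, and in terminal-free rings it chooses $\gamma_i$ to avoid the edges of $\mathcal W$ entirely (so that $\phi(\gamma_i)=\epsilon$) rather than routing along a reference path; both variants lead to the same $2^{O(k)}$ bound by the same winding-number reasoning.
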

\begin{proof}
Recall that $\phi$ is the flow function of the input weak $T$-linkage $\mathcal W$, and $\bar{f}_\pi(F^*)=\epsilon$ by the definition of the pre-feasible function. Let $\psi$ be the flow function $E\to \Sigma^*$ such that $\bar{f}_\pi(L_e)^{-1}\cdot \phi(e) \cdot \bar{f}_\pi(R_e)=\psi(e)$. 
By definition, a violating face-edge path with respect to $\bar{f}_\pi$ is also a violating face-edge path with respect to $f_\pi$. 
Also, let $\mathcal W'$ be the weak linkage of $\psi$.  

First, we show that $\bar{f}_\pi (F)$ has length $2^{O(k)}$ for all faces $F$ in a terminal-containing ring $\oring$
assuming that $\bar{f}_\pi (F')$ has length $2^{O(k)}$ for all faces incident to its inner boundary.
The radial distance from $F^*$ to the boundary of the ring is $2^{O(k)}$ by construction.
We construct an arbitrary face-edge path $\gamma_i$ from a face, say $F'$, incident to its inner boundary to $F$. 
The number of edges of $\gamma_i$ used by $\mathcal W'$ or $\mathcal W$ is $2^{O(k)}$. This is because 
the degree in $\gmod$ of a vertex traversed by $\mathcal W'$ or $\mathcal W$ more than once is $2^{O(k)}$.  
Therefore, $|\psi(\gamma_i)|=2^{O(k)}$ $|\phi(\gamma_i)|\in 2^{O(k)}$. 
Since $|\bar{f}_\pi(F')^{-1}\phi(\gamma_i) \bar{f}_\pi(F)|=\psi(\gamma_i)$, we have $|\bar{f}_\pi(F)|\in 2^{O(k)}$.

Then we show that $\bar{f}_\pi (F)$ has length $2^{O(k)}$ for all faces $F$ in a terminal-free ring $\xring$ 
assuming that $\bar{f}_\pi (F')$ has length $2^{O(k)}$ for all faces incident to its inner boundary.
By the construction of $\mathcal W$,
there is a face-edge path $\gamma_i$ in $\xring$ from a face $F'$ incident to the inner boundary of $\xring$
to $F$ such that no edge of $\gamma_i$ is traversed by $\mathcal W$. 
However, it is possible that lots of edges of $\gamma_i$ are traversed by $\mathcal W'$.
But even in this case, $\psi(\gamma_i)$ has length $2^{O(k)}$ since $\phi$ and $\psi$ are homologous. 
More specifically, for a maximal subwalk $\omega'$ of $\mathcal W'$ contained in $\xring$ 
and one endpoint on the outer boundary of $\xring$, the number of symbols in $\psi(\gamma_i)$ induced by $\omega'$ 
is at most $|\windnum(\omega',\gamma_i)|+1$. 
For a maximal subwalk $\omega$ of $\mathcal W$, we have $|\windnum(\omega, \gamma_i)|=2^{O(k)}$ by construction.
Moreover, $|\windnum(\omega,\omega')|= 2^{O(k)}$, and the number of such maximal walks $\omega'$ is $2^{O(k)}$ 
by Observation~\ref{obs:winding}.
Therefore, the length of $\psi(\gamma_i)$ is $2^{O(k)}$, and thus the length of $\bar{f}_\pi(F)$ is $2^{O(k)}$.

Since $\bar{f}_\pi(F^*)=\epsilon$, and the number of framed rings is at most $k$,
this implies that $\bar{f}_\pi(F)$ has length $2^{O(k)}$ for all faces $F$ of $\gmod$. 
\end{proof}

\begin{corollary}
    The number of iterations of the subroutine described in~\ref{sec:subroutine} is $2^{O(k)}n$. 
\end{corollary}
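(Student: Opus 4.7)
The plan is to show that each iteration of the subroutine produces a strict monotone increase in the length of $f(F)$ for some face $F$, and then sum the final lengths over all faces using Lemma~\ref{lem:length}.

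First I would recall the structure of a single iteration as described in Section~\ref{sec:subroutine}: given the current $f$, the subroutine locates a violating face-edge path $\pi \in \Pi \cup \Pi_\textsf{frame}$ with end faces $F, F'$, and by Lemma~\ref{lem:violating_face_pair} exactly one of $f(F)$ or $f(F')$ must be strictly smaller (in the join order) than the respective product involving $\phi(\pi)$ and the other value. The subroutine then replaces the smaller of the two by the shortest string realizing the feasibility constraint on $\pi$. I would argue that this replacement strictly increases $|f(F)|$ (or $|f(F')|$): the new value is a proper extension of the old one, since the old value was a proper prefix of the new value by Lemma~\ref{lem:violating_face_pair} and the definition of ``smaller''. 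In particular, $|f(F)|$ increases by at least one in each iteration.

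Next, I would bound the total increase summed over all faces. The initial function is $f_\pi$, which is zero everywhere except on a single face (where it equals $\phi(e)\in\Sigma$, contributing length $1$). The terminal function is $\bar f_\pi$, and by Lemma~\ref{lem:length} we have $|\bar f_\pi(F)|=2^{O(k)}$ for every face $F$ of $\gmod$. Since $\gmod$ has $O(n)$ faces (the modification in Section~\ref{sec:reducing_violation} adds only $O(n)$ new vertices, hence $O(n)$ new faces), the total length summed over all faces is
\[
\sum_{F\in\mathcal F}|\bar f_\pi(F)|\;\leq\;2^{O(k)}\cdot n.
\]
Because each iteration increases this sum by at least one, the number of iterations is at most $2^{O(k)} n$, which is exactly the claim.

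The main subtlety I would verify carefully is the monotonicity claim: namely that during the execution, the values $f(F)$ only grow (with respect to the prefix/join order), so that the final value $\bar f_\pi(F)$ is an upper bound on every intermediate $|f(F)|$ in a way that lets us bound the total number of length-increments by $\sum_F |\bar f_\pi(F)|$. This is essentially Proposition~2 of~\cite{schrijver1994finding} combined with the definition of ``smallest pre-feasible function'', so once that monotonicity is invoked, the conclusion follows immediately from Lemma~\ref{lem:length} and the fact that $|\mathcal F|=O(n)$.
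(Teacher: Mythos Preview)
Your argument is correct and matches the paper's own reasoning: the paper does not give a separate proof of the corollary, but a few paragraphs later states explicitly that ``the value of $f(F)$ gets longer during the execution of the subroutine, and the length of $\bar{f}(F)$ is $2^{O(k)}$,'' which together with the $O(n)$ face count is exactly your counting argument. The monotonicity you flag as the main subtlety is precisely the content of Proposition~2 of~\cite{schrijver1994finding} that the paper invokes, so nothing is missing.
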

At each iteration of the algorithm, we find a violating face-edge path $\pi\in \Pi$, and
update the string of $f$ for exactly one face of $\gmod$. 
Once we have $\pi$, then the update takes time linear in the complexity of $\bar{f}_\pi(F)$, which is $2^{O(k)}$ by Lemma~\ref{lem:length}. 
Thus it is sufficient to show that we can find $\pi$ in amortized constant time. 
To do this, we use a data structures for computing $O(1)$ candidates for a violating face-edge path maintained dynamically in the course of updates of $f$. 


\paragraph*{Dynamic Data Structure for Finding a Violating Face-Edge Pair.} 
We let $\psi$ be the flow function with $\psi(e)$ as $f(L_e)^{-1}\cdot \phi(e)\cdot f(R_e)$. 
By definition, $\pi$ is a violating face-edge path if and only if $\psi(\pi)\notin \Gamma (\pi)$ and $f(F)\neq\epsilon$,
where $F$ is a first or last face of $\pi$. 
Also, the length of $f(F)$ is at most $2^{O(k)}$ by Lemma~\ref{lem:length}. 

First, the number of face-edge paths in $\Pi_\textsf{frame}$ is $O(k)$, and 
each face-edge path $\pi$ of $\Pi_\textsf{frame}$ has length $2^{O(k)}$, we maintain
the value $\psi(\pi)$ explicitly in the course of updates of $f$.
Similarly, the number of face-edge paths in $\Pi$ each consisting of faces and edges incident to a vertex in the frames
is $2^{O(k)}$ due to the modification of $\gparallel$ into $\gmod$. 
Also, each such path has length $2^{O(k)}$, we maintain
the value $\psi(\pi)$ explicitly in the course of updates of $f$. 

Therefore, in the following, it suffices to design a data structure for finding a violating face-edge pair
consisting of edges and faces incident to a vertex not contained in any frame. 
Moreover, in this case,
if $\psi(\pi)\notin \Gamma (\pi)$, then $f(F)\neq \epsilon$ for a first or last face $F$ of $\pi$. 
Therefore, it is sufficient to check if  $\psi(\pi)\notin \Gamma (\pi)$ only to determine if $\pi$ is a violating face-edge path or not. 

The data structure marks the vertices as \emph{``un-feasible''} to remember that 
a violating face-edge path lies \emph{near} these vertices. 
Also, it maintains the list of un-feasible vertices. 
For every vertex $v$ of $\gparallel$ not lying on any frame, 
the data structure stores $k$ linked lists. 
The $\ell$th list stores the edges $e$ of $\gmod$ incident to $v$ with $\ell\in \psi(e)$. Here, we maintain the edges in the list
in an arbitrary order. 
If $v$ has at least two non-empty linked lists, there is a violating face-edge path incident to $v$. 
Thus, we mark $v$ as \emph{``un-feasible''}. 

\paragraph{Query and Update Algorithms.}
To answer a query, we first check if there is a vertices marked as ``un-feasible''. 
If a vertex $v$ is marked as ``un-feasible'', there exist two non-empty linked lists associated with $v$. 
We select two edges $e$ and $e'$ stored in two different linked list associated with $v$. 
There are four faces incident to $e$ or $e'$. 
At least one face-edge path $\pi$ between them is a violating path by the definition. 
That is, we have $O(1)$ candidates for a violating path. 
For each candidate $\pi$, we check if it is a violating pair in $2^{O(k)}$ time as follows.
First, we compute $\psi(\pi)=f(F)^{-1}\phi(\pi)f(F')$,
where $F$ and $F'$ are the first and last faces of $\pi$. Recall that $\phi$ do not change during the subroutine, and we have computed $\phi(\pi)$. Since the lengths of $f(F)^{-1}$, $\phi(\pi)$, and $f(F')$ are all $2^{O(k)}$ by Lemma~\ref{lem:length}, we can compute
$f(F)^{-1}\phi(\pi)f(F')$ in $2^{O(k)}$ time. 
Therefore, we can check if $\pi$ is a violating path or not in the same time bound. 
In this way, we can answer the query in $2^{O(k)}$ time in total. 

During the algorithm, we make updating $f(\cdot)$. We are required to update the data structure accordingly.
If we update $f(F)$, then we update the linked lists for all vertices lying on the face of $F$, and we check if
each such vertex is required to be marked as  \emph{``un-feasible''}.
This process takes $O(k)$ time for each vertex. Thus, the total update time is $2^{O(k)}n$ time
The dynamic structure supports the candidate face-edge paths in $O(k)$ time. 
By Lemma~\ref{lem:length} and the argument of~\cite{schrijver1994finding},
$f$ is updated for each face $F$ in $2^{O(k)}$ times. This is simply because the value of $f(F)$ gets longer during the execution
of the subroutine, and the length of $\bar{f}(F)$ is $2^{O(k)}$.
Therefore, the total update time for a single execution of the subroutine is $2^{O(k)}n$.

Therefore, we have the following lemma.
\begin{lemma}\label{lem:time-subroutine}
 The subroutine described in~\ref{sec:subroutine} can be implemented to run in $2^{O(k)}n$ time. 
\end{lemma}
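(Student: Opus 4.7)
The plan is to multiply together a bound on the number of iterations with an amortized cost per iteration, then argue that the data structure maintenance amortizes correctly. By the corollary following Lemma~\ref{lem:length}, the subroutine performs $2^{O(k)}n$ iterations in total, so it suffices to show that the per-iteration work (finding a violating face-edge path, then updating $f$ and the data structure) amortizes to $2^{O(k)}$.

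First I would split $\Pi \cup \Pi_\textsf{frame}$ into two groups. The paths in $\Pi_\textsf{frame}$ and the paths of $\Pi$ whose common incidence vertex lies on a frame are only $2^{O(k)}$ in number, because there are $O(k)$ frames, each vertex on a frame has degree $2^{O(k)}$ in $\gmod$, and each face-edge path on a frame has length $2^{O(k)}$. For this group I would maintain $\psi(\pi)$ explicitly and scan these $2^{O(k)}$ paths to check for a violation in $2^{O(k)}$ time per query. For the remaining paths, which are incident to non-frame vertices, I would use the ``un-feasible'' marking data structure already outlined: for each non-frame vertex $v$ store $k$ linked lists, one per symbol, so that the $\ell$th list contains the edges $e$ incident to $v$ with $\ell \in \psi(e)$; flag $v$ un-feasible when at least two lists are non-empty. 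A query scans the list of un-feasible vertices, picks any marked $v$, pulls two edges from different lists, and in $O(1)$ candidates finds a violating face-edge path. Each candidate is tested in $2^{O(k)}$ time by forming $f(F)^{-1}\phi(\pi)f(F')$, whose factors all have length $2^{O(k)}$ by Lemma~\ref{lem:length}.

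The main obstacle, and the part that actually requires amortization, is bounding the total maintenance cost of the data structure over all iterations. An update modifies $f(F)$ for a single face $F$; one must recompute $\psi(e)$ for every boundary edge $e$ of $F$, refresh the two linked lists at the endpoints of each such $e$, and re-evaluate those vertices' un-feasibility flags, costing $O(k \cdot \deg(F))$ per update. The key observation is that $f(F)$ can be updated at most $2^{O(k)}$ times in total across the whole execution of the subroutine, because $f(F)$ grows monotonically under joins and its final length $|\bar{f}(F)|$ is $2^{O(k)}$ by Lemma~\ref{lem:length}. Summing over all faces and all updates therefore gives
\[
\sum_{F \in \mathcal F} 2^{O(k)} \cdot O(k \cdot \deg(F)) \;=\; 2^{O(k)} \sum_{F \in \mathcal F} \deg(F) \;=\; 2^{O(k)} n,
\]
since the sum of face degrees is $O(n)$ in a planar graph. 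Combining this with the $2^{O(k)}n$ total query cost yields the desired $2^{O(k)}n$ bound on the subroutine.

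A subtle point I would address carefully is that the updates performed by the subroutine (concatenating the correcting prefix onto $f(F)$ as described after Lemma~\ref{lem:violating_face_pair}) are indeed monotone increases in $f$, so the ``at most $2^{O(k)}$ updates per face'' claim follows cleanly from the length bound. I would also note that maintaining $\psi(\pi)$ explicitly for the $2^{O(k)}$ special paths (those in $\Pi_\textsf{frame}$ and those at frame vertices) fits into the same amortized budget, because each such path has length $2^{O(k)}$ and is touched only when one of its $2^{O(k)}$ incident faces is updated.
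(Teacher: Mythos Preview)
Your proposal is correct and follows essentially the same approach as the paper: the same split of $\Pi\cup\Pi_\textsf{frame}$ into the $2^{O(k)}$ frame-incident paths handled explicitly and the remaining paths handled by the per-vertex linked-list structure with the ``un-feasible'' marking, the same $O(1)$-candidate query, and the same amortization via the bound $|\bar f(F)|\in 2^{O(k)}$ from Lemma~\ref{lem:length} summed against $\sum_F \deg(F)=O(n)$. Your explicit note on the monotonicity of $f(F)$ under the updates is exactly the justification the paper invokes implicitly.
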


By combining Lemmas~\ref{lem:num-crossing},~\ref{lem:enumeration-time},~\ref{lem:without-subroutine} 
 and Lemma~\ref{lem:time-subroutine}, we can obtain the following theorem.
 
\begin{theorem}
    The \textsf{Planar Disjoint Paths} problem can be solved in $2^{O(k^2)}n$ time.
\end{theorem}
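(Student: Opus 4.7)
The plan is to assemble the theorem directly from the building blocks established in Sections~\ref{sec:irrelevant}--\ref{sec:reconstructing}. First, given an instance $(G,T,k)$, I would run the preprocessing of Section~\ref{sec:irrelevant}: by Lemma~\ref{lem:cut-reduction} together with the argument of Section~\ref{sec:disk_embedding}, all sufficiently isolated vertices can be deleted in $2^{O(k)}n$ time so that the reduced graph (which I still call $G$) has treewidth $2^{O(k)}$ and is equivalent to the original instance. This is the only place where a bound depending on treewidth is later invoked — in particular, it makes Lemma~\ref{lem:min-cut} applicable, and hence unlocks the whole frame/subframe construction.

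Next, I would invoke Section~\ref{sec:cutting} to build the $O(k)$ concentric frames, subframes, and skeleton forests in $2^{O(k)}n$ time, and then the reference paths of Section~\ref{sec:week_linkages} in the same time bound using Ford--Fulkerson with value $2^{O(k)}$. At this point Lemma~\ref{lem:num-crossing} gives me an enumeration of $2^{O(k^2)}$ candidate crossing patterns in $2^{O(k^2)}n$ time, one of which is the crossing pattern of the canonical weak $T$-linkage of some base $T$-linkage (which exists whenever $(G,T,k)$ is a \textsf{YES}-instance). For each enumerated pattern $\sigma$, Lemma~\ref{lem:enumeration-time} reconstructs in $2^{O(k)}n$ time a weak $T$-linkage $\mathcal W$ of $\radgraph$ that realises $\sigma$ and, in the lucky iteration, is discretely homotopic to a genuine $T$-linkage, while also satisfying the structural properties listed in Lemma~\ref{lem:enumeration}.

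Given each such $\mathcal W$, I would feed it into the procedure of Section~\ref{sec:reconstructing}: modify to $\gmod$ as in Section~\ref{sec:reducing_violation}, set up $\Pi \cup \Pi_\textsf{frame}$ and $\Gamma$, and run the homology algorithm of Schrijver~\cite{schrijver1994finding}. By Lemma~\ref{lem:without-subroutine} the whole computation costs $2^{O(k)}\cdot T(n) + 2^{O(k)}n$, where $T(n)$ is the running time of the pre-feasible-function subroutine; Lemma~\ref{lem:time-subroutine} bounds $T(n)$ by $2^{O(k)}n$. If the input weak linkage is homotopic to a $T$-linkage, Lemma~\ref{lem:homology-homotopy} ensures that the resulting flow function represents one, and I lift it back from $\gmod$ to $G$. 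I output the first $T$-linkage produced across all candidates, and declare \textsf{NO} if none of them succeeds.

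Multiplying through, the cost is $2^{O(k)}n$ for preprocessing and backbone construction, plus $2^{O(k^2)} \cdot 2^{O(k)}n = 2^{O(k^2)}n$ for the enumeration-plus-homology loop, giving the claimed $2^{O(k^2)}n$ bound. The step I expect to be delicate is the bookkeeping that justifies correctness: specifically, verifying that the canonical weak linkage built from a base $T$-linkage really does appear among the $2^{O(k^2)}$ enumerated crossing patterns, and that the constraints added via $\Pi_\textsf{frame}$ are consistent enough to let Observation~\ref{obs:winding} bound $|\bar f(\cdot)|$ in Lemma~\ref{lem:length}. Once those invariants are threaded correctly through the three stages, however, the theorem follows by direct combination of the four cited lemmas.
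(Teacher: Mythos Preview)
Your proposal is correct and follows exactly the paper's own argument: the paper's proof of this theorem is a one-line combination of Lemmas~\ref{lem:num-crossing}, \ref{lem:enumeration-time}, \ref{lem:without-subroutine}, and~\ref{lem:time-subroutine}, and you have unpacked precisely that combination (with the Section~\ref{sec:irrelevant} preprocessing made explicit and the time bounds multiplied out). The caveats you flag about crossing-pattern coverage and the $\Pi_\textsf{frame}$ constraints are already handled inside the cited lemmas, so nothing further is needed.
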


\bibliographystyle{plain}
\bibliography{paper}

\begin{thebibliography}{10}

\bibitem{adler2017irrelevant}
Isolde Adler, Stavros~G Kolliopoulos, Philipp~Klaus Krause, Daniel Lokshtanov,
  Saket Saurabh, and Dimitrios~M Thilikos.
\newblock Irrelevant vertices for the planar disjoint paths problem.
\newblock {\em Journal of Combinatorial Theory, Series B}, 122:815--843, 2017.

\bibitem{adler2019lower}
Isolde Adler and Philipp~Klaus Krause.
\newblock A lower bound on the tree-width of graphs with irrelevant vertices.
\newblock {\em Journal of Combinatorial Theory, Series B}, 137:126--136, 2019.

\bibitem{bodlaender1996linear}
Hans~L Bodlaender.
\newblock A linear-time algorithm for finding tree-decompositions of small
  treewidth.
\newblock {\em SIAM Journal on computing}, 25(6):1305--1317, 1996.

\bibitem{bondy1976graph}
John~Adrian Bondy, Uppaluri Siva~Ramachandra Murty, et~al.
\newblock {\em Graph theory with applications}, volume 290.
\newblock Macmillan London, 1976.

\bibitem{chuzhoy2016improved}
Julia Chuzhoy, David~HK Kim, and Shi Li.
\newblock Improved approximation for node-disjoint paths in planar graphs.
\newblock In {\em Proceedings of the forty-eighth annual ACM symposium on
  Theory of Computing (STOC 2016)}, pages 556--569, 2016.

\bibitem{chuzhoy2018almost}
Julia Chuzhoy, David~HK Kim, and Rachit Nimavat.
\newblock Almost polynomial hardness of node-disjoint paths in grids.
\newblock In {\em Proceedings of the 50th Annual ACM SIGACT Symposium on Theory
  of Computing (STOC 2018)}, pages 1220--1233, 2018.

\bibitem{chuzhoy2020new}
Julia Chuzhoy, David~HK Kim, and Rachit Nimavat.
\newblock New hardness results for routing on disjoint paths.
\newblock {\em SIAM Journal on Computing}, 51(2):STOC17--189, 2020.

\bibitem{PBook}
Marek Cygan, Fedor~V Fomin, {\L}ukasz Kowalik, Daniel Lokshtanov, D{\'a}niel
  Marx, Marcin Pilipczuk, Micha{\l} Pilipczuk, and Saket Saurabh.
\newblock {\em Parameterized algorithms}, volume~5.
\newblock Springer, 2015.

\bibitem{cygan2013planar}
Marek Cygan, D{\'a}niel Marx, Marcin Pilipczuk, and Michal Pilipczuk.
\newblock The planar directed k-vertex-disjoint paths problem is
  fixed-parameter tractable.
\newblock In {\em 2013 IEEE 54th Annual Symposium on Foundations of Computer
  Science (FOCS 2013)}, pages 197--206. IEEE, 2013.

\bibitem{CGbook}
Mark de~Berg, Otfried Cheong, Marc van Kreveld, and Mark Overmars.
\newblock {\em Computational geometry algorithms and applications}.
\newblock Spinger, 2008.

\bibitem{dorn2010planar}
Fr{\'e}d{\'e}ric Dorn.
\newblock Planar subgraph isomorphism revisited.
\newblock In {\em 27th International Symposium on Theoretical Aspects of
  Computer Science-STACS 2010}, pages 263--274, 2010.

\bibitem{erickson2011shortest}
Jeff Erickson and Amir Nayyeri.
\newblock Shortest non-crossing walks in the plane.
\newblock In {\em Proceedings of the twenty-second annual ACM-SIAM symposium on
  Discrete Algorithms}, pages 297--308. SIAM, 2011.

\bibitem{frank1990packing}
Andr{\'a}s Frank.
\newblock Packing paths, circuits, and cuts-a survey.
\newblock {\em Paths, flows, and VLSI-layout}, pages 47--100, 1990.

\bibitem{iwata2014linear}
Yoichi Iwata, Keigo Oka, and Yuichi Yoshida.
\newblock Linear-time {FPT} algorithms via network flow.
\newblock In {\em Proceedings of the twenty-fifth annual ACM-SIAM symposium on
  Discrete algorithms (SODA 2014)}, pages 1749--1761. SIAM, 2014.

\bibitem{jansen2014near}
Bart~MP Jansen, Daniel Lokshtanov, and Saket Saurabh.
\newblock A near-optimal planarization algorithm.
\newblock In {\em Proceedings of the twenty-fifth annual ACM-SIAM symposium on
  Discrete algorithms (SODA 2014)}, pages 1802--1811. SIAM, 2014.

\bibitem{kawarabayashi2012disjoint}
Ken-ichi Kawarabayashi, Yusuke Kobayashi, and Bruce Reed.
\newblock The disjoint paths problem in quadratic time.
\newblock {\em Journal of Combinatorial Theory, Series B}, 102(2):424--435,
  2012.

\bibitem{kawarabayashi2010almost}
Ken-ichi Kawarabayashi and Bruce Reed.
\newblock An (almost) linear time algorithm for odd cycles transversal.
\newblock In {\em Proceedings of the twenty-first annual ACM-SIAM symposium on
  Discrete Algorithms (SODA 2010)}, pages 365--378. SIAM, 2010.

\bibitem{kawarabayashi2010shorter}
Ken-ichi Kawarabayashi and Paul Wollan.
\newblock A shorter proof of the graph minor algorithm: the unique linkage
  theorem.
\newblock In {\em Proceedings of the Forty-second ACM Symposium on Theory of
  Computing}, pages 687--694, 2010.

\bibitem{kolliopoulos2004approximating}
Stavros~G Kolliopoulos and Clifford Stein.
\newblock Approximating disjoint-path problems using packing integer programs.
\newblock {\em Mathematical Programming}, 99(1):63--87, 2004.

\bibitem{korhonen2022single}
Tuukka Korhonen.
\newblock A single-exponential time 2-approximation algorithm for treewidth.
\newblock In {\em 2021 IEEE 62nd Annual Symposium on Foundations of Computer
  Science (FOCS 2022)}, pages 184--192. IEEE, 2022.

\bibitem{lokshtanov2020exponential}
Daniel Lokshtanov, Pranabendu Misra, Micha{\l} Pilipczuk, Saket Saurabh, and
  Meirav Zehavi.
\newblock An exponential time parameterized algorithm for planar disjoint
  paths.
\newblock In {\em Proceedings of the 52nd Annual ACM SIGACT Symposium on Theory
  of Computing}, pages 1307--1316, 2020.

\bibitem{lokshtanov2018linear}
Daniel Lokshtanov, MS~Ramanujan, and Saket Saurabh.
\newblock Linear time parameterized algorithms for subset feedback vertex set.
\newblock {\em ACM Transactions on Algorithms (TALG)}, 14(1):1--37, 2018.

\bibitem{reed1995rooted}
Bruce Reed.
\newblock Rooted routing in the plane.
\newblock {\em Discrete Applied Mathematics}, 57(2-3):213--227, 1995.

\bibitem{robertson1995graph}
Neil Robertson and Paul~D Seymour.
\newblock Graph minors. {XIII}. the disjoint paths problem.
\newblock {\em Journal of combinatorial theory, Series B}, 63(1):65--110, 1995.

\bibitem{schrijver1994finding}
Alexander Schrijver.
\newblock Finding $k$ disjoint paths in a directed planar graph.
\newblock {\em SIAM Journal on Computing}, 23(4):780--788, 1994.

\end{thebibliography}

\end{document}